\def\cqedsymbol{\ifmmode$\lrcorner$\else{\unskip\nobreak\hfil
\penalty50\hskip1em\null\nobreak\hfil$\lrcorner$
\parfillskip=0pt\finalhyphendemerits=0\endgraf}\fi} 
\newcommand{\cqed}{\renewcommand{\qed}{\cqedsymbol}}
\newcommand{\pname}[1]{\textsc{#1}}
\newcommand{\cclass}[1]{\textsf{#1}}
\newcommand{\NP}{\cclass{NP}}
\newcommand{\TPE}{\pname{Trivially Perfect Editing}}
\newcommand{\TPD}{\pname{Trivially Perfect Deletion}}
\newcommand{\TPC}{\pname{Trivially Perfect Completion}}
\newcommand{\tpe}{\TPE}
\newcommand{\tpd}{\TPD}
\newcommand{\tpc}{\TPC}
\newcommand{\modulator}{TP-modulator}
\newcommand{\noinstance}{no-instance}
\renewcommand{\phi}{\varphi}
\newcommand{\third}{\gemini}  
\newcommand{\paw}{\mathsf{P}} 
\newcommand{\tpss}{TP-set system}
\DeclareMathOperator{\uni}{uni}   
\DeclareMathOperator{\poly}{poly} 
\newtheorem{theorem}{Theorem}
\newtheorem{lemma}{Lemma}[section]
\newtheorem{claim}[lemma]{Claim}
\newtheorem{corollary}[lemma]{Corollary}
\newtheorem{proposition}[lemma]{Proposition}
\theoremstyle{definition}
\newtheorem{definition}[lemma]{Definition}
\newtheorem{redrule}{Rule}
\newenvironment{customrdrl}[1]
{\innercustomrdrl}
{\endinnercustomrdrl}
\theoremstyle{remark}
\newtheorem{observation}[lemma]{Observation}
\newtheorem*{hypothesis}{\textnormal{\textbf{Exponential Time
      Hypothesis}}}
\newcommand{\defparproblem}[4]{
  \hfill\\\smallskip\noindent%
  \begin{tabularx}{\textwidth}{|l X|}%
    \hline%
    \multicolumn{2}{|l|}{\pname{#1}}\\%
    \textbf{Input:}&#2\\%
    \textbf{Parameter:}&#3\\%
    \textbf{Question:}&#4\\\hline%
  \end{tabularx}%
  \smallskip%
}%
\title{A Polynomial Kernel for Trivially Perfect Editing\thanks{The research leading to these results has received funding from the European Research Council under the European Union's Seventh Framework Programme (FP/2007-2013) / ERC Grant Agreement n.~267959. M.~Pilipczuk is currently holding a post-doc position at Warsaw Center of Mathematics and Computer Science, and his research is supported by Polish National Science Centre grant DEC-2013/11/D/ST6/03073; However, large part of this work was done when M.~Pilipczuk was affiliated with the University of Bergen and was supported by the aforementioned ERC grant n.~267959.}}
\author{Pål Grønås Drange\thanks{
    Department~of Informatics, University of Bergen, Norway, \texttt{pal.drange@ii.uib.no}.
  } \and Michał Pilipczuk\thanks{
    Institute of Informatics, University of Warsaw, Poland, \texttt{michal.pilipczuk@mimuw.edu.pl}.
  }
}
\date{\today{}}
\begin{document}
\maketitle

\begin{abstract}
  We give a kernel with $O(k^7)$ vertices for \tpe, the problem of
  adding or removing at most $k$ edges in order to make a given graph
  trivially perfect.  This answers in affirmative an open question
  posed by Nastos and Gao~\cite{nastos2013familial}, and by Liu et
  al.~\cite{liu2014overview}.  Our general technique implies also the
  existence of kernels of the same size for related \tpc{} and \tpd{}
  problems.  Whereas for the former an $O(k^3)$ kernel was given by
  Guo~\cite{guo2007problem}, for the latter no polynomial kernel was
  known.

  We complement our study of \tpe{} by proving that, contrary to \tpc,
  it cannot be solved in time $2^{o(k)}\cdot n^{O(1)}$ unless the
  Exponential Time Hypothesis fails.  In this manner we complete the
  picture of the parameterized and kernelization complexity of the
  classic edge modification problems for the class of trivially
  perfect graphs.
\end{abstract}

\section{Introduction}
Graph modification problems form an important subclass of discrete
computational problems, where the task is to modify a given graph
using a constrained number of modifications in order to make it
satisfy some property~$\Pi$, or equivalently belong to the
class~$\mathcal{G}$ of graphs satisfying~$\Pi$.  Well-known examples
of graph modification problems include \pname{Vertex Cover},
\pname{Cluster Editing}, \pname{Feedback Vertex Set}, \pname{Odd Cycle
  Transversal}, and \pname{Minimum Fill-In}.  The systematic study of
graph modification problems dates back to early 80s and the work of
Yannakakis~\cite{Yannakakis81Edge}, who showed that there is a
dichotomy for the vertex deletion problems: unless a graph
class~$\mathcal{G}$ is trivial (finite or co-finite), the problem of
deleting the least number of vertices to obtain a graph
from~$\mathcal{G}$ is \NP-hard.  However, when, in order to obtain a
graph from~$\mathcal{G}$, we are to modify the edge set of the graph
instead of the vertex set, there are three natural classes of
problems: deletion problems (deleting the least number of edges),
completion problems (adding the least number of edges) and editing
problems (performing the least number of edge additions or deletions).
For neither of these, any complexity dichotomy in the spirit of
Yannakakis' result is known.  Indeed, in~\cite{Yannakakis81Edge}
Yannakakis states
\begin{quote}
  \textit{It [\dots] would be nice if the same kind of techniques could be
    applied to the edge-deletion problems.  Unfortunately we suspect
    that this is not the case --- the reductions we found for the
    properties considered [\dots] do not seem to fall into a pattern.}
  \\--- Mihalis Yannakakis
\end{quote}

Even though for edge modification problems there is no general
\cclass{P}~vs.~\NP{} classification known, much can be said about
their parameterized complexity.  Recall that a parameterized problem
is called \emph{fixed-parameter tractable} if it can be solved in time
$f(k)\cdot n^{O(1)}$ for some computable function~$f$, where~$n$ is
the size of the input and~$k$ is its parameter.  In our case, the
natural parameter~$k$ is the allowed number of modifications.
Cai~\cite{cai1996fixed} made a simple observation that for all the
aforementioned graph modification problems there is a simple branching
algorithm running in time $c^k n^{O(1)}$ for some constant~$c$, as
long as $\mathcal{G}$ is \emph{characterized by a finite set of
  forbidden induced subgraphs}: there is a finite list of graphs
$H_1,H_2,\ldots,H_p$ such that any graph~$G$ belongs to~$\mathcal{G}$
if and only if~$G$ does not contain any~$H_i$ as an induced subgraph.
Although many studied graph classes satisfy this property, there are
important examples, like chordal or interval graphs, that are outside
this regime.

For this reason, the parameterized analysis of modification problems
for graph classes characterized by a finite set of forbidden induced
subgraphs focused on studying the design of \emph{polynomial
  kernelization algorithms} (\emph{polynomial kernels}); Recall that
such an algorithm is required, given an input instance~$(G,k)$ of the
problem, to preprocess it in polynomial time and obtain an equivalent
output instance~$(G',k')$, where $|G'|,k' \leq p(k)$ for some
polynomial~$p$.  That is, the question is the following: can you,
using polynomial-time preprocessing only, bound the size of the
tackled instance by a polynomial function depending only on~$k$?

For vertex deletion problems the answer is again quite simple: As long
as~$\mathcal{G}$ is characterized by a finite set of forbidden induced
subgraphs, the task is to hit all the copies of these subgraphs
(so-called \emph{obstacles}) that are originally contained in the
graph.  Hence, one can construct a simple reduction to the
\pname{$d$-Hitting Set} problem for a constant~$d$ depending
on~$\mathcal{G}$, and use the classic~$O(k^d)$ kernel for the latter
that is based on the sunflower lemma (see
e.g.~\cite{FlumGrohebook,FominSV13}).  For edge modifications
problems, however, this approach fails utterly: every edge
addition/deletion can create new obstacles, and thus it is not
sufficient to hit only the original ones.  For this reason, edge
modification problems behave counterintuitively w.r.t.\ polynomial
kernelization, and up to recently very little was known about their
complexity.

On the positive side, kernelization of edge modification problems for
well-studied graph classes was explored by Guo~\cite{guo2007problem},
who showed that four problems: \pname{Threshold Completion},
\pname{Split Completion}, \pname{Chain Completion}, and
\pname{Trivially Perfect Completion}, all admit polynomial kernels.
However, the study took a turn for the interesting when Kratch and
Wahlström~\cite{kratsch2009two} showed that there is a graph~$H$ on
$7$ vertices, such that the deletion problem to $H$-free graphs (the
class of graphs not admitting~$H$ as an induced subgraph) does not
admit a polynomial kernel, unless the polynomial hierarchy collapses.
This shows that the subtle differences between edge modification and
vertex deletion problems have tremendous impact on the kernelization
complexity.

Kratch and Wahlström conclude by asking whether there is a ``simple''
graph, like a path or a cycle, for which an edge modification problem
does not admit a polynomial kernel under similar assumptions.  The
question was answered by Guillemot et al.~\cite{guillemot2013onthenon}
who showed that both for the class of $P_\ell$-free graphs (for~$\ell
\geq 7$) and for the class of~$C_\ell$-free graphs (for~$\ell\geq 4$),
the edge deletion problems probably do not have polynomial
kernelization algorithms.  They simultaneously gave a cubic kernel for
the \pname{Cograph Editing} problem, the problem of editing to a graph
without induced paths on four vertices.

These results were later improved by Cai and
Cai~\cite{cai2013incompressibility}, who tried to obtain a complete
dichotomy of the kernelization complexity of edge modification
problems for classes of~$H$-free graphs, for every graph~$H$.  The
project has been almost fully successful --- the question remains
unresolved only for a finite number of graphs~$H$.  In particular, it
turns out that the existence of a polynomial kernel for any of
\pname{$H$-Free Editing}, \pname{$H$-Free Edge Deletion}, or
\pname{$H$-Free Completion} problems is in fact a very rare
phenomenon, and basically happens only for specific, constant-size
graphs~$H$.  In particular, for~$H$ being a path or a cycle, the
aforementioned three problems admit polynomial kernels if and only
if~$H$ has at most three edges.

\bigskip

At the same time, there is a growing interest in identifying
parameterized problems that are solvable in \emph{subexponential
  parameterized time}, i.e., in time $2^{o(k)}n^{O(1)}$.  Although for
many classic parameterized problems already known \NP-hardness
reductions show that the existence of such an algorithm would
contradict the \emph{Exponential Time Hypothesis} of Impagliazzo et
al.~\cite{impagliazzo2001which}, subexponential parameterized
algorithms were known to exist for problems in restricted settings,
like planar, or more generally $H$-minor free
graphs~\cite{demaine2005subexponential}, or
tournaments~\cite{alon2009fast}.  See the book of Flum and
Grohe~\cite{FlumGrohebook} for a wider discussion.

Therefore, it was an immense surprise when Fomin and
Villanger~\cite{fomin2012subexponential} showed that \pname{Chordal
  Completion} (also called \pname{Minimum Fill-In}) can be solved in
time $2^{O(\sqrt{k}\log k)}n^{O(1)}$.  Following this discovery, a new
line of research was initiated.  Ghosh et al.~\cite{ghosh2013faster}
showed that \pname{Split Completion} is solvable in the same running
time.  Although Komusiewicz and
Uhlmann~\cite{komusiewicz2012clusterediting} showed that we cannot
expect \pname{Cluster Editing} to be solvable in subexponential
parameterized time, as shown by Fomin et
al.~\cite{fomin2011subexponential}, when the number of clusters in the
target graph is sublinear in the number of allowed edits, this is
possible nonetheless.

Following these three positive examples, Drange et
al.~\cite{drange2014exploring} showed that completion problems for
trivially perfect graphs, threshold graphs and pseudosplit graphs all
admit subexponential parameterized algorithms.  Later, Bliznets et al.
showed that both \pname{Proper Interval Completion} and
\pname{Interval Completion} also admit subexponential parameterized
algorithms~\cite{bliznets2014proper,bliznets2014interval}.

Let us remark that in almost all these results, the known existence of
a polynomial kernelization procedure for the problem played a vital
role in designing the subexponential parameterized algorithm.
Kernelization is namely used as an opening step that enables us to
assume that the size of the considered graph is polynomial in the
parameter $k$, something that turns out to be extremely useful in
further reasonings.  The only exception is the algorithm for the
\pname{Interval Completion} problem~\cite{bliznets2014interval}, for
which the existence of a polynomial kernel remains a notorious open
problem.  The need of circumventing this issue created severe
difficulties in~\cite{bliznets2014interval}.

\bigskip

In this paper we study the \pname{Trivially Perfect Editing} problem.
Recall that trivially perfect graphs are exactly graphs that do not
contain a~$P_4$ or a~$C_4$ as an induced subgraph; see
Section~\ref{ssec:prelim-tp} for a structural characterization of this
graph class.  Interest in trivially perfect graphs started with the
attempts to prove the strong perfect graph theorem.  In recent times,
new source of motivation has grown, with the realization that
trivially perfect graphs are related to the width parameter
\emph{treedepth} (called also vertex ranking number, ordered chromatic
number, and minimum elimination tree height).  Although it had been
known that both the completion and the deletion problem for trivially
perfect graphs are \NP-hard, it was open for a long time whether the
editing version is \NP-hard as
well~\cite{burzyn06NPcompleteness,mancini2008graph}.

This question was answered very recently by Nastos and
Gao~\cite{nastos2013familial}, who showed that the problem is indeed
\NP-hard.  Actually, the work of Nastos and Gao focuses on exhibiting
applications of trivially perfect graphs in social network theory,
since this graph class may serve as a model for \emph{familial
  groups}, communities in social networks showing a hierarchical
nature.  Specifically, the \emph{editing number} to a trivially
perfect graph\footnote{Nastos and Gao use the terminology
  \emph{quasi-threshold graphs} instead of trivially perfect graphs.}
can be used as a measure of how much a social network resembles a
collection of hierarchies.  Nastos and Gao also ask whether it is
possibly to obtain a polynomial kernelization algorithm for this
problem.  The question about the existence of a polynomial kernel for
\TPE{} was then restated in a recent survey by Liu, Wang, and
Guo~\cite{liu2014overview}, which \textit{nota bene} contains a
comprehensive overview of the current status of the research on the
kernelization complexity of graph modification problems.

%
%
\paragraph{Our contribution.}
We answer the question of Nastos and Gao~\cite{nastos2013familial} and
of Liu, Wang, and Guo~\cite{liu2014overview} in affirmative by proving
the following theorem.

\begin{theorem}
  \label{thm:tpe-polykernel-intro}
  The problem \TPE{} admits a proper kernel with~$O(k^7)$ vertices.
\end{theorem}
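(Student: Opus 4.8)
The plan is to follow the now-standard "modulator plus modulated structure" paradigm for kernelization of editing problems, adapted to the structure of trivially perfect (TP) graphs. Recall that TP graphs are precisely the $\{P_4, C_4\}$-free graphs; equivalently they are exactly the graphs built from single vertices by disjoint union and addition of a universal vertex, so each connected TP graph has a universal vertex and a canonical cotree/forest representation. I would first run the simple branching intuition in reverse: exhaustively find a maximal packing of vertex-disjoint induced $P_4$'s and $C_4$'s. If this packing has more than $k$ members we reject; otherwise its vertex set $X$ has size $O(k)$, it is a \emph{\modulator} (its removal yields a TP graph), and every obstruction in $G$ intersects $X$.

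The heart of the argument is then to bound the part of $G$ outside the modulator. After removing $X$, the graph $G-X$ is trivially perfect, so it decomposes into a forest of "cotree" structure; the vertices of $G-X$ group into a bounded-depth laminar family of modules with respect to how they attach to $X$. The first reduction step is to bound the number of distinct \emph{neighborhood types} that vertices of $G-X$ can have towards $X$: vertices with the same neighborhood in $X$ that also sit in "the same place" of the TP-decomposition of $G-X$ behave interchangeably, and if too many of them pile up, a cheap exchange/irrelevant-vertex argument lets us delete one. Here one must be careful: because editing can both add and delete edges, "behaving interchangeably" has to be formalized via a gadget/marking argument (a "\tpss" or TP-set-system notion, as the macro definitions in the preamble hint) showing that a solution touching one representative can be rerouted to touch another. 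Bounding the number of relevant types by $\poly(k)$ and the multiplicity within each type by $\poly(k)$ yields the $O(k^7)$ bound, with the exponent coming from multiplying the $O(k)$ modulator size, the number of subsets of $X$ that can occur as neighborhoods ($O(k^2)$ or so after pruning via sunflower-type arguments on the modulator), and the per-type multiplicity bound.

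The main obstacle — and where I expect the real work to be — is the irrelevant-vertex / reduction-rule safety proof across \emph{editing} (not just completion or deletion). For completion, as in Guo's $O(k^3)$ kernel, one only ever adds edges, so a "large twin class" argument is comparatively clean; for editing, a solution may delete edges inside a large homogeneous block, and one must argue that no optimal solution needs to do so in more than a bounded number of places, or that such deletions can be "simulated" after shrinking. Concretely, I would (i) formalize when a set of vertices in $G-X$ forms a module that is untouched by some optimal solution, (ii) prove a "critical clique"-style lemma that in an optimal editing set the modifications incident to a large module are confined to its boundary with $X$ plus $O(1)$ internal edges, and (iii) turn this into a reduction rule that collapses the module to size $\poly(k)$, proving equivalence in both directions by explicitly transforming solutions. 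A secondary technical point is handling the interaction of the bounded-depth TP-forest structure with the modulator: one needs that the attachment of $G-X$ to $X$ is "consistent along root-to-leaf paths" in the cotree, which follows from $P_4$-freeness of $G-X$ but requires a careful case analysis to exclude $P_4$'s and $C_4$'s that use two vertices of $X$. Once these structural lemmas are in place, counting the surviving vertices and assembling the polynomial bound is routine, and the "proper" part (i.e.\ $k' \le k$, no increase of the parameter) is immediate since every rule only removes vertices/edges and never increases $k$.
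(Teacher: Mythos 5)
Your overall architecture (modulator, analysis of attachments to the modulator, twin/module shrinking, irrelevant-vertex rules, routine counting) matches the paper's, but there are two concrete gaps that would sink the plan as written. First, the modulator you propose --- the vertex set of a maximal packing of \emph{vertex-disjoint} obstructions --- only guarantees that every obstruction meets $X$ in at least one vertex, and that is too weak for everything that follows. The paper's central new idea is a strengthened modulator (Definition~\ref{def:modulator}): greedily pack obstructions that share at most one vertex with the current set (and, in a restricted sense, even some sharing two vertices), so that in the end \emph{every} obstruction meets $X$ in at least two vertices, and obstructions meeting it in exactly two are forbidden from certain configurations. This is what makes Lemmas~\ref{lem:X-nei-inclusion-diff} and~\ref{lem:X-nei-inclusion-same} work: e.g., to show that $u\prec v$ in the universal clique decomposition of $G-X$ forces $N^X(u)\supseteq N^X(v)$, one exhibits an obstruction with only \emph{one} vertex in $X$, which contradicts the strong modulator property but is perfectly consistent with yours. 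Without this nesting of $X$-neighborhoods along the decomposition, the bound on the number of neighborhood types (Lemma~\ref{lem:bounded-x-neighborhoods}, which is $O(k^4)$, obtained via the matching-based Rules~\ref{rule:c4}--\ref{rule:p4} and the \tpss{} lemma rather than a sunflower argument) and the subsequent classification of neighborhoods of modulator vertices (Lemma~\ref{lem:nx-two-types}) both collapse.

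Second, your claim that the vertices of $G-X$ ``group into a bounded-depth laminar family of modules'' is false, and it hides the hardest part of the kernel. The universal clique decomposition of $G-X$ can contain arbitrarily long root-to-leaf paths whose bags have pairwise distinct neighborhoods below them; these are not modules and cannot be shrunk by any twin or module rule. The paper isolates $O(k)$ \emph{important bags} via an LCA-closure and is then left with \emph{combs}: long stretches of the decomposition between two important bags. Reducing a long comb requires a dedicated irrelevant-vertex rule (Rule~\ref{rule:comb}) whose safety proof carefully selects a tooth to delete and rebuilds a hypothetical obstruction of the larger graph inside two clean, untouched teeth --- an exchange argument quite different from the module-collapsing lemma you sketch in steps (i)--(iii). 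Your plan covers the tassels and the genuinely modular parts (where the paper indeed uses exactly the twin/module rules you describe) but has no mechanism for bounding the length of these combs, so the final count would not go through.
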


Here, we say that a kernel (kernelization algorithm) is {\em{proper}}
if it can only decrease the parameter, i.e., the output parameter~$k'$
satisfies~$k'\leq k$.

To prove Theorem~\ref{thm:tpe-polykernel-intro}, we employ an
extensive analysis of the tackled instance, based on the equivalent
structural definition of trivially perfect graphs.  The main approach
is to construct a small \emph{vertex modulator}, a set of vertices
whose removal results in obtaining a trivially perfect graph.
However, since we are allowed only edge deletions and additions, this
modulator just serves as a tool for exposing the structure of the
instance.  More specifically, we greedily pack disjoint obstructions
into a set~$X$, whose size can be guaranteed to be at most~$4k$, with
the condition that to get rid of each of these obstructions, at least
one edge must be edited inside the modulator per obstruction.  Having
obtained such a modulator, the rest of the graph,~$G-X$, is trivially
perfect, and we may apply the structural view on trivially perfect
graphs to find irrelevant parts that can be reduced.

While the modulator technique is commonly used in kernelization, the
new insight in this work is as follows.  Since we work with an edge
modification problem, we can be less restrictive about when an
obstacle can be greedily packed into the modulator.  For example, the
obstacle does not need to be completely vertex-disjoint with the so
far constructed~$X$; sharing just one vertex is still allowed.  This
observation allows us to reason about the adjacency structure
between~$X$ and~$V(G)\setminus X$, which is of great help when
identifying irrelevant parts.  We hope that this generic technique
finds applications in other edge modification problems as well.

By slight modifications of our kernelization algorithm, we also obtain
polynomial kernels for \TPD{} and \TPC{}.

\begin{theorem}
  \label{thm:tpd-polykernel-intro}
  The problem \TPD{} admits a proper kernel with~$O(k^7)$ vertices.
\end{theorem}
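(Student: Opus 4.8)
The plan is to reuse, essentially verbatim, the kernelization pipeline developed for \TPE{} in the previous sections, adjusting only the handful of places where the possibility of \emph{adding} an edge was used. First recall that trivially perfect graphs are exactly the $\{P_4,C_4\}$-free graphs, so the family of obstacles to be destroyed is the same for deletion as for editing; the only change is that an obstacle now has to be broken by \emph{deleting} one of the at most four edges it contains, rather than by editing one of its at most six vertex pairs. Since destroying an induced subgraph forces an edit among the pairs it spans, and any two packed obstacles share at most one vertex and hence no edge, the deletions charged to distinct obstacles are automatically distinct; thus the greedy construction of the \modulator{} $X$ carries over word for word: a packing of more than $k$ obstacles certifies a \noinstance, and otherwise $|X|\le 4k$ and $G-X$ is trivially perfect. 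From here the instance has exactly the shape exploited in the editing analysis --- a \modulator{} of size $O(k)$ together with a trivially perfect remainder, to which the rooted-forest / universal-vertex characterization of Section~\ref{ssec:prelim-tp} applies.

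Next I would revisit the reduction rules of the \TPE{} kernel one at a time and re-establish their safety for \TPD. Most of them merely delete vertices or recognize already-present structure, and their correctness is argued locally --- from an optimal solution of the reduced instance one reconstructs an optimal solution of the original one by reinserting the removed part together with the edits incident to it --- so they transfer immediately once each ``edit'' is read as a ``deletion'', possibly routed through a weighted deletion variant just as the editing proof passes through \WTPE. The rules that genuinely need care are those whose backward direction, in the editing setting, implicitly promotes a nonedge of $G$ to an edge of the target graph (for instance when arguing that a large near-homogeneous set may be assumed to sit inside a single clique of the produced trivially perfect graph). For these I would substitute an argument based on the monotonicity peculiar to deletion --- a deletion solution never creates an edge, so every nonadjacency of $G$ persists and the target graph is always a spanning subgraph of $G$ --- to show that the cheapest treatment of such a set is still to leave it essentially untouched, which is what the rule asserts; the same observation handles any normalization of the $X$-to-$(G-X)$ adjacency that the editing kernel carries out by completing some pairs. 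Summing the bounds produced by the surviving rules exactly as before (the \modulator{} contributes $O(k)$ vertices, there are $O(k^{c})$ relevant types of vertices relative to $X$, and each type is pruned to $O(k^{7-c})$ vertices) yields the stated $O(k^7)$ bound; since the packing step only lowers $k$ and no rule ever raises it, the kernel is proper.

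The main obstacle I anticipate is precisely this audit of the reduction rules: pinpointing every place where the editing kernel exploits an edge addition --- whether inside a rule's safety proof or in a preprocessing normalization step --- and replacing it with a deletion-only justification, for which the fact that a deletion solution is monotone on nonedges and keeps the target graph below $G$ should consistently suffice. Everything else, including the $O(k^7)$ counting and the bound $|X|\le 4k$ on the modulator, transfers from the \TPE{} argument with no essential change.
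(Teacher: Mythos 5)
Your proposal follows essentially the same route as the paper: the kernelization pipeline for \TPE{} is reused wholesale, the modulator construction and all counting go through unchanged because completion/deletion sets are editing sets in particular, and the safety proofs of the vertex-removal rules (the twin, module, and comb rules) transfer verbatim since they only ever restrict a solution $F$ of the reduced instance to a subset $F'\subseteq F$, never exchanging deletions for additions.

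One concrete correction to your audit, though. The only rule of the \TPE{} kernel that genuinely exploits an edge addition is Rule~\ref{rule:c4}, which performs a \emph{forced completion} of the non-edge $uv$ when $k+1$ induced $C_4$'s share only that non-edge. Your proposed remedy for addition-flavoured steps --- argue via monotonicity that ``the cheapest treatment is to leave the set essentially untouched'' --- is not the right fix here: if the precondition of Rule~\ref{rule:c4} holds, a deletion-only solution of size at most $k$ simply cannot exist (each of the $k+1$ obstacles must lose one of its own four edges, and these edge sets are pairwise disjoint), so the correct replacement is to \emph{output a trivial \noinstance}, which is exactly what the paper does. You cannot instead skip the rule and leave the instance alone, because the later structural analysis (in particular the bound of Lemma~\ref{lem:bounded-x-neighborhoods} on the number of $X$-neighborhoods) relies on the instance being reduced with respect to Rules~\ref{rule:c4} and~\ref{rule:p4}; conveniently, the replacement rule fires on exactly the same instances, so the notion of a reduced instance is unaffected. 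With that single adjustment your plan is the paper's proof.
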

\begin{theorem}
  \label{thm:tpc-polykernel-intro}
  The problem \TPC{} admits a proper kernel with~$O(k^7)$ vertices.
\end{theorem}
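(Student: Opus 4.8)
The plan is to run the very kernelization pipeline that yields Theorem~\ref{thm:tpe-polykernel-intro} and to verify that every step survives restricting the allowed operations to edge additions only. Recall the three phases of that pipeline: (i)~greedily pack obstructions --- induced $P_4$'s and $C_4$'s --- into a set $X$, the \modulator{}, so that $|X| = O(k)$ while $G - X$ is trivially perfect; (ii)~exploit the recursive structural description of trivially perfect graphs --- every connected trivially perfect graph has a universal vertex, hence such a graph is encoded by a forest of ``universal-vertex'' cotrees --- to describe how the components and modules of $G - X$ attach to $X$; and (iii)~apply reduction rules that, via exchange arguments, discard irrelevant vertices and shrink large homogeneous parts, bringing the total number of vertices down to $O(k^7)$.

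First I would redo phase~(i) in the completion setting. The underlying combinatorial fact is unchanged: every induced $P_4$ or $C_4$ must receive at least one added edge in any trivially perfect completion. Hence the same greedy construction --- keep adding to $X$ the vertex set of an obstruction that overlaps the current $X$ in at most one vertex --- terminates with $|X| = O(k)$ by charging the packed obstructions to a hypothetical solution, and with $G - X$ being $\{P_4,C_4\}$-free, i.e.\ trivially perfect, by maximality. The only bookkeeping change is that ``repairing'' an obstruction now means adding one of a bounded list of edges rather than performing an arbitrary edit, and it is these additions that the charging argument tracks. Phase~(ii) is purely structural and transfers verbatim.

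The substance is phase~(iii). I would revisit each reduction rule of the \TPE{} proof and check that its correctness proof --- invariably an exchange argument converting an arbitrary solution into a canonical one that leaves the reduced vertex or module untouched --- reroutes additions to additions (and deletions to deletions), so that restricting to completions preserves equivalence and does not increase the number of added edges. A rule that removes an irrelevant vertex $v$ is justified by exhibiting an optimal solution performing no modification incident to $v$; since the rerouting never turns an added edge into a deleted one, a completion stays a completion. A rule that shrinks a large homogeneous set attached identically to $X$ is justified by exhibiting an optimal solution treating all but a bounded number of its members identically, namely by completing each into the position dictated by the target cotree; again only additions are used. As the \TPE{} analysis shows that an exhaustively reduced instance has $O(k^7)$ vertices and no rule enlarges the instance, the same bound follows here, and properness is inherited because the rules only decrease $k$ or leave it unchanged.

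The main obstacle is exactly this audit: a few of the \TPE{} exchange arguments, in their natural form, ``clean up'' a vertex by first deleting all of its edges and then re-inserting it in the right place --- illegal for completion. For each such rule I would instead produce a canonical completion that inserts the vertex (or module) directly along the appropriate root-to-node path of the target cotree by adding only the missing edges, using the rigidity that every connected trivially perfect graph has a universal vertex. Should some rule resist this treatment, it can be dropped in favour of a weaker completion-specific rule and the $O(k^7)$ bound re-derived from the survivors; but this is never a genuine risk, since Guo~\cite{guo2007problem} already provides an $O(k^3)$ kernel for \TPC{}, so all that is really at stake here is that the \emph{uniform} technique of this paper also covers the completion variant.
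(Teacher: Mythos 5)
Your overall strategy is the paper's: rerun the \TPE{} pipeline and check that each rule survives the restriction to additions. But your audit looks in the wrong place. The exchange arguments you worry about (``clean up a vertex by deleting all its edges and re-inserting it'') do not occur in the \TPE{} proof: every safety argument for the vertex-removal rules (the twin rule, the module rule, the comb rule) produces from a solution $F$ of the reduced instance a \emph{subset} $F'\subseteq F$ that solves the original one, so a completion set automatically stays a completion set and no rerouting or ``canonical completion along the cotree'' is needed. The genuine obstruction is instead in the basic rules, which your three-phase summary omits: Rule~\ref{rule:p4} does not merely certify that some edit is forced --- its \emph{action} is to delete the edge $uv$ and decrement $k$, which is an illegal operation in \TPC{}. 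The fix is not to ``drop the rule in favour of a weaker one'': the inapplicability of Rules~\ref{rule:c4} and~\ref{rule:p4} is exactly what drives the $O(k^4)$ bound on the number of $X$-neighborhoods (Lemma~\ref{lem:bounded-x-neighborhoods}), so discarding Rule~\ref{rule:p4} would collapse the size analysis. The correct replacement, which the paper states as Rule~\ref{rule:p4c}, keeps the same trigger condition but outputs a trivial no-instance: since every editing set of size at most $k$ must contain the deletion of $uv$, and deletions are forbidden, the instance has no solution. Because the trigger is unchanged, the notion of a reduced instance --- and hence all downstream structural lemmas --- is identical to the editing case.

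Your closing fallback, that Guo's $O(k^3)$ kernel already settles the theorem, is also weaker than it looks: the paper points out that Guo's construction was only sketched in a conference abstract whose full version never appeared, which is precisely why this theorem is stated and proved independently. So the proposal is salvageable and close in spirit to the paper, but as written it misidentifies the one step that actually requires modification and offers no working substitute for it.
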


To the best of our knowledge, no polynomial kernel for \TPD{} was
known so far.  For \TPC, a cubic kernel was shown earlier by
Guo~\cite{guo2007problem}.  Unfortunately, the work of
Guo~\cite{guo2007problem} is published only as a conference extended
abstract, where it is only sketched how the approach yielding a
quartic kernel for \pname{Split Deletion} could be used to obtain a
cubic kernel for \TPC.  The details of this kernelization algorithm
are deferred to the full version, which, alas, has not appeared.  For
this reason, we believe that our proof of
Theorem~\ref{thm:tpc-polykernel-intro} fills an important gap in the
literature --- the polynomial kernel for \TPC{} is an important
ingredient of the subexponential parameterized algorithm for this
problem~\cite{drange2014exploring}.

Finally, we show that \TPE{}, in addition to being \NP-complete,
cannot admit a subexponential parameterized algorithm, provided that
the Exponential Time Hypothesis holds.
\begin{theorem}
  \label{thm:eth-hardness-intro}
  Under ETH, the \tpe{} problem is \cclass{NP}-hard and cannot be
  solved in time $2^{o(k)}n^{O(1)}$ or $2^{o(n+m)}$ even on graphs
  with maximum degree~$4$.
\end{theorem}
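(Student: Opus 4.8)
The plan is to exhibit a polynomial-time reduction from a sparse, bounded-occurrence variant of satisfiability. Starting from \textsc{3-Sat}, apply the Sparsification Lemma~\cite{impagliazzo2001which} and the classical variable-splitting construction to reach the source problem: 3-CNF formulas $\phi$ in which the number of clauses is linear in the number $n$ of variables and every variable occurs in at most three clauses. Under ETH this problem cannot be solved in time $2^{o(n)}$, and it is \NP-complete. From $\phi$ I will build in polynomial time a graph $G$ of maximum degree at most $4$ together with an integer $k = O(n)$ such that $\phi$ is satisfiable if and only if $G$ can be made trivially perfect by at most $k$ edge edits. As $G$ then has $O(n)$ vertices and, being of bounded degree, $O(n)$ edges, any algorithm solving \tpe{} in time $2^{o(k)}\cdot n^{O(1)}$ or in time $2^{o(|V(G)|+|E(G)|)}$ would solve the source problem in time $2^{o(n)}$, contradicting ETH; and \NP-hardness of \tpe{} follows because the reduction is polynomial and the source is \NP-hard. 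Thus a single reduction yields all three assertions of the theorem.

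The gadgets are built around the fact that trivially perfect graphs are precisely the $\{P_4, C_4\}$-free graphs, so every obstruction is induced on four vertices and every edit has a strictly local primary effect; the basic building block is that an induced $C_4$ has exactly two cheapest repairs --- adding one of its two diagonals --- which is already a binary choice. For each variable $x$ I use a constant-size \emph{variable gadget} assembled from a few such $C_4$'s (together with some $P_4$'s) and tied together so that, at the optimum, only two global repairs survive, one labelled ``true'' and one ``false'', each of which leaves at up to three designated \emph{port} vertices (one per occurrence of $x$) an edge or a non-edge that a clause gadget may later reuse at no extra cost. For each clause $c$ I use a constant-size \emph{clause gadget}, a small obstruction wired to the three ports of its literals, so that it can be destroyed for free using the port of any literal currently in its satisfying state, but needs one extra edit when all three literals are in the wrong state. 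Variable and clause gadgets are joined by short paths, and the whole construction is arranged so that no vertex needs more than four neighbours; this is feasible precisely because each variable is used at most three times, each clause has three literals, and all the rigidity of the gadgets is produced by chaining tiny obstructions rather than by any high-degree ``enforcer'' vertex. Finally, $k$ is the sum of the per-gadget baseline repair costs, so that a solution of size $k$ must repair every gadget canonically and can afford no clause its extra edit.

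Correctness then has the usual two directions. For the forward direction, given a satisfying assignment one edits each variable gadget into the corresponding state and repairs each clause gadget through a port carrying a satisfied literal; using the structural characterization of trivially perfect graphs (graphs built from single vertices by disjoint unions and additions of a universal vertex; equivalently, comparability graphs of rooted forests) one verifies that the resulting graph is trivially perfect and that exactly $k$ edits were spent. The backward direction is the substantial one: starting from an arbitrary solution of size at most $k$, one normalizes it by local exchanges so that it touches each gadget only in one of its two canonical ways, and then reads off an assignment, the key point being that every clause gadget must have been repaired through a port in the satisfying state, which forces a literal of each clause to be true consistently across all of that variable's occurrences.

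I expect the backward analysis, carried out under the degree-$4$ restriction, to be the main obstacle. With no large-degree vertices available, a gadget has no ``rigid core'' for a global argument to anchor to, and a priori an editing solution might save edits by coordinated non-local changes that straddle several gadgets or that kill an obstruction in one gadget as a side effect of an edit aimed at another. Excluding such behaviour needs careful accounting --- presumably a charging argument over the connected components of the edited graph, or an analysis of how a minimal solution must behave on the private vertices of each gadget --- to show that the optimum splits additively over the gadgets and that the only splittings meeting the budget $k$ are the intended ones. Designing gadgets that are simultaneously (i) repairable cheaply in exactly two ways, (ii) interconnectable with degree at most $4$, and (iii) rigid enough for this decomposition to go through is the delicate part; the precise bound $4$ (rather than a larger constant) will come from minimizing the size of the interface between a variable gadget and its ports.
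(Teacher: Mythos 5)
Your high-level plan matches the paper's: a polynomial, parameter-linear reduction from a sparse form of \pname{3Sat} to \tpe{} on degree-$4$ graphs, with a budget so tight that every gadget must be repaired in one of a constant number of canonical ways. But the proposal stops exactly where the proof has to start: you never construct the gadgets, and you explicitly defer the two claims on which everything rests --- that each gadget admits only the intended cheap repairs, and that the optimum splits additively over gadgets so that no ``coordinated non-local'' solution can beat the budget. These are not routine details one can wave at; they are the entire content of the theorem. In particular, your proposed encoding of a truth value by the binary choice of which diagonal to add to a $C_4$ is unverified and is not obviously extendable to a gadget with up to three ports: adding a diagonal creates new adjacencies that can spawn fresh $P_4$'s through the port vertices, and nothing in the proposal rules out that a solution mixes diagonal additions with edge deletions across gadget boundaries.

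For comparison, the paper resolves both difficulties with specific choices. The variable gadget for a variable $x$ with $p_x$ occurrences is a cycle $C_{3p_x}$ together with one pendant ``paw'' vertex per occurrence; a counting claim shows any editing set of a $C_{3p}$ has size at least $p$, and equality forces deletion of every third edge, giving three phases of which the paw vertices kill one, leaving exactly the two truth values. The clause gadget is a single vertex of degree $3$, whose closed second neighbourhood is a subdivided claw; a second claim shows this needs at least $2$ edits, with equality only for deleting two of the three edges at the centre. Crucially, the additivity you worry about is obtained for free: the cycles and the subdivided claws pairwise share at most one vertex, so the vertex-pair sets $\binom{V(G_x)}{2}$ and $\binom{V(G_c)}{2}$ are pairwise disjoint, the lower bounds sum to exactly $k_\phi$, and every solution of size $k_\phi$ is forced to be canonical on each gadget. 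No normalization by ``local exchanges'' is needed, and no bounded-occurrence preprocessing is needed either, since the cycle gadget scales with $p_x$ while keeping maximum degree $4$. Until you supply gadgets together with rigidity claims of this kind and a disjointness (or charging) argument that makes the budget tight, the proposal is a plan rather than a proof.
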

In other words; the familial group measure cannot be computed in time
subexponential in terms of the value of the measure.  This stands in
contrast with \tpc{} that admits a subexponential parameterized
algorithm, and shows that \tpe{} is more similar to \tpd, for which a
similar lower bound has been proved earlier by Drange et
al.~\cite{drange2014exploring}.  In fact, our reduction can be used as
an alternative proof of hardness of \tpd{} as well.

Let us note that the \NP-hardness reduction for \tpe{} presented by
Nastos and Gao~\cite{nastos2013familial} cannot be used to prove
nonexistence of a subexponential parameterized algorithm, since it
involves a cubic blow-up of the parameter (see
Section~\ref{sec:hardness} for details).  To prove
Theorem~\ref{thm:eth-hardness-intro}, we resort to the technique used
for similar hardness results by Komusiewicz and
Uhlmann~\cite{komusiewicz2012clusterediting} and by Drange et
al.~\cite{drange2014exploring}.

\section{Preliminaries}
\label{sec:prelim}
\subsection{Graphs and complexity}
\label{ssec:prelim-graphs-complexity}
\paragraph{Graphs.}
In this work we consider only undirected simple finite graphs.  For a
graph $G$, by $V(G)$ and $E(G)$ we denote the vertex and edge set of
$G$, respectively. The {\em{size}} of a graph $G$ is defined as $|G|=|V(G)|+|E(G)|$. 

For a vertex $v \in V(G)$, by $N_G(v)$ we denote
the open neighborhood of~$v$, i.e.  $N_G(v)=\{u \in V(G) \mid uv \in
E(G)\}$.  The closed neighborhood of $v$, denoted by $N_G[v]$, is
defined as $N_G(v)\cup \{v\}$.  These notions are extended to subsets
of vertices as follows: $N_G[X]=\bigcup_{v\in X} N_G[v]$ and
$N_G(X)=N_G[X]\setminus X$.  We omit the subscript whenever~$G$ is
clear from context.

When $U\subseteq V(G)$ is a subset of vertices of~$G$, we write~$G[U]$ to
denote the \emph{induced subgraph} of~$G$, i.e., the graph $G' =
(U,E_U)$ where~$E_U$ is~$E(G)$ restricted to~$U$.  The degree of a
vertex $v \in V(G)$, denoted $\deg_G(v)$, is the number of vertices it
is adjacent to, i.e., $\deg_G(v) = |N_G(v)|$.  We denote by
$\Delta(G)$ the maximum degree in the graph, i.e., $\Delta(G) =
\max_{v \in V(G)}\deg(v)$.  For a set $A$, we write $\binom{A}{2}$ to
denote the set of unordered pairs of elements of $A$; thus $E(G)
\subseteq \binom{V(G)}{2}$.  By $\overline{G}$ we denote the
\emph{complement} of a graph~$G$, i.e., $V(\overline{G})=V(G)$ and
$E(\overline{G})=\binom{V(G)}{2}\setminus E(G)$.

If~$v$ and~$u$ are such that $N[v] = N[u]$, then we call~$v$ and~$u$
\emph{true twins}.  Observe that~$v$ and~$u$ are adjacent if they are
true twins.  On the other hand, if~$v$ and~$u$ have~$N(v) = N(u)$,
then we call~$v$ and~$u$ \emph{false twins}, and in this case we may
observe that~$v$ and~$u$ are non-adjacent.  If~$X$ is an
inclusion-wise maximal set of vertices such that for every pair of
vertices~$v$ and~$u$ in~$X$ they are true (resp.\ false) twins, then
we call~$X$ a true (resp.\ false) twin class.

For a graph $G$ and a set of vertices $X \subseteq V(G)$, we denote by $G-X$
the (induced subgraph) $G[V(G) \setminus X]$.  When $F \subseteq
\binom{V(G)}{2}$, we write~$G-F$ to denote the graph~$G'$ on vertex
set~$V(G)$ and edge set $E(G) \setminus F$.  Finally, we let $G
\triangle F$ be the graph on vertex set~$V(G)$ and edge set $E(G)
\triangle F$, where $\triangle$ denotes the \emph{symmetric
  difference}; For two sets~$A$ and~$B$, $A \triangle B = (A \setminus
B) \cup (B \setminus A)$.  We will also say that two sets $A$ and $B$
are \emph{nested} if $A \subseteq B$ or $B \subseteq A$.

A vertex $v\in V(G)$ is \emph{universal} if it is adjacent to all the
other vertices of the graph.  Note that the set of universal vertices
of a graph forms a clique, which is also a true twin class.  This
clique will be denoted by $\uni(G)$ and called the \emph{universal
  clique} of $G$.

\paragraph{Modules and the modular decomposition.}

In our kernelization algorithm we will use the notion of a
\emph{module} in a graph.

\begin{definition}
  \label{def:module}
  Given a graph~$G$, a set of vertices~$M \subseteq V(G)$ is called a
  \emph{module} if for any two vertices~$v$ and~$u$ in~$M$, we have
  that $N(v) \setminus M = N(u) \setminus M$, i.e., all the vertices
  of $M$ have exactly the same neighborhood outside $M$.
\end{definition}

Observe that for any graph $G$, any singleton~$M=\{v\}$ is a module,
and also~$V(G)$ itself is a module.  However, $G$ can contain a whole
hierarchy of modules.  This hierarchy can be captured using the
following notion of a \emph{modular decomposition}, introduced by
Gallai~\cite{modular-decomp}.  The following description of a modular
decomposition is taken verbatim from the work of Bliznets et
al.~\cite{bliznets2014interval} .

A module decomposition of a graph $G$ is a rooted tree $T$, where each
node $t$ is labeled by a module $M^t \subseteq V(G)$, and is one of
four types:
\begin{description}
\item[leaf:] $t$ is a leaf of $T$, and $M^t$ is a singleton;
\item[union:] $G[M^t]$ is disconnected, and the children of $t$ are
  labeled with different connected components of $G[M^t]$;
\item[join:] the complement of $G[M^t]$ is disconnected, and the
  children of $t$ are labeled with different connected components of
  the complement of $G[M^t]$;
\item[prime:] neither of the above holds, and the children of $t$ are
  labeled with different modules of $G$ that are proper subsets of
  $M^t$, and are inclusion-wise maximal with this property.
\end{description}
Moreover, we require that the root of $T$ is labeled with the module
$V(G)$.  We need the following properties of the module decomposition.
\begin{theorem}[see
  \cite{compute-modular-decomp}]\label{thm:module-decomp}
  For a graph $G$, the following holds.
  \begin{enumerate}
  \item A module decomposition $(T,(M^t)_{t \in V(T)})$ of $G$ exists,
    is unique, and computable in linear time.
  \item At any prime node $t$ of $T$, the labels of the children form
    a partition of $M^t$.  In particular, for each vertex $v$ of $G$
    there exists exactly one leaf node with label $\{v\}$.
  \item Each module $M$ of $G$ is either a label of some node of $T$,
    or there exists a \textbf{union} or \textbf{join} node $t$ such
    that $M$ is a union of labels of some children of $t$.
  \end{enumerate}
\end{theorem}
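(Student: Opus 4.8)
The plan is to derive the three items in order, relying on the classical theory of modular decomposition due to Gallai; as phrased the statement collects facts that are by now standard, so the ``proof'' is really an organisation of known arguments together with an imported linear-time algorithm.

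\textbf{Existence and uniqueness.} The first step is the fundamental dichotomy: for every graph $H$ with $|V(H)| \geq 2$, exactly one of the following holds --- $H$ is disconnected, $\overline{H}$ is disconnected, or both $H$ and $\overline{H}$ are connected. In the first two cases the node is forced to be of type \textbf{union} (resp.\ \textbf{join}), and its children must be the connected components of $H$ (resp.\ of $\overline{H}$), which partition $V(H)$. The crux is the third case: I would prove that when $H$ and $\overline{H}$ are both connected, the inclusion-wise maximal proper modules of $H$ are pairwise disjoint and cover $V(H)$, so that the \textbf{prime} node is well defined with these as its children. Covering is immediate because every singleton $\{v\}$ is a module and hence sits inside a maximal proper one. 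Disjointness rests on the standard \emph{module overlap} lemma: if modules $A$ and $B$ satisfy $A \cap B \neq \emptyset$ but neither contains the other, then $A \cup B$, $A \cap B$, $A \setminus B$ and $B \setminus A$ are all modules; consequently two distinct maximal proper modules cannot overlap unless their union is all of $V(H)$, and a short case analysis invoking the connectivity of both $H$ and $\overline{H}$ excludes that last possibility. Applying this recursively to $H = G[M^t]$ builds the tree, and uniqueness follows because at each node the type and the partition into children are both determined by $G[M^t]$ alone.

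\textbf{Linear time and item~2.} For the running time I would simply invoke the known linear-time modular decomposition algorithm (\cite{compute-modular-decomp}); this is by far the most technically involved ingredient and I would not attempt to reprove it. Item~2 is then immediate from the construction: the children of a \textbf{prime} node $t$ are by definition the maximal proper modules of $G[M^t]$, which by the previous paragraph partition $M^t$; pushing this partition down to the leaves and using that leaves carry singletons shows that each vertex $v$ of $G$ labels exactly one leaf.

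\textbf{Characterising all modules (item~3), and the main obstacle.} One inclusion is by construction: every label $M^t$ is a module, and at a \textbf{union} node any union $M^{c_1} \cup \dots \cup M^{c_j}$ of children labels is a module, since vertices in distinct components of $G[M^t]$ are non-adjacent (so the union has no ``half-seen'' internal neighbour) and all its vertices inherit the common external neighbourhood of $M^t$; the \textbf{join} case is the same statement applied to $\overline{G}$. For the converse, let $M$ be any module of $G$ and let $t$ be the lowest node of $T$ with $M \subseteq M^t$; then $M$ is a module of $G[M^t]$ not contained in any single child. If $t$ is a leaf, $M = M^t$. If $t$ is \textbf{prime}, then $M$ would be a nontrivial module of the quotient of $G[M^t]$ by its children, but that quotient is a prime graph (its only modules are singletons and the whole set, else some child would fail to be maximal), forcing $M = M^t$, again a label. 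If $t$ is a \textbf{union} or \textbf{join} node, then $M$ must in fact be a union of children of $t$: a boundary edge between $M \cap M^{c}$ and $M^{c}\setminus M$ inside some child $M^c$, together with a vertex of $M$ lying in another child, would contradict the module property of $M$. This is exactly item~3. The one place that genuinely needs care is the prime case of the structure lemma --- that the maximal proper modules partition $V(H)$ when $H$ and $\overline{H}$ are both connected --- which hinges on the overlap lemma plus the small case check ruling out overlapping maximal modules; everything else is bookkeeping, and the linear-time bound is taken off the shelf.
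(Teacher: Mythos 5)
The paper offers no proof of this statement: Theorem~\ref{thm:module-decomp} is imported wholesale from the literature (McConnell and Spinrad~\cite{compute-modular-decomp}), exactly as you do for the linear-time bound, so there is nothing in the paper to compare your argument against line by line. Judged on its own terms, your sketch is a correct organisation of the classical (Gallai-style) proof: the union/join/prime trichotomy, the module overlap lemma, the fact that two overlapping maximal proper modules would force $V(H)$ to split into two complementary modules $B$ and $V(H)\setminus B$ whose bipartite interface is complete or empty (contradicting connectivity of $H$ or of $\overline{H}$), and the ``lowest node containing $M$'' argument for item~3.

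One spot deserves more care than your write-up gives it. In the prime case of item~3 you jump to ``$M$ is a nontrivial module of the quotient,'' but that presupposes $M$ is a union of children; you must first rule out that $M$ \emph{overlaps} some child $C$. This is not bookkeeping: if $M$ overlaps $C$, the overlap lemma plus maximality of $C$ gives $M\cup C=M^t$, hence $M^t\setminus C=M\setminus C$ is a module complementary to $C$, and you are back to the complete-or-empty interface contradiction with primality of $G[M^t]$. Only after that does the quotient argument (whose primality you do justify correctly) apply. With that step filled in, and the linear-time algorithm taken off the shelf as both you and the paper do, the proof is complete.
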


Let us remark that since in this work we do not optimize the running
time of the kernelization algorithm, we do not need to compute the
modular decomposition in linear time.  Any simpler polynomial time
algorithm would suffice (see the work of McConnell and
Spinrad~\cite{compute-modular-decomp} for a literature overview).


\paragraph{Parameterized complexity.}
The running time of an algorithm is usually described as a function of
the length of the input.  To refine the complexity analysis of
computationally hard problems, parameterized complexity introduced the
notion of an extra ``parameter'' that is an additional part of a
problem instance responsible for measuring its complexity.  To
simplify the notation, we will consider inputs to problems of the form
$(G,k)$, which is a pair consisting of a graph~$G$ and a nonnegative
integer~$k$.  A problem is then said to be \emph{fixed parameter
  tractable} if there is an algorithm which solves the problem in time
$f(k) \cdot \poly(|G|)$, where $f$ is any function, and $\poly \colon
\mathbb{N} \to \mathbb{N}$ any polynomial function.  In the case when
$f(k) = 2^{o(k)}$ we say that the algorithm is a subexponential
parameterized algorithm.  When a problem $\Pi \subseteq \mathcal{G}
\times \mathbb{N}$ is fixed-parameter tractable, where~$\mathcal{G}$
is the class of all graphs, we say that~$\Pi$ belongs to the
complexity class \cclass{FPT}.  For a more rigorous introduction to
parameterized complexity we refer to the books of Downey and
Fellows~\cite{DowneyF99} and of Flum and Grohe~\cite{FlumGrohebook}.

A \emph{kernelization algorithm} (or \emph{kernel}) is a
polynomial-time algorithm for a parameterized problem~$\Pi$ that takes
as input a problem instance $(G,k)$ and returns an equivalent instance
$(G',k')$, i.e.  $(G,k)\in \Pi \Leftrightarrow (G',k')\in \Pi$, where
both~$|G'|$ and~$k'$ are bounded by~$f(k)$ for some function~$f$.  We
then say that~$f$ is the \emph{size of the kernel}.  When $k' \leq k$,
we say that the kernel is a \emph{proper kernel}.  Specifically, a
proper polynomial kernelization algorithm for~$\Pi$ is a polynomial
time algorithm which takes as input an instance $(G,k)$ and returns an
equivalent instance $(G',k')$ with $k' \leq k$ and $|G'| \leq p(k)$
for some polynomial function $p$.

\paragraph{Tools for lower bounds.}
As evidence that \pname{Trivially Perfect Editing} cannot be solved in
subexponential parameterized time $2^{o(k)} n^{O(1)}$, we will use the
Exponential Time Hypothesis, formulated by Impagliazzo, Paturi, and
Zane~\cite{impagliazzo2001which}.

\begin{hypothesis}[Exponential Time Hypothesis, ETH] There exists a
  positive real number $s$ such that \pname{3Sat} with $n$ variables
  and $m$ clauses cannot be solved in time $2^{sn}(n + m)^{O(1)}$.
\end{hypothesis}

Impagliazzo, Paturi, and Zane~\cite{impagliazzo2001which} proved a
fundamental result called \emph{Sparsification Lemma}, which can serve
as a Turing reduction from an arbitrary instance of \pname{3Sat} to an
instance where the number of clauses is linear in the number of
variables.  Thus, the following statement is an immediate corollary of
the Sparsification Lemma.

\begin{proposition}[\cite{impagliazzo2001which}]\label{prop:eth}
  Unless ETH fails, there exists a positive real number $s$ such that
  \pname{3Sat} with $n$ variables and $m$ clauses cannot be solved in
  time $2^{s(n+m)}(n + m)^{O(1)}$.  In particular, \pname{3Sat} does
  not admit an algorithm with time complexity
  $2^{o(n+m)}(n+m)^{O(1)}$.
\end{proposition}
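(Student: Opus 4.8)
The plan is to obtain the statement as a contrapositive consequence of the \emph{Sparsification Lemma} of Impagliazzo, Paturi, and Zane, which I invoke as a black box: for every $\varepsilon>0$ there is a constant $C_\varepsilon$ such that any $3$-CNF formula $\phi$ with $n$ variables can be written as $\phi \equiv \bigvee_{i=1}^{t}\psi_i$, where $t\le 2^{\varepsilon n}$, each $\psi_i$ is a $3$-CNF formula over the same variables with at most $C_\varepsilon n$ clauses, and the list $\psi_1,\dots,\psi_t$ is computable from $\phi$ in time $2^{\varepsilon n}\cdot(n+m)^{O(1)}$. The proposition then follows by showing that a too-fast algorithm parameterized by $n+m$ collapses to a too-fast algorithm parameterized by $n$ alone, contradicting ETH.

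Concretely, I would argue by contradiction: assume that for every positive real $s'$ there is an algorithm deciding satisfiability of a $3$-CNF formula with $n$ variables and $m$ clauses in time $2^{s'(n+m)}(n+m)^{O(1)}$, and aim to conclude that the same holds with the exponent $s'(n+m)$ replaced by $s'n$, which is exactly the negation of ETH. Fix an arbitrary target $s>0$ and an input $\phi$ with $n$ variables and $m$ clauses. First apply the Sparsification Lemma with $\varepsilon:=s/2$, producing in time $2^{(s/2)n}(n+m)^{O(1)}$ the formulas $\psi_1,\dots,\psi_t$ with $t\le 2^{(s/2)n}$, each having $n$ variables and at most $Cn$ clauses, where $C:=C_{s/2}$; since $\phi$ is satisfiable iff some $\psi_i$ is, it suffices to decide each $\psi_i$ separately.

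Next, run the assumed algorithm on every $\psi_i$ with its parameter set to $s':=s/(2(1+C))$. Each $\psi_i$ has at most $(1+C)n$ variables plus clauses, so each call runs in time $2^{s'(1+C)n}(n+m)^{O(1)}=2^{(s/2)n}(n+m)^{O(1)}$; summing over the at most $2^{(s/2)n}$ subformulas and adding the sparsification cost yields a total of $2^{(s/2)n}\cdot 2^{(s/2)n}(n+m)^{O(1)}=2^{sn}(n+m)^{O(1)}$. As $s>0$ was arbitrary, this contradicts ETH and establishes the first sentence. For the ``in particular'' clause, note that an algorithm running in time $2^{o(n+m)}(n+m)^{O(1)}$ runs, for every fixed $s'>0$, in time at most $2^{s'(n+m)}(n+m)^{O(1)}$ on all sufficiently large inputs, hence in time $2^{s'(n+m)}(n+m)^{O(1)}$ after absorbing the finitely many small cases into the polynomial factor, contradicting the first sentence.

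There is no genuine mathematical obstacle here beyond the Sparsification Lemma itself, which I do not reprove; the only point requiring care is the order of quantifiers in the bookkeeping of constants. One must choose $\varepsilon$ in the Sparsification Lemma first, since this fixes $C=C_\varepsilon$, and only afterwards pick the parameter $s'$ of the assumed $3$-SAT algorithm small enough that the blown-up instance size $(1+C)n$ still fits under the budget $(s/2)n$, while the number $t\le 2^{\varepsilon n}$ of produced subformulas contributes at most one further factor $2^{(s/2)n}$ to the running time.
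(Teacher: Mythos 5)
Your proposal is correct and takes the same route as the paper, which simply states the proposition as an immediate corollary of the Sparsification Lemma of Impagliazzo, Paturi, and Zane and gives no further argument. You have merely filled in the standard bookkeeping (choosing $\varepsilon$ before $s'$, and bounding the $t\le 2^{\varepsilon n}$ calls on instances of size $(1+C)n$), and you do so correctly.
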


\subsection{Trivially Perfect Graphs}
\label{ssec:prelim-tp}

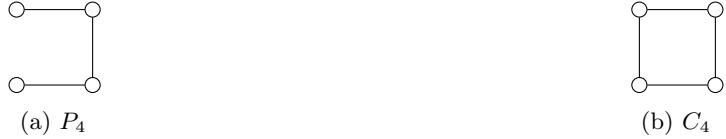
\begin{figure}[t]
  \centering
  \begin{subfigure}[t]{0.45\textwidth}
    \centering
    \begin{tikzpicture}[every node/.style={circle, draw, scale=.6},
      scale=1]
      \node (1) at (0,0) {};
      \node (2) at (1,0) {};
      \node (3) at (0,1) {};
      \node (4) at (1,1) {};
      
      \draw (1) -- (2);
      \draw (2) -- (4);
      \draw (3) -- (4);
    \end{tikzpicture}
    \caption{$P_4$}
  \end{subfigure}
  \hspace{.05\textwidth}
  \begin{subfigure}[t]{0.45\textwidth}
    \centering
    \begin{tikzpicture}[every node/.style={circle, draw, scale=.6},
      scale=1]
      \node (1) at (0,0) {};
      \node (2) at (1,0) {};
      \node (3) at (0,1) {};
      \node (4) at (1,1) {};
      
      \draw (1) -- (2);
      \draw (2) -- (4);
      \draw (3) -- (4);
      \draw (1) -- (3);
    \end{tikzpicture}
    \caption{$C_4$}
  \end{subfigure}
  \caption{\emph{Trivially perfect graphs} are $\{C_4, P_4\}$-free.}
  \label{fig:forbidden-graphs}
\end{figure}

\paragraph{Combinatorial properties.} A graph~$G$ is trivially perfect
if and only if it does not contain a~$C_4$ or a~$P_4$ as an induced
subgraph.  That is, trivially perfect graphs are defined by the
forbidden induced subgraph family $F = \{C_4,P_4\}$ (see
Figure~\ref{fig:forbidden-graphs}).  However, we mostly rely on the
following recursive characterization of the trivially perfect graphs:

\begin{proposition}[\cite{jing1996quasi}]
  \label{def:tp-recursive}
  The class of trivially perfect graphs can be defined recursively as
  follows:
  \begin{itemize}
  \item $K_1$ is a trivially perfect graph.
  \item Adding a universal vertex to a trivially perfect graph results
    in a trivially perfect graph.
  \item The disjoint union of two trivially perfect graphs results in
    a trivially perfect graph.
  \end{itemize}
\end{proposition}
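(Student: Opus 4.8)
The plan is to show that the class $\mathcal{R}$ of graphs generated by the three rules coincides with the class of $\{C_4,P_4\}$-free graphs, by proving both inclusions.

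\emph{Soundness} ($\mathcal{R}$ contains only $\{C_4,P_4\}$-free graphs). I would argue by induction on the number of rule applications used to build a graph $G \in \mathcal{R}$. The graph $K_1$ is vacuously $\{C_4,P_4\}$-free. Suppose $G$ is $\{C_4,P_4\}$-free and $G'$ arises by adding a universal vertex $v$; if $G'$ contained an induced $C_4$ or $P_4$ on a four-element set $S$, then either $v \notin S$, so that $G'[S] = G[S]$ is a forbidden induced subgraph already present in $G$, or $v \in S$, in which case $v$ is adjacent to all three other vertices of $S$ and hence has degree $3$ in $G'[S]$ --- impossible, since every vertex of $C_4$ and of $P_4$ has degree at most $2$. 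Finally, $C_4$ and $P_4$ are connected graphs, so any induced copy of either one inside a disjoint union of two graphs would have to lie entirely within one of the two summands; hence the disjoint-union rule preserves $\{C_4,P_4\}$-freeness.

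\emph{Completeness} (every $\{C_4,P_4\}$-free graph lies in $\mathcal{R}$). I would proceed by induction on $|V(G)|$. If $|V(G)| = 1$ then $G = K_1 \in \mathcal{R}$. If $G$ is disconnected, then each of its connected components is a strictly smaller $\{C_4,P_4\}$-free graph, hence lies in $\mathcal{R}$ by the induction hypothesis; applying the disjoint-union rule repeatedly shows $G \in \mathcal{R}$. If $G$ is connected with $|V(G)| \geq 2$, then the decisive point is the claim that $G$ has a universal vertex $v$; granting this, $G - v$ is a strictly smaller $\{C_4,P_4\}$-free graph, hence in $\mathcal{R}$ by induction, and $G$ is obtained from $G - v$ by the universal-vertex rule.

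The \emph{main obstacle} is thus to establish that a connected $\{C_4,P_4\}$-free graph $G$ on at least two vertices has a universal vertex. I would prove this by choosing a vertex $v$ whose closed neighborhood $N[v]$ is inclusion-maximal among all closed neighborhoods (such a $v$ exists since the family of closed neighborhoods is finite and nonempty). Suppose $v$ is not universal; by connectivity there are vertices $a \in N(v)$ and $b \in V(G) \setminus N[v]$ with $ab \in E(G)$. For any $x \in N(v) \setminus \{a\}$, if $xa \notin E(G)$ then the edges inside $\{x,v,a,b\}$ are exactly $xv$, $va$, $ab$ together with possibly $xb$, while $xa$ and $vb$ are non-edges; so $\{x,v,a,b\}$ induces a $P_4$ when $xb \notin E(G)$ and a $C_4$ when $xb \in E(G)$ --- a contradiction either way. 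Hence every $x \in N(v) \setminus \{a\}$ is adjacent to $a$, which yields $N[v] \subseteq N[a]$; but $b \in N[a] \setminus N[v]$, contradicting the maximality of $N[v]$. Therefore $v$ is universal. Apart from this claim the proof is two routine inductions glued together by short four-vertex case checks; the claim itself becomes easy once one commits to picking a vertex with inclusion-maximal closed neighborhood rather than, say, one of maximum degree.
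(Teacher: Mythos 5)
Your proof is correct and complete. The paper itself does not prove this proposition --- it is stated as a known result with a citation to Jing-Ho, Jer-Jeong, and Chang --- so there is no in-paper argument to compare against. Your two inductions are routine and sound, and the one substantive step (every connected $\{C_4,P_4\}$-free graph on at least two vertices has a universal vertex, obtained by taking a vertex with inclusion-maximal closed neighborhood and deriving an induced $P_4$ or $C_4$ from any witness to non-universality) is exactly the standard argument for this classical characterization.
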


Based on Proposition~\ref{def:tp-recursive}, a superset of the current
authors~\cite{drange2014exploring} proposed the following notion of a
decomposition for trivially perfect graphs.  In the following, for a
rooted tree $T$ and vertex $t\in V(T)$, by $T_t$ we denote the subtree
of $T$ rooted at $t$.

\begin{definition}[Universal clique decomposition,
  \cite{drange2014exploring}]
  \label{def:univeral-clique-decomposition}
  A \emph{universal clique decomposition} (\emph{UCD}) of a connected
  graph $G$ is a pair $\mathcal{T} = (T=(V_T,E_T),
  \mathcal{B}=\{B_{t}\}_{t\in V_T})$, where~$T$ is a rooted tree
  and~$\mathcal{B}$ is a partition of the vertex set~$V(G)$ into
  disjoint nonempty subsets, such that
  \begin{itemize}
  \item if~$vw \in E(G)$ and~$v \in B_t,w \in B_s$, then either~$t = s$,~$t$
    is an ancestor of~$s$ in~$T$, or~$s$ is an ancestor of~$t$ in~$T$,
    and
  \item for every node~$t \in V_T$, the set of vertices~$B_t$ is the
    universal clique of the induced subgraph $G[\bigcup_{s\in V(T_t)}
    B_s]$.
  \end{itemize}
\end{definition}

We call the vertices of~$T$ \emph{nodes} and the sets in~$\mathcal{B}$
\emph{bags} of the universal clique decomposition~$(T, \mathcal{B})$.
By slightly abusing notation, we often identify nodes with
corresponding bags.  Note that by the definition, in a universal
clique decomposition every non-leaf node~$t$ has at least two
children, since otherwise the bag~$B_t$ would not comprise \emph{all}
the universal vertices of the graph~$G[\bigcup_{s\in V(T_t)} B_s]$.

The following lemma explains the connection between trivially perfect
graphs and universal clique decompositions.

\begin{lemma}[\cite{drange2014exploring}]
  \label{lem:tp-iff-ucd}
  A connected graph~$G$ admits a universal clique decomposition if and
  only if it is trivially perfect.  Moreover, such a decomposition is
  unique up to isomorphisms.
\end{lemma}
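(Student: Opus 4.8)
The plan is to prove both implications by induction on $|V(G)|$, leaning on the recursive characterization in Proposition~\ref{def:tp-recursive} and on the structure of the universal clique $\uni(G)$. For the forward direction, suppose $G$ is connected and trivially perfect. First I would observe that since $G$ is connected and trivially perfect, it has a nonempty universal clique: unrolling Proposition~\ref{def:tp-recursive}, a connected trivially perfect graph on at least two vertices must have been built by a final \emph{join} step adding a universal vertex (a \emph{union} step would disconnect it), so $\uni(G) \neq \emptyset$. Let $U = \uni(G)$ and consider $G' = G - U$. Each connected component $C_1,\dots,C_p$ of $G'$ is again trivially perfect (induced subgraphs of trivially perfect graphs are trivially perfect, as $\{C_4,P_4\}$-freeness is hereditary), so by induction each $C_i$ has a UCD $\mathcal{T}_i = (T_i,\mathcal{B}_i)$. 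I then build $\mathcal{T}$ by creating a new root $r$ with bag $B_r = U$ and attaching the roots of $T_1,\dots,T_p$ as its children. The first UCD condition holds because every vertex of $U$ is adjacent to everything, so any edge either stays within one $C_i$ (handled inductively) or goes between $U$ and some $C_i$, and $r$ is an ancestor of every node; edges between distinct $C_i$ do not exist. For the second condition one must check that $B_t$ is the universal clique of $G[\bigcup_{s \in V(T_t)} B_s]$ for every $t$: for $t = r$ this is $\uni(G) = U$ by definition, and for $t$ inside some $T_i$ the induced subgraph $G[\bigcup_{s \in V((T_i)_t)} B_s]$ equals the corresponding induced subgraph of $C_i$ (no vertices of $U$ or other components appear), so the inductive hypothesis applies verbatim. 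The base case $|V(G)| = 1$ is the single-node tree.

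For the reverse direction, suppose $G$ is connected and admits a UCD $\mathcal{T} = (T,\mathcal{B})$ with root $r$. I would argue by induction on $|V(T)|$ that $G$ is trivially perfect, using Proposition~\ref{def:tp-recursive}. The root bag $B_r = \uni(G)$ is a set of universal vertices, and if we remove them we should land in the \emph{union} case: each child $t$ of $r$ induces (on $\bigcup_{s \in V(T_t)} B_s$) a graph that carries a UCD, namely $(T_t, \mathcal{B}\restriction T_t)$ — here I need that the first UCD condition restricts cleanly to subtrees, which is immediate, and that $B_t$ remains the universal clique of its subtree's induced subgraph, which is literally the second UCD condition. The key point is that $G - B_r$ is the disjoint union of the $G[\bigcup_{s \in V(T_t)} B_s]$ over children $t$ of $r$: there are no edges between different subtrees because the first UCD condition forces every edge's endpoints onto a single root-to-leaf path. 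By induction each such piece is trivially perfect, their disjoint union is trivially perfect by Proposition~\ref{def:tp-recursive}, and then adding back the $|B_r|$ universal vertices one at a time keeps it trivially perfect. A small subtlety: if $r$ is a leaf then $G$ is a clique, which is trivially perfect directly; and I should note each non-leaf node has $\geq 2$ children (as remarked after the definition), though this is not strictly needed for correctness of the implication.

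For uniqueness up to isomorphism, I would again induct. The root bag of any UCD of $G$ must equal $\uni(G)$ by the second condition applied at the root, so the top bag is forced. Removing it, the connected components of $G - \uni(G)$ are determined by $G$, and the subtrees hanging off the root must be precisely the UCDs of these components (each subtree $T_t$ is a UCD of the graph induced by its vertex union, by the restriction argument above, and the vertex sets of the subtrees are exactly the components since there are no cross edges); by the inductive hypothesis each component's UCD is unique up to isomorphism, and assembling them under a common root is canonical, so $\mathcal{T}$ is unique up to isomorphism. The main obstacle, and the place to be careful, is verifying that the UCD axioms genuinely localize to subtrees and to induced subgraphs — i.e., that $G[\bigcup_{s \in V(T_t)} B_s]$ with the restricted tree and bag family is again a valid UCD — since both the existence/characterization argument and the uniqueness argument hinge entirely on this restriction being well-behaved; everything else is bookkeeping around Proposition~\ref{def:tp-recursive}.
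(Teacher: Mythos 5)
The paper does not prove this lemma --- it is imported verbatim from \cite{drange2014exploring} --- so there is no in-paper argument to compare against. Your proof is correct and is the standard one: both implications and uniqueness follow by induction, peeling off the universal clique $\uni(G)$ and invoking Proposition~\ref{def:tp-recursive}, and you correctly isolate the only delicate point, namely that the UCD axioms restrict cleanly to the subtrees rooted at the children of the root. The one detail worth making explicit there is that each such subtree induces a \emph{connected} subgraph (its root bag is a nonempty universal clique of that subgraph), so the inductive hypothesis and the definition of a UCD, which is stated only for connected graphs, indeed apply.
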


Note that a universal clique decomposition can trivially be found in
polynomial time by repeatedly locating universal vertices and
connected components.  Moreover, we can extend the notion of a
universal clique decomposition also to a disconnected trivially
perfect graph $G$.  In this case, the universal clique decomposition
of $G$ becomes a rooted forest consisting of universal clique
decompositions of the connected components of $G$.  Since a graph is
trivially perfect if and only if each of its connected component is,
Lemma~\ref{lem:tp-iff-ucd} can be easily generalized to the following
statement: Every (possibly disconnected) graph $G$ is trivially
perfect if and only if it admits a universal clique decomposition,
where the decomposition has the shape of a rooted forest.  Moreover,
this decomposition is unique up to isomorphism.

The following definition of a quasi-ordering of vertices respecting
the UCD will be helpful when arguing the correctness of the
kernelization procedure.
\begin{definition}
  Let $(T,\mathcal{B})$ be the universal clique decomposition of a
  trivially perfect graph $G$.  We impose a quasi-ordering~$\preceq$
  on vertices of $G$ defined as follows.  Suppose vertex $u$ belongs
  to bag $B_t$ and vertex $v$ belongs to bag $B_s$.  Then $u\preceq v$
  if and only if $t=s$ or $t$ is an ancestor of $s$ in the rooted
  forest $T$.
\end{definition}

Thus, classes of vertices pairwise equivalent with respect
to~$\preceq$ are exactly formed by the bags of $\mathcal{B}$, and
otherwise the ordering respects the rooted structure of $T$.  Note
that since the UCD of a trivially perfect graph is unique up to
isomorphism, the quasi-ordering $\preceq$ is uniquely defined and can
be computed in polynomial time.

\paragraph{Computational problems.} In this work we are mainly
interested in the \tpe{} problem, defined formally as follows:

\defparproblem{Trivially Perfect Editing}
{A graph $G$ and a non-negative integer $k$.}
{$k$}
{Is there a set $S \subseteq \binom{V(G)}{2}$ of size at most $k$ such that $G
  \triangle S$ is trivially perfect?}

For a graph $G$, any set $F \subseteq \binom{V(G)}{2}$ for which $G \triangle
F$ is trivially perfect will henceforth be referred to as an
\emph{editing set}.  An editing set is \emph{minimal} if no proper
subset~$F' \subsetneq F$ is also an editing set.

In the \TPD{} and \TPC{} problems we allow only edge deletions and
edge additions, respectively.  More formally, we require that the
editing set $S$ is contained in, or disjoint from $E(G)$,
respectively.  In Section~\ref{sec:kernel-editing} we prove
Theorem~\ref{thm:tpe-polykernel-intro}, that is, we show that \tpe{}
admits a kernel with $O(k^7)$.
%
%
Actually, the character of our data reduction rules will be very
simple;
The kernelization algorithm will start with instance $(G,k)$, and
perform only the following operations:
\begin{itemize}
\item edit some $e\in \binom{V(G)}{2}$, decrement the budget $k$ by $1$, and
  terminate the algorithm if $k$ becomes negative; or
\item remove some vertex $u$ of $G$ and proceed with instance
  $(G-u,k)$.
\end{itemize}
Thus, the kernel will essentially be an induced subgraph of $G$,
modulo performing some edits whose safeness and necessity can be
deduced.  In the proofs of correctness, we will never use any
minimality argument that exchanges edge deletions for completions, or
vice versa.  Therefore, the whole approach can be applied almost
verbatim to \tpd{} and \tpc{}, yielding proofs for
Theorems~\ref{thm:tpd-polykernel-intro}
and~\ref{thm:tpc-polykernel-intro} after very minor modifications.  We
hope that the reader will be convinced about this after understanding
all the arguments of Section~\ref{sec:kernel-editing}.
%
%
However, for the sake of completeness we, in
Section~\ref{sec:kernel-comp-del}, review the modifications of the
argumentation of Section~\ref{sec:kernel-editing} that are necessary
to prove Theorems~\ref{thm:tpd-polykernel-intro}
and~\ref{thm:tpc-polykernel-intro}.
%

%
%
%



\paragraph{\tpss s.}
In the kernelization algorithm we will need the following auxiliary
definition and result.

\begin{definition}[\tpss]
  \label{def:tpss}
  A set system $\mathcal{F}\subseteq 2^U$ over a ground set $U$ is called a
  \emph{\tpss} if for every $X_1$ and $X_2$ in $\mathcal{F}$ with $x_1
  \in X_1 \setminus X_2$ and $x_2 \in X_2 \setminus X_1$, there is no
  $Y \in \mathcal{F}$ with $\{x_1, x_2\} \subseteq Y$.
\end{definition}

The following property bounds the size of a \tpss, which we need
later:

\begin{lemma}
  \label{lem:tpss-bounded}
  Let~$\mathcal{F}$ be a \tpss{} over a finite ground set~$U$.  Then
  the cardinality of~$\mathcal{F}$ is at most~$|U| + 1$.
\end{lemma}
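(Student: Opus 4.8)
The plan is to build an explicit injection from the nonempty members of $\mathcal{F}$ into the ground set $U$; this immediately gives $|\mathcal{F}| \le |U| + 1$, the ``$+1$'' accounting for the possibility that $\emptyset \in \mathcal{F}$. (A naive induction that deletes one element of $U$ and tries to control how many sets collapse seems awkward, so I would avoid it.) The injection is read off from an auxiliary preorder on $U' := \bigcup_{X \in \mathcal{F}} X$: for $x,y \in U'$, declare $x \sqsubseteq y$ if every $X \in \mathcal{F}$ containing $x$ also contains $y$. This relation is reflexive and transitive, and the whole argument rests on the observation that the defining condition of a \tpss{} (Definition~\ref{def:tpss}) is exactly a comparability statement for $\sqsubseteq$.

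Concretely I would establish three claims. (i) If $x,y \in U'$ both lie in some common $Y \in \mathcal{F}$, then $x$ and $y$ are $\sqsubseteq$-comparable: otherwise the negations of $x \sqsubseteq y$ and $y \sqsubseteq x$ furnish $X_1 \ni x$ with $y \notin X_1$ and $X_2 \ni y$ with $x \notin X_2$, so $x \in X_1 \setminus X_2$, $y \in X_2 \setminus X_1$, and $\{x,y\}\subseteq Y$ --- precisely the configuration forbidden by the definition of a \tpss{}. (ii) Hence, for a fixed $X \in \mathcal{F}$, all elements of $X$ are pairwise $\sqsubseteq$-comparable, so $(X,\sqsubseteq)$ is a finite total preorder; if $X \ne \emptyset$ it therefore has an element $m_X \in X \subseteq U$ with $m_X \sqsubseteq y$ for every $y \in X$, and we fix one such $m_X$. (iii) The map $X \mapsto m_X$ is injective on the nonempty sets of $\mathcal{F}$: if $m_X = m_{X'} =: m$, then since $m \sqsubseteq y$ for all $y \in X$ and $m \in X'$, the definition of $\sqsubseteq$ forces $X \subseteq X'$, and symmetrically $X' \subseteq X$, so $X = X'$.

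Putting the pieces together, the nonempty members of $\mathcal{F}$ inject into $U' \subseteq U$, hence there are at most $|U|$ of them, and at most one further set ($\emptyset$) can belong to $\mathcal{F}$; therefore $|\mathcal{F}| \le |U| + 1$, with the cases $\mathcal{F} = \emptyset$ and $\mathcal{F} = \{\emptyset\}$ being trivial. I expect the only real ``idea'' to be the choice of the preorder $\sqsubseteq$: after that, (i) is a direct restatement of the hypothesis, (ii) is a finiteness argument, and (iii) is a one-line unwinding of definitions, with no computation involved. The one place to be slightly careful is that the ``minimum'' in (ii) should be understood up to $\sqsubseteq$-equivalence (twin elements lying in exactly the same members of $\mathcal{F}$), which is harmless since we only need some element that is $\sqsubseteq$ all others.
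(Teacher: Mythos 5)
Your proof is correct, but it takes a genuinely different route from the paper's. The paper argues by induction on $|U|$: it picks a nonempty member $X$ of minimum cardinality, uses the \tpss{} condition (via that minimality) to show that distinct nonempty members have distinct differences $Y\setminus X$, and then recurses on the quotient system $\{Y\setminus X : Y\in\mathcal{F},\,Y\neq\emptyset\}$ over the smaller ground set $U\setminus X$. You instead avoid induction entirely and construct an explicit injection from the nonempty members into $U$, by introducing the preorder $x\sqsubseteq y$ iff every member containing $x$ contains $y$, observing that the \tpss{} axiom says exactly that any two elements co-occurring in a member are $\sqsubseteq$-comparable, and assigning to each nonempty $X$ a $\sqsubseteq$-least element $m_X\in X$; your verification that $m_X=m_{X'}$ forces $X=X'$ is a correct one-line unwinding of the definition of $\sqsubseteq$. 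Each approach has its merits: the paper's induction is compact and self-contained, whereas your argument is constructive (it exhibits the injection witnessing the bound), makes the laminar/forest-like structure behind the name ``\tpss{}'' explicit, and shows directly which systems are extremal (chains of sets, possibly together with $\emptyset$). Your cautionary remark about the minimum in a total preorder being unique only up to equivalence is well placed and does not affect injectivity, since you only need \emph{some} element below all others in its set.
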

\begin{proof}
  We proceed by induction on $|U|$, with the claim being trivial when
  $U=\emptyset$.
  Suppose $\mathcal{F}$ is a \tpss{} over a ground set $U$, and let
  $X$ be a member of $\mathcal{F}$ that has the minimum cardinality
  among the nonempty ones (if there is no such set, then
  $|\mathcal{F}|\leq 1$ and we are done).  The first observation is
  that if $Y_1$ and $Y_2$ are two nonempty members of $\mathcal{F}$
  that satisfy $Y_1 \setminus X = Y_2 \setminus X$ (possibly $Y_1=X$ or $Y_2=X$), then in fact $Y_1
  = Y_2$.  Suppose otherwise that there exist two such nonempty sets
  $Y_1,Y_2\in \mathcal{F}$ with $Y_1\cap X\neq Y_2\cap X$; W.l.o.g.,
  suppose that there exists an element $x_1\in Y_1\setminus Y_2
  \subseteq X$, and hence $x_1 \in X \setminus Y_2$.
  Since $X$ is of minimum cardinality, we have that $|X|\leq |Y_2|$.  As
  $X\nsubseteq Y_2$, we infer that there exists an element $x_2\in
  Y_2\setminus X=Y_1\setminus X$.  Consider the pair $\{x_1,x_2\}$ and
  observe that (a)~$x_1\in X\setminus Y_2$, (b)~$x_2 \in Y_2 \setminus
  X$, and (c)~$\{x_1,x_2\} \subseteq Y_1$.  This contradicts the
  definition of a \tpss.

  Define a set system $\mathcal{F}'$ over the ground set $U\setminus X$ as
  follows:
  $$\mathcal{F}'=\{Y \setminus X\ \colon\ Y \in \mathcal{F},\, Y\neq \emptyset\}.$$
  Clearly, $\mathcal{F}'$ is a \tpss{} over a strictly smaller ground
  set, so from the induction hypothesis we infer that
  $|\mathcal{F}'|\leq |U\setminus X|+1$.  Moreover, from the
  observation of the previous paragraph we infer that sets $Y\setminus
  X$ are pairwise different for $Y \in \mathcal{F}, Y\neq \emptyset$,
  and hence $|\mathcal{F}|\leq |\mathcal{F}'|+1$ (the additive $+1$
  comes from possibly having the empty set in $\mathcal{F}$).
  Concluding,
  \[
  |\mathcal{F}| \leq |\mathcal{F}'| + 1 \leq |U\setminus X|+1+1 \leq |U|-1+1+1=|U|+1.
  \]
\end{proof}

\section{A Kernel for \pname{Trivially Perfect Editing}}
\label{sec:kernel-editing}

This section is devoted to the proof of
Theorem~\ref{thm:tpe-polykernel-intro}, stating that \tpe{} admits a
proper kernel with $O(k^7)$ vertices.  As usual, the kernelization
algorithm will be given as a sequence of \emph{data reduction rules}:
simple preprocessing procedures that, if applicable, simplify the
instance at hand.  For each rule we shall prove two results: (a) that
applicability of the rule can be recognized in polynomial time, and
(b) that the rule is \emph{safe}, i.e., the resulting instance is
equivalent to the input one.  At the end of the proof we will argue
that if no rule is applicable, then the size of the instance must be
bounded by~$O(k^7)$.  Some rules will decrement the budget~$k$ for
edge edits; If this budget drops below zero, we may conclude that we
are dealing with a no-instance, so we immediately terminate the
algorithm and provide a constant-size trivial no-instance as the
obtained kernel, for example the instance $(C_4, 0)$.

Before starting the formal description, let us give a brief overview
of the structure of the proof.  In Section~\ref{sec:basic-rules} we
give some preliminary basic rules, which mostly deal with situations
where we can find a large number of induced~$C_4$s and~$P_4$s in the
graph (henceforth called \emph{obstacles}), which share only one edge
or non-edge.  We then infer that this edge or non-edge has to be
included in any editing set of size at most~$k$, and hence we can
perform the necessary edit and decrement the budget.

In Section~\ref{sec:modulator} we perform a greedy algorithm that
iteratively packs disjoint induced~$C_4$s and~$P_4$s in the graph.
Note that if we are able to pack more than~$k$ of them, then this
certifies that the considered instance does not have a solution, and
we can terminate the algorithm.  Hence, if~$X$ is the union of vertex
sets of the packed obstacles, then~$|X|\leq 4k$ and~$G-X$ is a
trivially perfect graph.  Uncovering such a set~$X$, which we call a
\modulator{}, imposes a lot of structure on the considered instance,
and is the key for further analysis of irrelevant parts of the input.

Although the applied modulator technique is standard in the area of
kernelization for graph modification problems, in this paper we
introduce a new twist to it that may have possible further
applications.  Namely, we observe that since we consider edge editing
problems, the packed obstacles do not have to be entirely
vertex-disjoint, but the next obstacle can be packed even if it shares
one vertex with the union of vertex sets of the previous obstacles; In
some limited cases even having two vertices in common is permitted.
Thus, the obtained modulator~$X$ has the property that not only is
there no obstacle in the graph~$G$ that is vertex-disjoint with~$X$,
but even the existence of obstacles sharing one vertex with~$X$ is
forbidden.  This simple observation enables us to reason about the
adjacency structure between~$X$ and $V(G)\setminus X$.  In
Section~\ref{sec:modulator-nei} we analyze this structure in order to
prove the most important technical result of the proof: The number of
subsets of~$X$ that are neighborhoods within~$X$ of vertices from
$V(G)\setminus X$ is bounded polynomially in~$k$; see
Lemma~\ref{lem:bounded-x-neighborhoods}.

In Section~\ref{sec:impbags} we proceed to analyze the trivially
perfect graph $G-X$.  Having the polynomial bound on the number of
neighborhoods within $X$, we can locate in the UCD of $G-X$ a
polynomial (in~$k$) number of \emph{important bags}, where something
interesting from the point of view of $X$-neighborhoods happens.  The
parts between the important bags have very simple structure.  They are
either \emph{tassels}: sets of trees hanging below some important bag,
where each such tree is a module in the whole graph $G$; or
\emph{combs}: long paths stretched between two important bags where
all the vertices of subtrees attached to the path have exactly the
same neighborhood in $X$.  Tassels and combs are treated differently:
Large tassels contain large trivially perfect modules in $G$ that can
be reduced quite easily, however for combs we need to devise a quite
complicated irrelevant vertex rule that locates a vertex that can be
safely discarded in a long comb.  The module reduction rules are
described in Section~\ref{sec:module-rules}, while in
Section~\ref{sec:kernel-final} we reduce the sizes of tassels and
combs and conclude the proof.


\subsection{Basic rules}\label{sec:basic-rules}

In this section we introduce the first two basic reduction rules.  In
the argumentation of the next sections, we will assume that none of
these rules is applicable.  An instance satisfying this property will
be called \emph{reduced}.

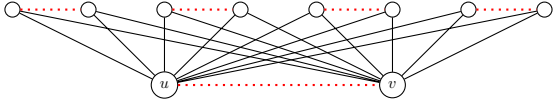
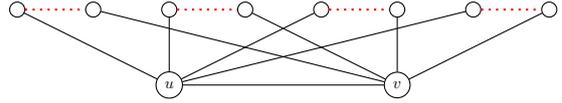
\begin{figure}[t]
  \centering
  \begin{subfigure}[t]{0.45\textwidth}
    \centering
    \begin{tikzpicture}[every node/.style={circle, draw, scale=.6},
      scale=1]
      \node (u) at (2,0) {$u$};
      \node (v) at (5,0) {$v$};

      \node (a1) at (0,1) {};
      \node (a2) at (1,1) {};

      \node (b1) at (2,1) {};
      \node (b2) at (3,1) {};

      \node (c1) at (4,1) {};
      \node (c2) at (5,1) {};

      \node (d1) at (6,1) {};
      \node (d2) at (7,1) {};

      \draw[color=red, dotted, thick] (u) -- (v);
      
      \draw (u) -- (a1) -- (v) -- (a2) -- (u);
      \draw[color=red, dotted, thick] (a1) -- (a2);
      
      \draw (u) -- (b1) -- (v) -- (b2) -- (u);
      \draw[color=red, dotted, thick] (b1) -- (b2);

      \draw (u) -- (c1) -- (v) -- (c2) -- (u);
      \draw[color=red, dotted, thick] (c1) -- (c2);

      \draw (u) -- (d1) -- (v) -- (d2) -- (u);
      \draw[color=red, dotted, thick] (d1) -- (d2);
    \end{tikzpicture}
    \caption{Rule~\ref{rule:c4}: There are four $C_4$s sharing only the
      vertices $u$ and $v$. Unless the edge $uv$ is added, we must use
      at least as many edits as the size of the non-matching.}
  \end{subfigure}
  \hspace{.05\textwidth}
  \begin{subfigure}[t]{0.45\textwidth}
    \centering
    \begin{tikzpicture}[every node/.style={circle, draw, scale=.6},
      scale=1]
      \node (u) at (2,0) {$u$};
      \node (v) at (5,0) {$v$};

      \node (a1) at (0,1) {};
      \node (a2) at (1,1) {};

      \node (b1) at (2,1) {};
      \node (b2) at (3,1) {};

      \node (c1) at (4,1) {};
      \node (c2) at (5,1) {};

      \node (d1) at (6,1) {};
      \node (d2) at (7,1) {};

      \draw (u) -- (v);
      
      \draw (u) -- (a1);
      \draw (u) -- (b1);
      \draw (u) -- (c1);
      \draw (u) -- (d1);

      \draw (v) -- (a2);
      \draw (v) -- (b2);
      \draw (v) -- (c2);
      \draw (v) -- (d2);
      
      \draw[color=red, dotted, thick] (a1) -- (a2);
      \draw[color=red, dotted, thick] (b1) -- (b2);
      \draw[color=red, dotted, thick] (c1) -- (c2);
      \draw[color=red, dotted, thick] (d1) -- (d2);
    \end{tikzpicture}
    \caption{Rule~\ref{rule:p4}: There are four $P_4$s sharing only the
      vertices $u$ and $v$.  Unless the edge $uv$ is deleted, we must
      use at least as many edits as the size of the non-matching.}
  \end{subfigure}
  \caption{Illustrations of Rules~\ref{rule:c4} and~\ref{rule:p4}.  The
    red dotted edges are non-edges; They form a matching in the
    complement graph.  In each of the cases, the only common vertices
    are $u$ and $v$.}
  \label{fig:rules-1-2}
\end{figure}

\begin{redrule}
  \label{rule:c4}
  For an instance $(G,k)$ with $uv \notin E(G)$, if there is a matching of
  size at least $k+1$ in $\overline{G[N(u) \cap N(v)]}$, then add edge
  $uv$ to~$G$ and decrease~$k$ by one, i.e., return the new instance
  $(G+uv, k-1)$.
\end{redrule}
\begin{redrule}
  \label{rule:p4}
  For an instance $(G,k)$ with $uv \in E(G)$ and $N_1 = N(u) \setminus N[v]$ and
  $N_2 = N(v) \setminus N[u]$, if there is a matching in
  $\overline{G}$ between $N_1$ and $N_2$ of size at least $k+1$, then
  delete edge $uv$ from $G$ and decrease~$k$ by one, i.e., return the
  new instance $(G-uv,k-1)$.
\end{redrule}

\begin{lemma}
  \label{lem:basic-rules-sound}
  Applicability of Rules~\ref{rule:c4} and~\ref{rule:p4} can be
  recognized in polynomial time.  Moreover, both these rules are safe,
  i.e., the input instance $(G,k)$ is a yes-instance if and only if
  the output instance $(G',k-1)$ is a yes-instance.
\end{lemma}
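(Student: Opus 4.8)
The plan is to treat both rules simultaneously, since Rule~\ref{rule:p4} is the ``edge'' mirror of the ``non-edge'' Rule~\ref{rule:c4}. For the polynomial-time recognition I would, for Rule~\ref{rule:c4}, loop over all non-adjacent pairs $\{u,v\}$, form $W=N(u)\cap N(v)$, and run a polynomial-time maximum matching algorithm on $\overline{G[W]}$; the rule applies at $\{u,v\}$ exactly when this matching has size at least $k+1$. For Rule~\ref{rule:p4} I would loop over all edges $uv$, form $N_1=N(u)\setminus N[v]$ and $N_2=N(v)\setminus N[u]$, and run a bipartite maximum matching algorithm on the parts $N_1,N_2$ with the non-edges of $G$ between them as the edge set. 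Each outer loop has $O(n^2)$ iterations and each iteration is polynomial, so the whole test is polynomial.

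For safeness of Rule~\ref{rule:c4}, suppose the rule fires at $\{u,v\}$, witnessed by disjoint pairs $\{a_i,b_i\}$ ($1\le i\le k+1$) with $a_i,b_i\in N(u)\cap N(v)$ and $a_ib_i\notin E(G)$. Then for each $i$ the set $\{u,a_i,v,b_i\}$ induces a $C_4$ in $G$ (edges $ua_i,a_iv,vb_i,b_iu$; non-edges $uv$ and $a_ib_i$), and these $k+1$ copies pairwise intersect only in $\{u,v\}$. Consequently the ``private'' families $P_i:=\{ua_i,ub_i,a_iv,b_iv,a_ib_i\}\subseteq\binom{V(G)}{2}$ are pairwise disjoint. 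The crucial claim is then: every editing set $F$ with $uv\notin F$ has $|F|\ge k+1$. Indeed, for each $i$ the set $F$ must contain some pair from $\binom{\{u,a_i,v,b_i\}}{2}$, as otherwise the induced $C_4$ on those four vertices would persist in $G\triangle F$; since $uv\notin F$, this pair lies in $P_i$, and disjointness of the $P_i$ yields the bound. Hence every editing set of size at most $k$ contains $uv$. Both directions of the equivalence now follow: a size-$\le k$ editing set $F$ for $G$ must contain $uv$, so $F\setminus\{uv\}$ edits $G+uv$ with at most $k-1$ edits; and conversely, given a size-$\le k-1$ editing set $F'$ for $G+uv$, the set $F'\triangle\{uv\}$ edits $G$ with at most $k$ edits, using $G+uv=G\triangle\{uv\}$ together with associativity and commutativity of $\triangle$.

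For safeness of Rule~\ref{rule:p4} I would run the symmetric argument: a witnessing matching $\{a_ib_i\}$ with $a_i\in N_1$, $b_i\in N_2$, $a_ib_i\notin E(G)$ gives, for each $i$, an induced $P_4$ on $a_i-u-v-b_i$ (the non-edges $a_iv$ and $ub_i$ come from the definitions of $N_1$ and $N_2$, and $a_ib_i$ from the matching; the four vertices are distinct since $N_1\cap N_2=\emptyset$). These $P_4$s pairwise meet only in $\{u,v\}$, so the private families $\{a_iu,a_iv,a_ib_i,ub_i,vb_i\}$ are pairwise disjoint; any editing set not deleting $uv$ spends one edit inside each, hence has size $\ge k+1$. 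Therefore every editing set of size at most $k$ deletes $uv$, and the two equivalences follow exactly as before with $G-uv=G\triangle\{uv\}$ replacing $G+uv$.

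The part that needs the most care --- and is really the only substantive point --- is the ``persistence'' step: one must argue cleanly that an induced $C_4$ (resp.\ $P_4$) on a fixed $4$-element vertex set can be destroyed only by editing one of its six internal pairs, and then verify that after removing the forbidden pair $uv$ the five remaining pairs are precisely the disjoint private family $P_i$. Everything else is routine manipulation of symmetric differences together with a standard disjoint-packing counting argument.
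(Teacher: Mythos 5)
Your proposal is correct and follows essentially the same route as the paper: both recognize applicability by iterating over all (non-)edges and computing a maximum matching in an auxiliary graph, and both establish safeness by exhibiting $k+1$ obstructions that pairwise share only the pair $uv$, forcing any editing set of size at most $k$ to edit $uv$. Your write-up merely makes explicit the disjoint private families $P_i$ and the symmetric-difference bookkeeping for the two directions of the equivalence, which the paper leaves implicit.
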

\begin{proof}
  Observe that verifying applicability of Rule~\ref{rule:c4} or
  \ref{rule:p4} to a fixed (non-)edge~$uv$ boils down to computing the
  cardinality of the maximum matching in an auxiliary graph.  This
  problem is well-known to be solvable in polynomial
  time~\cite{edmonds1965paths}.  Thus, by iterating over all edges and
  non-edges of~$G$ we obtain polynomial time algorithms for
  recognizing applicability of Rules~\ref{rule:c4} and~\ref{rule:p4}.
  We proceed to the proof of the safeness for both rules.
  
  \textit{Rule~\ref{rule:c4}}: Let $x_0y_0,x_1y_1,\ldots,x_{k}y_{k}$
  be edges of the found matching in $\overline{G[N(u) \cap N(v)]}$.
  Observe that for each $i$, $0\leq i\leq k$, vertices $u,x_i,v,y_i$
  induce a~$C_4$ in~$G$.  These induced~$C_4$s share only the
  non-edge~$uv$, hence any editing set that does not contain~$uv$ must
  contain at least one element of $\binom{\{u,x_i,v,y_i\}}{2}\setminus
  \{uv\}$, and consequently be of size at least~$k+1$.  We infer that
  every editing set for~$G$ that has size at most~$k$ has to include
  the edge~$uv$, and the safeness of the rule follows.


  \textit{Rule~\ref{rule:p4}}: We proceed similarly as for
  Rule~\ref{rule:c4}.  Suppose $x_0y_0,x_1y_1,\ldots,x_{k}y_{k}$ is
  the found matching in~$\overline{G}$, where $x_i\in N_1$ and $y_i\in
  N_2$ for $0\leq i\leq k$.  Then vertices $x_i,u,v,y_i$ induce
  a~$P_4$, and all these~$P_4$s for $0\leq i\leq k$ pairwise share
  only the edge~$uv$.  Similarly as for Rule~\ref{rule:c4}, we
  conclude that every editing set for~$G$ of size at most~$k$ has to
  contain~$uv$, and the safeness of the rule follows.
\end{proof}



We can now use Lemma~\ref{lem:basic-rules-sound} to apply
Rules~\ref{rule:c4} and~\ref{rule:p4} exhaustively; note that each
application reduces the budget $k$, hence at most $k$ applications can
be performed before discarding the instance as a no-instance.  From
now on, we assume that the considered instance $(G,k)$ is reduced.

\subsection{Modulator construction}\label{sec:modulator}
We now move to the construction of a small modulator whose
\textit{raison d'être} is to expose structure in the considered
graph~$G$.  We say that a subset $W\subseteq V(G)$ with $|W|=4$ is an
\emph{obstruction} if $G[W]$ is isomorphic to a $C_4$ or a $P_4$.
Formally, our modulator will be compliant to the following definition.

\begin{definition}[\modulator]\label{def:modulator}
  Let~$(G,k)$ be an instance of \TPE.  A subset~$X\subseteq V(G)$ is a
  \emph{\modulator{}} if for every obstruction $W$ the following holds
  (see Figure~\ref{fig:forbidden-modulator}):
  \begin{itemize}
  \item $|W\cap X|\geq 2$, and
  \item if $|W\cap X|=2$, then it cannot happen that $G[W]$ is a $C_4$ of
    the form $x_1-y_1-y_2-x_2-x_1$ or a $P_4$ of the form
    $x_1-y_1-y_2-x_2$, where $W\cap X=\{x_1,x_2\}$.
  \end{itemize}
  We call a \modulator{} $X$ {\emph{small}} if $|X|\leq 4k$.
\end{definition}

In particular, observe that for a \modulator{} $X$ there is no
obstacle disjoint with $X$, so $G-X$ is trivially perfect.  The
following result shows that from now we can assume that a small
\modulator{} is given to us.

\begin{lemma}
  \label{lem:polytime-modulator}
  Given an instance $(G,k)$ for \TPE{}, we can in polynomial time
  construct a small \modulator{} $X\subseteq V(G)$, or correctly
  conclude that $(G,k)$ is a no-instance.
\end{lemma}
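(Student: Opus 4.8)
The plan is to build the modulator $X$ greedily by repeatedly finding an obstruction and adding (most of) its vertices to $X$, while being careful about the ``only one shared vertex'' refinement in Definition~\ref{def:modulator}. Concretely, I would maintain a set $X$, initially empty, and a counter of how many ``charges'' have been spent; at each step I look for an obstruction $W$ violating the \modulator{} condition with respect to the current $X$. If $|W \cap X| \le 1$, I add all of $V(W)$ to $X$. The more delicate case is $|W \cap X| = 2$ with $G[W]$ of the forbidden shape $x_1 - y_1 - y_2 - x_2$ (path) or $x_1 - y_1 - y_2 - x_2 - x_1$ (cycle) where $W \cap X = \{x_1, x_2\}$: here I add the two ``new'' vertices $y_1, y_2$ to $X$. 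In either case $X$ grows by at most $2$ vertices (in the $|W\cap X|\le 1$ case by at most $4$), and I want to argue that each such step can be charged to at least one edit that any size-$k$ editing set must perform.

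The key claim to establish is the safeness/termination bound: if the greedy process performs $t$ augmentation steps, then every editing set $F$ for $G$ has $|F| \ge t$, from which $t \le k$ (else we output a trivial \noinstance) and $|X| \le 4k$. For this I want to exhibit, for the obstruction $W_i$ found in step $i$, an edge or non-edge $e_i \in \binom{V(W_i)}{2}$ that $F$ must contain, chosen so that the $e_i$ are pairwise distinct. In the $|W_i \cap X| \le 1$ case, $F$ must edit some pair inside $W_i$, and since at least three vertices of $W_i$ are fresh (not in the previous $X$), we can pick $e_i$ to involve a fresh vertex, guaranteeing it differs from all earlier $e_j$ (each $e_j$ lies inside the vertex set known at step $j < i$). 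In the $|W_i \cap X| = 2$ forbidden-shape case, the point of excluding exactly that configuration is that $x_1, x_2$ are non-adjacent in both the $P_4$ $x_1-y_1-y_2-x_2$ and the $C_4$ $x_1-y_1-y_2-x_2-x_1$, and because the instance is \emph{reduced} (Rules~\ref{rule:c4} and~\ref{rule:p4} are inapplicable), the pair $x_1 x_2$ being the ``wrong'' pair means $F$ cannot simply fix all these obstructions by editing $x_1 x_2$ alone unless $F$ is large; but more simply, since $y_1, y_2 \notin X_{i-1}$, any pair of $\binom{V(W_i)}{2}$ that $F$ uses and that touches $y_1$ or $y_2$ serves as a fresh $e_i$, and the only pair avoiding both $y_1,y_2$ is $x_1 x_2$ itself — so I need the forbidden-shape restriction precisely to rule out that $F$ gets away with editing only $x_1x_2$. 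Here I would invoke that the instance is reduced: if $x_1x_2$ were edited to repair the obstruction, one checks that repairing it together with the structure forced by reducedness still leaves a fresh edited pair, or alternatively re-run the analysis of Lemma~\ref{lem:basic-rules-sound} locally.

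Let me restate the cleanest route for the charging argument, since that is the crux. I would prove the invariant: after step $i$, for the obstruction $W_i$ processed, and for \emph{any} editing set $F$ of $G$, the set $F \cap \binom{V(W_i)}{2}$ contains a pair not of the form $x_1 x_2$ when we are in the forbidden-shape case. In the $P_4$ case $x_1 - y_1 - y_2 - x_2$: if $F$ contains only $x_1 x_2$ from inside $W_i$, then $G \triangle F$ restricted to $\{x_1,y_1,y_2,x_2\}$ is $x_1-y_1-y_2-x_2$ plus the edge $x_1x_2$, i.e.\ a $C_4$ — contradiction; so $F$ must contain some other pair, necessarily touching $y_1$ or $y_2$, hence fresh. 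In the $C_4$ case $x_1-y_1-y_2-x_2-x_1$: editing only $x_1x_2$ turns it into the $P_4$ $y_1-x_1-x_2-y_2$ (a $P_4$ on $\{y_1,x_1,x_2,y_2\}$), again forbidden; so again a fresh pair is used. This makes the $e_i$'s injective into $F$, giving $t \le |F| \le k$. The main obstacle I anticipate is exactly this bookkeeping: making the choice of $e_i$ deterministic enough that the injectivity into an arbitrary $F$ is airtight, and double-checking the edge cases where $W_i$ shares a vertex with $X_{i-1}$ (so only ``at least one'' fresh vertex is guaranteed in general, but in each of our cases we actually get at least two fresh vertices $y_1, y_2$, or three in the $|W_i\cap X|\le 1$ case, which is what saves the argument). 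Running time is clearly polynomial, since each step searches for an obstruction among $O(n^4)$ quadruples and there are at most $k+1$ steps before we terminate.
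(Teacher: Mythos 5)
Your proposal is correct and follows essentially the same route as the paper: the same greedy packing of obstructions, the same $4k$ size bound, and the same charging argument whose crux is that in the forbidden-shape case editing only $x_1x_2$ merely swaps a $C_4$ for a $P_4$ on the same four vertices, so any editing set must touch a pair involving $y_1$ or $y_2$ (the paper phrases this as the induction $|F\cap\binom{X_i}{2}|\geq i$). Two cosmetic remarks: your appeal to the instance being reduced is unnecessary (the local swap argument suffices, as your own ``cleanest route'' paragraph shows, and indeed the paper's lemma does not assume reducedness), and in the $C_4$ case the surviving path is $x_1-y_1-y_2-x_2$, not $y_1-x_1-x_2-y_2$.
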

\begin{proof}
  The algorithm starts with $X_0=\emptyset$, and iteratively constructs an
  increasing family of sets $X_0\subseteq X_1\subseteq X_2\subseteq
  \ldots$.  In the~$i$th iteration we look for an obstacle~$W$ that
  contradicts the fact that~$X_{i-1}$ is a \modulator{} according to
  Definition~\ref{def:modulator}, by verifying all the quadruples of
  vertices in~$O(n^4)$ time.  If this check verifies that~$X_{i-1}$ is
  a \modulator, then we terminate the algorithm and output $X =
  X_{i-1}$.  Otherwise, we set $X_{i} = X_{i-1}\cup W$ and proceed to
  the next iteration.  Moreover, if we performed~$k+1$ iterations,
  i.e., successfully constructed set~$X_{k+1}$, then we terminate the
  algorithm concluding that~$(G,k)$ is a no-instance.  Since in each
  iteration the next~$X_i$ grows by at most~$4$ vertices, we infer
  that if we succeed in outputting a \modulator{}~$X$, then it has
  size at most~$4k$.
  
  We are left with proving that if the algorithm successfully
  constructed $X_{k+1}$, then $(G,k)$ is a no-instance.  To this end,
  we prove by induction on $i$ that for every $i=0,1,\ldots,k+1$ and
  every editing set $F$ for $G$, it holds that $|F\cap
  \binom{X_i}{2}|\geq i$.  Indeed, from this statement for $i=k+1$ we
  can infer that every editing set for $G$ has size at least $k+1$, so
  $(G,k)$ is a no-instance.  The base of the induction is trivial, so
  for the induction step suppose that $X_i=X_{i-1}\cup W$, where $W$
  is an obstacle with $|W\cap X_{i-1}|\leq 1$ or having the form
  described in the second point of Definition~\ref{def:modulator}.
  
  First, if $|W\cap X_{i-1}|\leq 1$, then $\binom{W}{2}$ is disjoint with
  $\binom{X_{i-1}}{2}$.  Since $F$ is an editing set for $G$, we have
  that $F\cap \binom{W}{2}\neq \emptyset$, and hence
  \[
  \left|F \cap \binom{X_i}{2}\right|
  \geq \left|F \cap \binom{X_{i-1}}{2}\right| + \left|F\cap \binom{W}{2}\right|
  \geq i-1+1=i , 
  \]
  by the induction hypothesis.  Second, if $|W\cap X_{i-1}|=2$ and $W$
  has one of the two forms described in the second point of
  Definition~\ref{def:modulator}, then it is easy to see that $F$ in
  fact has to have a nonempty intersection with $\binom{W}{2}\setminus
  \{x_1x_2\}$: editing only the (non)edge $x_1x_2$ would turn a $C_4$
  into a $P_4$ or vice versa.  Since $\binom{W}{2}\setminus
  \{x_1x_2\}$ is disjoint with $\binom{X_{i-1}}{2}$, we analogously
  obtain that
  \[
  \left|F \cap \binom{X_i}{2}\right| \geq \left|F \cap
    \binom{X_{i-1}}{2}\right| + \left|F \cap
    \left(\binom{W}{2}\setminus \{x_1x_2\}\right)\right| \geq i-1+1=i
  .
  \]
\end{proof}

By applying Lemma~\ref{lem:polytime-modulator}, from now on we assume
that we are given a small \modulator{} $X$ in $G$.

\subsection{Bounding the number of neighborhoods in a \modulator{}}
\label{sec:modulator-nei}

Recall that we exposed a small \modulator{}~$X$ in the input
graph~$G$.  In polynomial time we compute the universal clique
decomposition $\mathcal{T} = (T,\mathcal{B})$ of the trivially perfect
graph~$G-X$.  The goal of this section is to analyze the structure of
neighborhoods within~$X$ of vertices residing outside~$X$.

\begin{definition}[$X$-neighborhood]
  Let $G$ be a graph and $X \subseteq V(G)$.  For a vertex $v \in V(G) \setminus
  X$, the \emph{$X$-neighborhood} of $v$, denoted $N^X_G(v)$, is the
  set $N_G(v) \cap X$.  The family of $X$-neighborhoods of $G$ is the
  set $\{N^X_G(v)\ \colon\ v \in V(G) \setminus X\}$.
  
\end{definition}

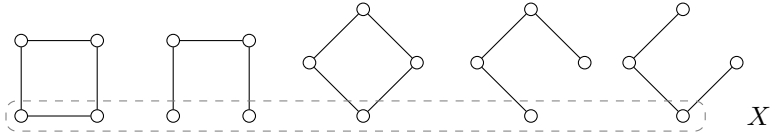
\begin{figure}[t]
  \centering
  \begin{tikzpicture}[every node/.style={circle, draw, scale=.5},
    scale=1]
    
    \def\hs{0.7071} 
    
    \node (a1) at (0,0) {};
    \node (a2) at (1,0) {};
    \node (a3) at (0,1) {};
    \node (a4) at (1,1) {};
    
    \draw (a1) -- (a2);
    \draw (a2) -- (a4);
    \draw (a3) -- (a4);
    \draw (a1) -- (a3);

    \node (b1) at (2,0) {};
    \node (b2) at (3,0) {};
    \node (b3) at (2,1) {};
    \node (b4) at (3,1) {};
    
    \draw (b1) -- (b3);
    \draw (b4) -- (b3);
    \draw (b2) -- (b4);
    
    \node (c1) at (4.5,\hs+\hs) {};
    \node (c2) at (4.5-\hs,\hs) {};
    \node (c3) at (4.5,0) {};
    \node (c4) at (4.5+\hs,\hs) {};
    
    \draw (c1) -- (c2);
    \draw (c2) -- (c3);
    \draw (c3) -- (c4);
    \draw (c4) -- (c1);
    
    \node (d1) at (6+\hs,0) {};
    \node (d2) at (6,\hs) {};
    \node (d3) at (6+\hs,\hs+\hs) {};
    \node (d4) at (6+\hs+\hs,\hs) {};
    
    \draw (d1) -- (d2);
    \draw (d2) -- (d3);
    \draw (d3) -- (d4);

    \node (e1) at (8+\hs,0) {};
    \node (e2) at (8,\hs) {};
    \node (e3) at (8+\hs,\hs+\hs) {};
    \node (e4) at (8+\hs+\hs,\hs) {};
    
    \draw (e1) -- (e2);
    \draw (e2) -- (e3);
    \draw (e1) -- (e4);
    
    \draw[dashed, gray, rounded corners] (-.2, -.2) rectangle (9, .2);
    \node[draw=none,scale=2] (x) at (9+\hs,0) {$X$};
    
  \end{tikzpicture}
  \caption{Forbidden patterns of intersection between an obstruction and
    a \modulator{} $X$.}
  \label{fig:forbidden-modulator}
\end{figure}


Again, we shall omit the subscript~$G$ whenever this does not lead to
any confusion.  Recall that the UCD~$\mathcal{T}$ gives us a
quasi-ordering~$\preceq$ on the vertices of~$G-X$.
We have~$u \preceq v$ if the bag to which~$v$ belong is a descendant
of the bag which~$u$ belongs to, where every bag is considered its own
descendant.
We shall use the notation~$u \prec v$ to denote that~$u\preceq v$
and~$v\npreceq u$.  The following two lemmas show that the
quasi-ordering~$\preceq$ is compatible with the inclusion ordering of
$X$-neighborhoods.

\begin{lemma}\label{lem:X-nei-inclusion-diff}
  If $u \prec v$ then $N^X(u) \supseteq N^X(v)$.
\end{lemma}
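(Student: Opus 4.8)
If $u \prec v$ then $N^X(u) \supseteq N^X(v)$.

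\textbf{Proof plan.} I would prove the contrapositive-flavoured statement by contradiction: assume $u \prec v$ and yet there is some $x \in N^X(v) \setminus N^X(u)$, i.e. $x \in X$ with $xv \in E(G)$ but $xu \notin E(G)$, and derive an obstruction that violates the \modulator{} property of $X$. The whole argument takes place in the UCD $\mathcal{T}=(T,\mathcal{B})$ of the trivially perfect graph $G-X$, of which $u$ and $v$ are vertices.

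First I would record two structural consequences of $u \prec v$. Let $B_t$ be the bag containing $u$ and $B_s$ the bag containing $v$; by definition of $\preceq$ (with distinct bags), $t$ is a strict ancestor of $s$ in $T$. Since $B_t$ is the universal clique of $G[\bigcup_{s'\in V(T_t)} B_{s'}]$, a graph that contains $v$, we get $uv \in E(G)$. Moreover $t$ is a non-leaf node, hence (by the standing remark that every internal node of a UCD has at least two children) it has a child $t'$ whose subtree contains $s$, and some other child $t'' \neq t'$. I would then pick an arbitrary vertex $w \in B_{t''}$. Again because $B_t$ is universal in the subtree rooted at $t$ we have $uw \in E(G)$; on the other hand the bags of $v$ and $w$ lie in the disjoint subtrees $T_{t'}$ and $T_{t''}$, so neither is an ancestor of the other and they are distinct, whence the first UCD axiom forces $vw \notin E(G)$. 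The four vertices $u,v,w,x$ are pairwise distinct ($u,v,w$ lie in distinct bags, and $x \in X$ while $u,v,w \notin X$).

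Now I would finish with a two-case split on the single undetermined adjacency $xw$. If $xw \notin E(G)$, then on $\{w,u,v,x\}$ the only edges are $wu$, $uv$, $vx$, so $G[\{w,u,v,x\}]$ is the path $w-u-v-x$, a $P_4$. If $xw \in E(G)$, then on $\{u,v,x,w\}$ the edges are exactly $uv$, $vx$, $xw$, $wu$ with both diagonals $ux$, $vw$ absent, so $G[\{u,v,x,w\}]$ is the cycle $u-v-x-w-u$, a $C_4$. In either case $W=\{u,v,w,x\}$ is an obstruction with $W \cap X = \{x\}$, so $|W\cap X| = 1$, contradicting the fact that $X$ is a \modulator{}. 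Therefore no such $x$ exists, i.e. $N^X(u) \supseteq N^X(v)$.

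I do not expect a genuine obstacle here; the only points needing a little care are the existence of the "sibling" child $t''$ — which is exactly the reason the UCD is required to have every internal node of degree at least two — and checking that $u,v,w,x$ are four distinct vertices, both of which are immediate from the partition structure of $\mathcal{B}$ and from $w \notin X$. Once the vertex $w$ is produced, the obstruction falls out mechanically, and the definition of a \modulator{} does the rest.
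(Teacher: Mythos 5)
Your proof is correct and matches the paper's argument essentially verbatim: both pick a vertex $w$ in a bag of a sibling subtree (guaranteed to exist because every internal UCD node has at least two children), establish $uv,uw\in E(G)$ and $vw\notin E(G)$, and then observe that $\{x,u,v,w\}$ is an induced $P_4$ or $C_4$ meeting $X$ in only one vertex, contradicting the \modulator{} property. No gaps.
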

\begin{proof}
  Suppose $u\in B_t$ and $v\in B_s$, where $t\neq s$ and $t$ is an ancestor
  of $s$ in the forest $T$.  Recall that in a UCD, every non-leaf node
  has at least two children, which means that there exists some node
  $s'$ that is a descendant of $t$, but which is incomparable with
  $s$.  Let $w$ be any vertex of $B_{s'}$.  From the definition of a
  UCD it follows that $uv,uw\in E(G)$ but $vw\notin E(G)$.
  
  For the sake of contradiction suppose that $N^X(u)\not\supseteq
  N^X(v)$, which means there exists a vertex~$x\in X$ with~$xv\in
  E(G)$ and~$xu\notin E(G)$.  It follows that~$\{x,u,v,w\}$ is an
  obstacle regardless of whether~$wx$ is an edge or a non-edge: it is
  an induced~$C_4$ if~$wx\in E(G)$ and an induced~$P_4$ if~$wx\notin
  E(G)$.  Thus we have uncovered an obstacle sharing only one vertex
  with~$X$, contradicting the fact that~$X$ is a \modulator.
\end{proof}

\begin{lemma}\label{lem:X-nei-inclusion-same}
  If $u,v\in B_t$ for some $B_t\in \mathcal{B}$, then $N^X(u)\subseteq N^X(v)$ or
  $N^X(v)\subseteq N^X(u)$.
\end{lemma}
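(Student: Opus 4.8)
The plan is to exploit the fact that vertices sharing a bag of the UCD are true twins in $G-X$, together with the second bullet of Definition~\ref{def:modulator}. Concretely, since $B_t$ is by definition the universal clique of $G[\bigcup_{s\in V(T_t)}B_s]$, any two vertices of $B_t$ are adjacent in $G-X$, hence $uv\in E(G)$; moreover every vertex of $B_t$ has the same neighbours inside that induced subgraph. We never actually need the full twin property, only that $uv\in E(G)$. Suppose, for contradiction, that the conclusion fails, i.e.\ neither $N^X(u)\subseteq N^X(v)$ nor $N^X(v)\subseteq N^X(u)$.

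Then we may pick $x_1\in N^X(u)\setminus N^X(v)$ and $x_2\in N^X(v)\setminus N^X(u)$. Note $x_1\neq x_2$ (the former lies outside $N^X(v)$, the latter inside it), and $x_1,x_2\in X$ while $u,v\notin X$, so $W=\{x_1,u,v,x_2\}$ is a set of four distinct vertices with $W\cap X=\{x_1,x_2\}$. Inside $W$ the following pairs are forced: $uv\in E(G)$, $x_1u\in E(G)$, $x_2v\in E(G)$, $x_1v\notin E(G)$, $x_2u\notin E(G)$; the only undetermined pair is $x_1x_2$.

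The final step is a case distinction on $x_1x_2$. If $x_1x_2\notin E(G)$, then $G[W]$ is the induced path $x_1-u-v-x_2$, a $P_4$ of the form $x_1-y_1-y_2-x_2$ with $\{y_1,y_2\}=\{u,v\}$; if $x_1x_2\in E(G)$, then $G[W]$ is the induced cycle $x_1-u-v-x_2-x_1$, a $C_4$ of the form $x_1-y_1-y_2-x_2-x_1$. In either case $W$ is an obstruction with $|W\cap X|=2$ of exactly one of the two patterns excluded by the second bullet of Definition~\ref{def:modulator}, contradicting that $X$ is a \modulator. Hence the $X$-neighbourhoods of $u$ and $v$ are nested, as claimed. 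There is no genuine difficulty here: the only point requiring care is checking that $\{x_1,u,v,x_2\}$ really induces an obstruction of precisely the forbidden shape, which is immediate from the list of forced (non-)edges above.
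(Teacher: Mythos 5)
Your proposal is correct and follows exactly the paper's argument: both note that $uv\in E(G)$ because $B_t$ is a clique, pick $x_1\in N^X(u)\setminus N^X(v)$ and $x_2\in N^X(v)\setminus N^X(u)$, and observe that $\{x_1,u,v,x_2\}$ induces one of the two obstructions with $|W\cap X|=2$ forbidden by the second bullet of Definition~\ref{def:modulator}. You merely spell out the forced (non-)edges and the case distinction on $x_1x_2$ that the paper leaves as ``easily seen.''
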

\begin{proof}
  Since $u,v\in B_t$, we have that $uv\in E(G)$.  For the sake of
  contradiction, suppose that there exist some $x_u \in N^X(u)
  \setminus N^X(v)$ and $x_v \in N^X(v) \setminus N^X(u)$.  It can be
  now easily seen that regardless whether $x_ux_v$ belongs to $E(G)$
  or not, the quadruple $\{u,v,x_u,x_v\}$ forms one of the obstacles
  forbidden in the second point of the Definition~\ref{def:modulator}.
  This is a contradiction with the fact that $X$ is a \modulator.
\end{proof}

Lemmas~\ref{lem:X-nei-inclusion-diff}
and~\ref{lem:X-nei-inclusion-same} motivate the following refinement
of the quasi-ordering $\preceq$: If $u,v$ belong to different bags of
$\mathcal{T}$, then we put $u\preceq_N v$ if and only if $u\preceq v$,
and if they are in the same bag, then $u\preceq_N v$ if and only if
$N^X(u)\supseteq N^X(v)$.  Thus, by
Lemma~\ref{lem:X-nei-inclusion-same} $\preceq_N$ refines $\preceq$ by
possibly splitting every bag of $\mathcal{T}$ into a family of
linearly ordered equivalence classes.  Moreover, by
Lemmas~\ref{lem:X-nei-inclusion-diff}
and~\ref{lem:X-nei-inclusion-same} we have the following corollary.

\begin{corollary}\label{cor:preceqN}
  If $u\preceq_N v$ then $N^X(u)\supseteq N^X(v)$.
\end{corollary}

Observe that for a pair of vertices $u,v\in V(G)\setminus X$, the following
conditions are equivalent: (a)~$u$ and $v$ are comparable
w.r.t~$\preceq$, (b)~$u$ and $v$ are comparable w.r.t.~$\preceq_N$,
and (c)~$uv\in E(G)$.  We have now prepared all the tools needed to
prove the main lemma from this section.

\begin{lemma}
  \label{lem:bounded-x-neighborhoods}
  If $(G,k)$ is a reduced instance for \TPE{} and~$X$ is a small
  \modulator, then the number of different~$X$-neighborhoods is at
  most~$O(k^4)$.
\end{lemma}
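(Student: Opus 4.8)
The plan is to bound the number of distinct $X$-neighborhoods by combining two ingredients: the structure of the quasi-ordering $\preceq_N$, and a counting argument based on the fact that $X$ is a \modulator{} of size at most $4k$. First I would observe that since $X$ is a \modulator, there is no obstacle sharing just one vertex with $X$; in particular, for any two vertices $u,v \in V(G) \setminus X$ that are \emph{incomparable} w.r.t.\ $\preceq$ (equivalently, non-adjacent), their $X$-neighborhoods must be ``non-crossing'': if $x_1 \in N^X(u) \setminus N^X(v)$ and $x_2 \in N^X(v) \setminus N^X(u)$, then $\{u,v,x_1,x_2\}$ together with the (non)edge $x_1 x_2$ would form an induced $C_4$ or $P_4$ meeting $X$ in only two vertices $x_1, x_2$ that are \emph{not} the pair excluded by the second bullet of Definition~\ref{def:modulator} — actually I need to be careful here, since $u,v \notin X$, this obstacle meets $X$ in exactly $\{x_1,x_2\}$, and it has the forbidden shape. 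This shows that the family $\mathcal{F} = \{N^X_G(v) : v \in V(G)\setminus X\}$, restricted to $X$-neighborhoods of pairwise non-adjacent vertices, is a \tpss{} over the ground set $X$ in the sense of Definition~\ref{def:tpss}.

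Next, on a fixed maximal chain of the quasi-ordering $\preceq_N$, Corollary~\ref{cor:preceqN} tells us the $X$-neighborhoods form a chain under inclusion, $N^X(v_1) \supseteq N^X(v_2) \supseteq \cdots$. A chain of distinct subsets of $X$ has length at most $|X|+1 \le 4k+1$. So within any single root-to-leaf path of the forest $T$ (refined by $\preceq_N$ inside bags), at most $O(k)$ distinct $X$-neighborhoods appear. The remaining difficulty is to bound how many such ``branches'' we need to cover all vertices, and here is where the basic reduction rules (Rules~\ref{rule:c4} and~\ref{rule:p4}) and the \tpss{} bound of Lemma~\ref{lem:tpss-bounded} come in. I expect the key step to be: consider the set of $\preceq_N$-minimal elements among vertices realizing distinct nonempty $X$-neighborhoods — or more precisely, pick for each distinct $X$-neighborhood $A \in \mathcal{F}$ a representative $v_A$, and among these representatives consider an antichain w.r.t.\ $\preceq$. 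Two representatives in such an antichain are non-adjacent, so their $X$-neighborhoods are non-crossing, hence this sub-family is a \tpss{} and by Lemma~\ref{lem:tpss-bounded} has size at most $|X|+1 = O(k)$. By Dilworth / Mirsky-type reasoning, $\mathcal{F}$ is then covered by $O(k)$ chains, each of length $O(k)$, giving $O(k^2)$ — which is better than claimed, so I must be overcounting somewhere, or the true argument is more subtle.

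The honest picture, and the part I expect to be the main obstacle, is controlling the vertices that are \emph{incomparable} in the forest but whose $X$-neighborhoods are related by inclusion; the non-crossing property only constrains incomparable vertices, but nothing immediately forbids many distinct nested $X$-neighborhoods spread across incomparable bags. This is exactly where Rules~\ref{rule:c4} and~\ref{rule:p4} must do work: if many vertices in incomparable parts of $G-X$ have pairwise nested but distinct $X$-neighborhoods, then fixing two vertices $x,y \in X$ with $x \in N^X(v) \setminus N^X(w)$, $y \in N^X(v) \cap N^X(w)$ (or similar), one produces many obstacles sharing the pair $\{x,y\}$ or a single vertex, which either contradicts that $X$ is a \modulator{} or triggers a basic rule on a pair inside $X$ — but since $|X| \le 4k$, there are only $O(k^2)$ such pairs, and each pair can be ``charged'' by only $O(k)$ obstacles before a rule fires, for a total of $O(k^3)$; multiplying by the $O(k)$ factor from chain length in the worst case yields the stated $O(k^4)$. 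So concretely: I would partition $\mathcal{F}$ according to the $\preceq_N$-minimal bag in which each $X$-neighborhood first appears, use the \tpss{} bound to cap the number of ``fresh'' $X$-neighborhoods introduced at incomparable minimal bags, and use the chain bound plus a pair-charging argument (invoking reducedness) to bound the total. The main obstacle is making the charging scheme precise — deciding exactly which pair $\{x,y\} \subseteq X$ each ``new'' $X$-neighborhood is charged to so that no pair is overloaded beyond the threshold enforced by Rules~\ref{rule:c4} and~\ref{rule:p4}.
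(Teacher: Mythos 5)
There is a genuine gap at the very first step, and it propagates through the rest of the sketch. You claim that for two non-adjacent vertices $u,v\in V(G)\setminus X$ with $x_1\in N^X(u)\setminus N^X(v)$ and $x_2\in N^X(v)\setminus N^X(u)$, the set $\{u,v,x_1,x_2\}$ is an obstacle of the shape excluded by the second bullet of Definition~\ref{def:modulator}. It is not. If $x_1x_2\in E(G)$, the induced graph is the $P_4$ given by $u-x_1-x_2-v$, i.e., the two modulator vertices sit in the \emph{middle}; the forbidden pattern is $x_1-y_1-y_2-x_2$, with the modulator vertices at the \emph{ends}. If $x_1x_2\notin E(G)$, the four vertices induce a $2K_2$, which is not an obstacle at all. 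The paper's proof explicitly remarks on exactly this configuration (``the existence of such an obstacle is not forbidden by the definition of a \modulator''). Consequently your assertion that an antichain of representatives forms a \tpss{} is false, the Dilworth-type $O(k^2)$ bound built on it collapses, and the final paragraph, which you yourself flag as imprecise, never supplies a working charging scheme.

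For comparison, the paper splits the family $\mathcal{F}$ of $X$-neighborhoods into $\mathcal{F}_1$ (sets containing both endpoints of some non-edge of $G[X]$) and $\mathcal{F}_2$ (sets inducing cliques in $G[X]$). For $\mathcal{F}_1$ it fixes a non-edge $xy$ of $G[X]$ and a cardinality $\kappa$: sets of the same size are non-nested, so their representatives are pairwise non-adjacent by Corollary~\ref{cor:preceqN}, and more than $2k+1$ of them would trigger Rule~\ref{rule:c4} on $xy$; summing over $O(k^2)$ non-edges and $O(k)$ cardinalities gives $O(k^4)$. For $\mathcal{F}_2$ it builds an auxiliary graph on non-nested pairs whose crossing is witnessed by an \emph{edge} of $G[X]$, uses a matching/vertex-cover argument together with Rule~\ref{rule:p4} to discard $O(k^3)$ sets, and only the \emph{remaining} family is a \tpss{} (precisely because the surviving crossings are witnessed only by non-edges, while the sets themselves are cliques in $G[X]$), so Lemma~\ref{lem:tpss-bounded} applies. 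Your instinct that Rules~\ref{rule:c4} and~\ref{rule:p4} must be charged against pairs inside $X$ is the right one, but the argument has to be organized around the edge/non-edge status of the crossing pair $x_1x_2$ in $G[X]$, not around the adjacency of $u$ and $v$.
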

\begin{proof}
  Let $\mathcal{F}$ be the family of $X$-neighborhoods in $G$.  For
  every $Z\in \mathcal{F}$, let us choose an arbitrary vertex $v_Z\in
  V(G)\setminus X$ with $Z=N^X(v_Z)$.  We split $\mathcal{F}$ into two
  subfamilies:
  The first family~$\mathcal{F}_1$ contains all the sets
  of~$\mathcal{F}$ that contain the endpoints of some non-edge
  in~$G[X]$,
  whereas
  the second family~$\mathcal{F}_2$ contains all the sets
  of~$\mathcal{F}$ that induce complete graphs in~$G[X]$.  We bound
  the sizes of~$\mathcal{F}_1$ and~$\mathcal{F}_2$ separately.
  
  %
  %
  \medskip
  
  \noindent\emph{Bounding $|\mathcal{F}_1|$}: Let $xy$ be a non-edge
  of~$G[X]$, and for $2 \leq \kappa \leq |X|$ let $\mathcal{F}_1^{xy,\kappa}$ be the
  family of those sets of~$\mathcal{F}_1$ that contain~$\{x,y\}$ and
  have cardinality exactly $\kappa$.  Take any distinct $Z_1,Z_2\in
  \mathcal{F}_1^{xy,\kappa}$, and observe that they are not nested
  since both have size~$\kappa$.  By Corollary~\ref{cor:preceqN}, this
  means that vertices~$v_{Z_1}$ and~$v_{Z_2}$ are incomparable
  w.r.t.~$\preceq_N$, so $v_{Z_1}v_{Z_2}\notin E(G)$.  Hence, set
  $\{v_Z\colon Z\in \mathcal{F}_1^{xy,\kappa}\}$ is independent
  in~$G$.  Observe now that if we had that $|\{v_Z\colon Z\in
  \mathcal{F}_1^{xy,\kappa}\}|\geq 2k+2$, then Rule~\ref{rule:c4}
  would be applicable to the non-edge~$xy$.  Since we assume that the
  instance is reduced, we conclude that $|\{v_Z\colon Z\in
  \mathcal{F}_1^{xy,\kappa}\}|\leq 2k+1$, and hence also
  $|\mathcal{F}_1^{xy,\kappa}|\leq 2k+1$.  By summing through all
  the~$\kappa$ between~$2$ and~$|X|$ and through all the non-edges
  of~$G[X]$, we infer that
  \[
  |\mathcal{F}_1|
  \leq \binom{4k}{2}\cdot 4k \cdot (2k+1) 
  = O(k^4) .
  \]

  %
  %
  \medskip
  
  \noindent \emph{Bounding $|\mathcal{F}_2|$}: Consider any pair of
  $X$-neighborhoods $Z_1,Z_2\in \mathcal{F}_2$ such that they are not
  nested, and moreover there exist vertices $x_1 \in Z_1 \setminus
  Z_2$ and $x_2 \in Z_2 \setminus Z_1$ such that $x_1x_2 \in E(G)$.
  Since~$Z_1$ and~$Z_2$ are not nested, by Corollary~\ref{cor:preceqN}
  we infer that~$v_{Z_1}$ and~$v_{Z_2}$ are incomparable
  w.r.t.~$\preceq_N$, and hence $v_{Z_1}v_{Z_2}\notin E(G)$.  Observe
  that then $G[\{v_{Z_1},v_{Z_2},x_1,x_2\}]$ is an induced~$P_4$;
  however, the existence of such an obstacle is not forbidden by the
  definition of a \modulator.
  
  Create an auxiliary graph~$H$ with~$V(H)=\mathcal{F}_2$, and put
  $Z_1Z_2\in E(H)$ if and only if~$Z_1$ and~$Z_2$ satisfy the
  condition from the previous paragraph, i.e.,~$Z_1$ and~$Z_2$ are not
  nested and there exist $x_1 \in Z_1 \setminus Z_2$ and $x_2 \in Z_2
  \setminus Z_1$ with $x_1x_2 \in E(G)$.  Run the classic greedy
  $2$-approximation algorithm for vertex cover in~$H$.  This algorithm
  either finds a matching~$M$ in~$H$ of size more than
  $\binom{4k}{2}\cdot k$, or a vertex cover~$C$ of~$H$ of size at most
  $2\cdot \binom{4k}{2}\cdot k$.  In the first case, assign each
  edge~$Z_1Z_2$ of~$M$ to the corresponding edge~$x_1x_2$ of~$G[X]$ as
  in the definition of the edges of~$H$.  Observe that since~$|X|\leq
  4k$, then some edge~$x_1x_2\in G[X]$ is assigned at least~$k+1$
  times.  Then it is easy to see that the sets
  $\{v_{Z_1},v_{Z_2},x_1,x_2\}$ for $Z_1Z_2$ being edges of~$M$
  assigned to~$x_1x_2$ induce~$P_4$s that share only the
  edge~$x_1x_2$, and hence Rule~\ref{rule:p4} would be applicable
  to~$x_1x_2$.  This is a contradiction with the assumption
  that~$(G,k)$ is reduced.  Hence, we can assume that we have
  successfully constructed a vertex cover~$C$ of~$H$ of size at most
  $2\cdot \binom{4k}{2}\cdot k=O(k^3)$.
  
  Let now $\mathcal{F}_2'=\mathcal{F}_2\setminus C$.  Since~$\mathcal{F}_2'$
  is independent in~$H$, it follows that for any non-nested
  $Z_1,Z_2\in \mathcal{F}_2'$ and any $x_1\in Z_1\setminus Z_2$,
  $x_2\in Z_2\setminus Z_1$, we have that $x_1x_2\notin E(G)$.  Since
  the sets of~$\mathcal{F}_2'$ induce complete graphs in~$G[X]$, this
  means that in particular there is no set $Z_3\in \mathcal{F}_2'$
  that contains both~$x_1$ and~$x_2$.  This proves that the
  family~$\mathcal{F}_2'$ is a \tpss{} with~$X$ as ground set, so by
  Lemma~\ref{lem:tpss-bounded} we infer that $|\mathcal{F}_2'| \leq |X|+1
  \leq 4k+1$.  Concluding,
  \[
  |\mathcal{F}_2|
  \leq |C| + |\mathcal{F}_2'|
  \leq O(k^3) + 4k + 1
  = O(k^3) ,
  \]
  and $|\mathcal{F}| \leq |\mathcal{F}_1| + |\mathcal{F}_2| = O(k^4) +
  O(k^3) = O(k^4)$.
\end{proof}

\newcommand{\lca}{lowest common ancestor-closure}

\subsection{Locating important bags}
\label{sec:impbags}

\begin{figure}[t]
  \centering
  \begin{subfigure}[t]{.25\textwidth}
    \centering
    \includegraphics[width=.75\textwidth]{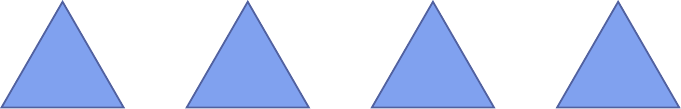}
    \caption{Type~0: $x$ sees a disjoint union of connected components.}
  \end{subfigure}
  \hspace{2em}
  \begin{subfigure}[t]{.25\textwidth}
    \centering
    \includegraphics[width=.75\textwidth]{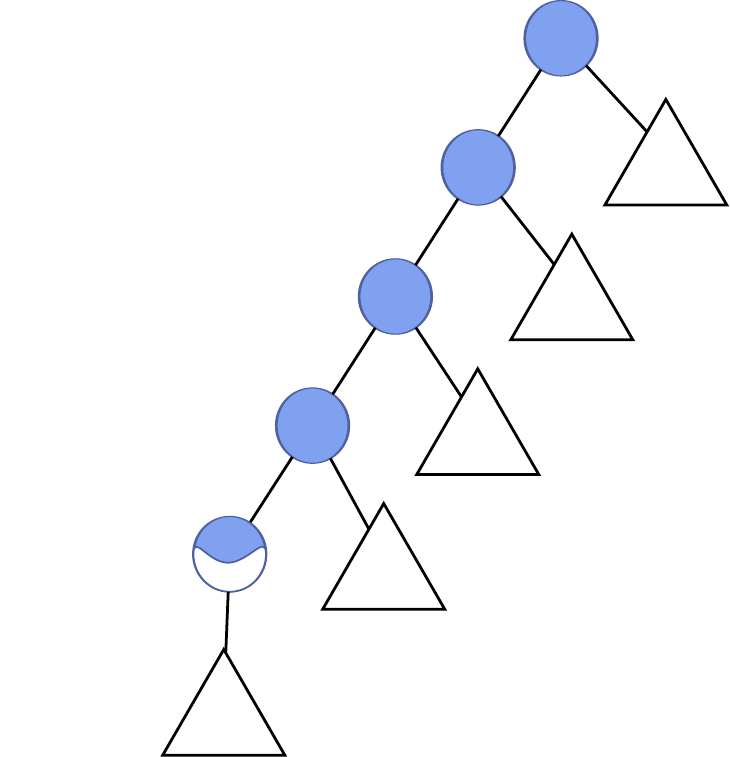}
    \caption{Type~1: $x$ sees all the vertices in bags from a root and to
      a point in a bag, and nothing else.}
  \end{subfigure}
  \hspace{2em}
  \begin{subfigure}[t]{.25\textwidth}
    \centering
    \includegraphics[width=.75\textwidth]{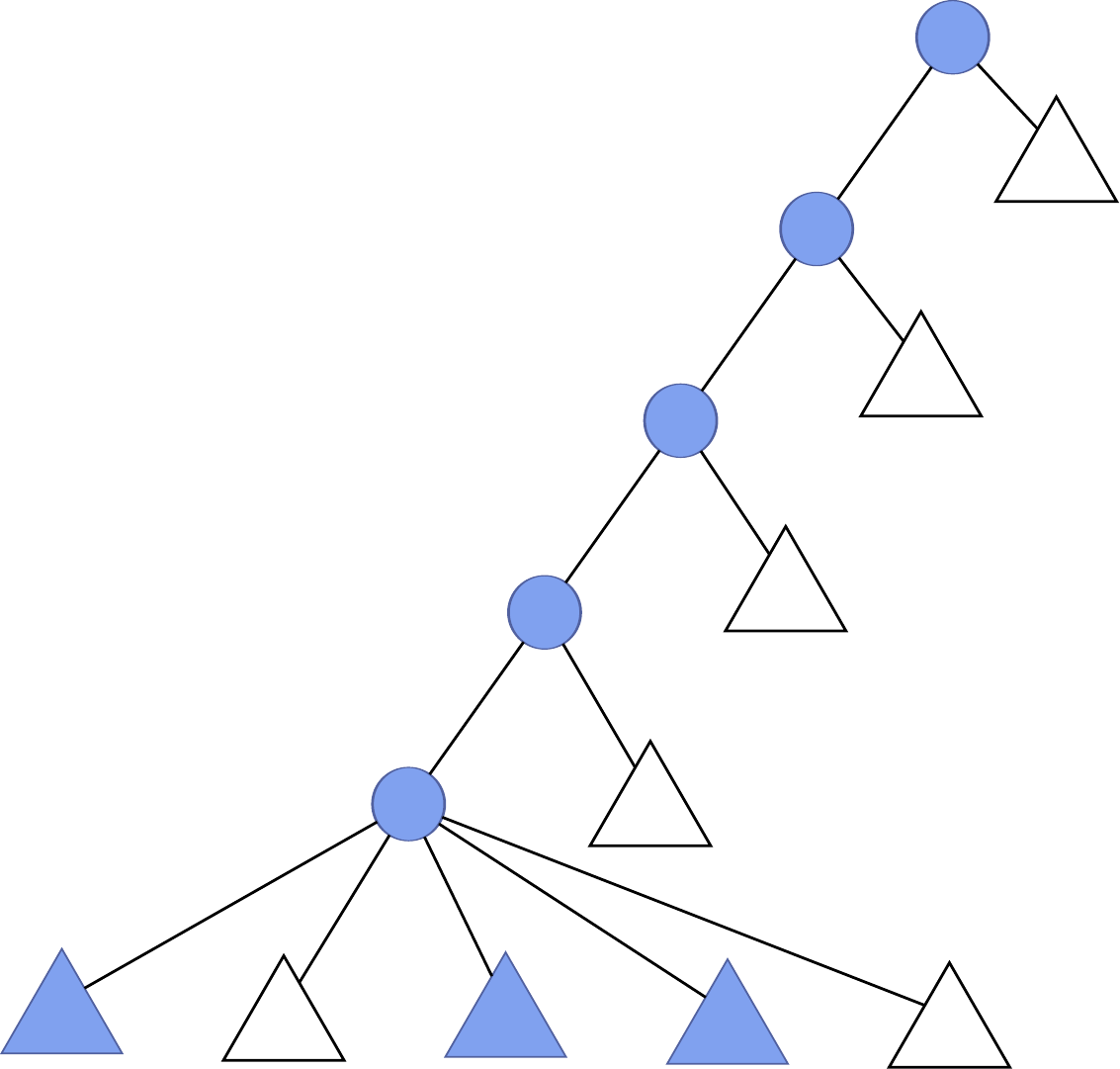}
    \caption{Type~2: $x$ has as neighbors all the vertices from a root
      and down to a bag, and a collection of subtrees below that bag.}
  \end{subfigure}
  \caption{Three types of neighborhoods; simply denoted Type~0, Type~1,
    and Type~2.  The blue parts mark the possible neighborhoods of a
    vertex $x \in X$.}
  \label{fig:nx}
\end{figure}

In the previous section we analyzed the structure of neighborhoods
that nodes from~$V(G)\setminus X$ have in~$X$.  Our goal in this
section is to perform the symmetric analysis: to understand, how the
neighborhood of a fixed~$x\in X$ in~$V(G)\setminus X$ looks like.
Eventually, we aim to locate a family~$I$ of~$O(k)$ \emph{important
  bags}, where some non-trivial behavior w.r.t.~the neighborhoods of
vertices of~$X$ happens.
Then, we will perform a \lca{} on the set~$I$, thus increasing its
size to at most twice.  After performing this step, all the connected
components of~$T-I$ have very simple structure from the point of view
of their neighborhoods in~$X$.  As there are only~$O(k)$ such
components, we will be able to kernelize them separately.

The following definition and lemma explains what are the types of
neighborhoods that vertices of~$X$ can have in~$V(G)\setminus X$.  To
simplify the notation, in the following we treat~$\preceq$ also as a
partial order on the vertices of the forest~$T$ denoting the
ancestor-descendant relation, i.e.,~$s\preceq t$ if and only if~$s$ is
an ancestor of~$t$ (possibly~$s=t$).

\begin{definition}[Type~0,~1, and~2 neighborhoods]
  Let $x \in X$ be any vertex and consider $U_x = N(x) \setminus X$.  We say
  that $U_x$ is (see Figure~\ref{fig:nx}):
  \begin{itemize}
  \item A \emph{neighborhood of Type~0} if $U_x$ is the union of the
    vertex sets of a collection of connected components of $G-X$.
  \item A \emph{neighborhood of Type~1} if there exists a node $t_x\in
    V(T)$ such that $\bigcup_{s\prec t_x} B_s \subseteq U_x\subseteq
    \bigcup_{s\preceq t_x} B_s$.  In other words, $U_x$ consists of
    all the vertices contained in bags on the path from $t_x$ to the
    root of its subtree in $T$, where some vertices of $B_{t_x}$
    itself may be excluded.
  \item A \emph{neighborhood of Type~2} if there exists a node $t_x\in
    V(T)$ and a collection $\mathcal{L}_x$ of subtrees of $T$ rooted
    at children of $t_x$ such that $U_x=\bigcup_{s\preceq t_x} B_s
    \cup \bigcup_{S\in \mathcal{L}_x}\bigcup_{s\in V(S)} B_s$.  In
    other words, $U_x$ is formed by all the vertices contained in bags
    on the path from $t_x$ to the root of its subtree in $T$, plus a
    selection of subtrees rooted in the children of $t_x$, where the
    vertices appearing in the bags of each such subtree are either all
    included in $U_x$ or all excluded from $U_x$.
  \end{itemize}
\end{definition}

\begin{lemma}
  \label{lem:nx-two-types}
  Let $x \in X$ be any vertex and consider $U_x = N(x) \setminus X$.  Then $U_x$
  is of Type~0,~1 or~2.
\end{lemma}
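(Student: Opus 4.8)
The plan is to fix a vertex $x\in X$, let $U_x=N(x)\setminus X$, and argue about the structure of $U_x$ using the fact that $X$ is a \modulator{}, so no obstacle in $G$ meets $X$ in just the single vertex $x$. Concretely, for any four vertices $x,a,b,c$ with $a,b,c\in V(G)\setminus X$, whenever $G[\{x,a,b,c\}]$ is a $P_4$ or a $C_4$ we reach a contradiction. I would first dispose of the case where $U_x$ is a union of connected components of $G-X$: if this holds, $U_x$ is of Type~0 and we are done. So assume from now on that there is a connected component $C$ of $G-X$ with $\emptyset\neq U_x\cap V(C)\subsetneq V(C)$, and work inside the UCD restricted to $C$, which is a tree $T_C$.

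The heart of the argument is to understand, within one component $C$, which bags $B_t$ are ``hit'' by $U_x$ (meaning $B_t\cap U_x\neq\emptyset$) and which are ``partially hit'' (meaning $\emptyset\neq B_t\cap U_x\subsetneq B_t$). Here I would invoke two monotonicity facts, each proved by exhibiting a forbidden obstacle. First: if $u\prec v$ in the UCD (so $u$ is strictly higher up) and $v\in U_x$, then $u\in U_x$. Indeed, if $u\notin U_x$ then pick, as in the proof of Lemma~\ref{lem:X-nei-inclusion-diff}, a vertex $w$ in some bag incomparable to that of $v$ but below that of $u$; then $\{x,u,v,w\}$ induces a $P_4$ or $C_4$ meeting $X$ only in $x$ — contradiction. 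This shows $U_x$ is ``downward-closed-from-the-root'': the set of bags fully contained in $U_x$ forms a union of root-paths, i.e.\ a subtree of $T_C$ containing the root and closed under taking ancestors. Second: if $u,v$ lie in the same bag $B_t$ and $u\in U_x$ while $v\notin U_x$ — i.e.\ $B_t$ is partially hit — then I claim \emph{no} bag strictly below $t$ can contain a vertex of $U_x$; for if $w\in B_s\cap U_x$ with $t\prec s$, then $x$ is adjacent to $u$ and $w$ but not $v$, while $uv,uw,vw$ are edges (same bag, or ancestor), so $\{x,v,u,w\}$ or the relevant relabelling induces a $P_4$ — contradiction. Consequently at most one bag on any root-to-leaf path of $T_C$ is partially hit, and it must be the ``topmost fully-or-partially hit'' bag; moreover a partially hit bag has no hit descendants at all.

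Combining these, within the component $C$ the ``hit region'' is: a root-path of fully-contained bags $\bigcup_{s\prec t_x}B_s$ for some node $t_x$ (the lowest fully-contained bag, or we go one step further), then at the node $t_x$ either (i) $B_{t_x}$ itself is only partially contained and nothing strictly below it is hit — this is exactly a Type~1 neighborhood with witness $t_x$; or (ii) $B_{t_x}$ is fully contained, and below $t_x$ the hit region splits among the children-subtrees, where — by the same same-bag argument applied one level down, iterated — each child-subtree is either entirely contained in $U_x$ or entirely disjoint from it. That last claim I would prove by induction down the tree: once $B_{t_x}\subseteq U_x$, for a child $t'$ of $t_x$, if some vertex of the subtree at $t'$ is in $U_x$ and some is not, the first partial or missing bag down that branch again produces a $P_4$ with $x$, a full bag above it, and a witness in an incomparable sibling branch. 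This gives exactly a Type~2 neighborhood. If $G-X$ is disconnected and $U_x$ behaves this way on several components simultaneously, I note that Types~0/1/2 are stated per-component-compatibly: Type~0 already allows an arbitrary union of components, and for Type~1 and Type~2 the definition forces all but one component to be either fully taken (absorbed into the ``lower part'') or fully omitted — so I would add a short paragraph checking that if $U_x$ is nontrivial on two different components then it is forced to be Type~0, by noting that a partially hit component together with a fully hit vertex in another component again yields a forbidden $P_4$ or $C_4$ through $x$.

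The main obstacle I expect is bookkeeping: carefully defining ``topmost partially/fully hit bag'' so that the witness node $t_x$ in the definitions of Type~1 and Type~2 is well defined even in degenerate cases ($B_{t_x}$ being the root bag, $U_x$ containing whole components, $t_x$ being a leaf), and verifying in each such corner case that the constructed obstacle genuinely meets $X$ only in $x$ (in particular that the ``witness'' vertices $w$ actually exist, which relies on the UCD property that every internal node has $\ge 2$ children). Each individual obstacle check is a one-line case analysis of whether a certain pair is an edge or a non-edge, exactly as in Lemmas~\ref{lem:X-nei-inclusion-diff} and~\ref{lem:X-nei-inclusion-same}; the work is organizing these into a clean case distinction (Type~0 vs.\ ``there is a partially hit bag'' vs.\ ``all hit bags are full, look at where the branching starts'').
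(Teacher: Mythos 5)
Your overall strategy is the same as the paper's: use the \modulator{} property to show that $U_x$ is monotone along the UCD (your ``first fact'' is exactly the paper's Lemma~\ref{lem:X-nei-inclusion-diff} specialised to $x$), then case-analyse the shape of the set of hit bags. The paper packages the branching analysis into one claim --- if two \emph{incomparable} nodes both contain a vertex of $U_x$, then both entire subtrees below them lie in $U_x$, proved via the $P_4$ $y\!-\!x\!-\!y'\!-\!v$ --- and uses that single claim both for the multi-component (Type~0) case and for the Type~2 case; your per-component bookkeeping reaches the same conclusion with a bit more case analysis, and your multi-component paragraph implicitly needs the standard trick of choosing an \emph{edge} of $C_1$ crossing the boundary between $U_x$ and its complement (a hit vertex and a non-hit vertex of the same component need not be adjacent).

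There is, however, one concretely wrong step. In your ``second monotonicity fact'' (a partially hit bag $B_t$ blocks everything below), the quadruple $\{x,v,u,w\}$ you exhibit is \emph{not} an obstacle: you have $xu,xw,uv,uw,vw\in E(G)$ and only $xv\notin E(G)$, so the induced subgraph is $K_4$ minus one edge (a diamond), which contains no induced $P_4$ or $C_4$ and is itself trivially perfect. No relabelling fixes this. Fortunately the statement you are proving there is an immediate consequence of your first fact: since $v\in B_t$ and $w\in B_s$ with $t$ a strict ancestor of $s$, we have $v\prec w$, so $w\in U_x$ forces $v\in U_x$, contradicting $v\notin U_x$. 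With that one-line repair (and the crossing-edge remark above) your argument goes through and matches the paper's proof in substance.
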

\begin{proof}
  From Corollary~\ref{cor:preceqN} we infer that $U_x$ is closed
  downwards w.r.t.~the quasi-ordering $\preceq_N$, i.e., if $v\in U_x$
  and $u\preceq_N v$, then also $u\in U_x$.  Let $S_x$ be the set of
  nodes of $T$ whose bags contain at least one vertex of $U_x$.  It
  follows that $S_x$ is closed under taking ancestors in forest $T$.
  Moreover if $t\in S_x$, then the bags of all the ancestors of $t$
  other than $t$ are fully contained in $U_x$.
  
  \begin{claim}\label{cl:types}
    Suppose $t,t'\in S_x$ are two nodes that are incomparable
    w.r.t.~$\preceq$.  Then $U_x \supseteq \bigcup_{s\succeq t} B_s$
    and $U_x\supseteq \bigcup_{s\succeq t'} B_s$, i.e., $U_x$ contains
    all the vertices of all the bags contained in the subtrees of $T$
    rooted at $t$ and $t'$.
  \end{claim}
  \begin{proof}
    We prove the statement for the subtree rooted at~$t'$; The proof
    for the subtree rooted at~$t$ is symmetric.  Let~$y$ and~$y'$ be
    arbitrary vertices of~$B_t\cap U_x$ and~$B_{t'}\cap U_x$,
    respectively.  For the sake of contradiction suppose there exists
    some~$v\in \bigcup_{s\succeq t'} B_s$ such that~$vx\notin E(G)$.
    Since~$v\in \bigcup_{s\succeq t'} B_s$ and~$t,t'$ are incomparable
    w.r.t.~$\preceq$, by the properties of the universal clique
    decomposition we have that~$yy'\notin E(G)$, $vy\notin E(G)$ and
    $vy'\in E(G)$.  Since $xy,xy'\in E(G)$ by the definition of~$U_x$,
    we conclude that~$\{y,y',x,v\}$ would induce a~$P_4$ in~$G$ that
    has only one vertex in common with~$X$ (see
    Figure~\ref{fig:broken-type}), a contradiction to the definition
    of a \modulator.  \cqed\end{proof}
  
  \begin{figure}[htp]
    \centering
    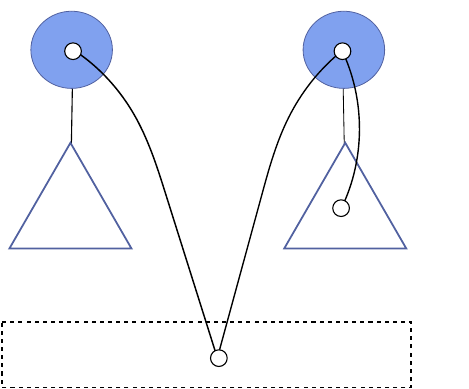
    \caption{An induced $P_4$, $y x y' v$, with only one vertex~$x$ in
      the modulator, appearing in the proof of Claim~\ref{cl:types}.}
    \label{fig:broken-type}
  \end{figure}
  
  We now use Claim~\ref{cl:types} to perform a case study that
  recognizes $U_x$ as a neighborhood of Type~0,~1, or~2.
  
  Suppose first that~$U_x$ contains vertices of at least two distinct
  connected components of~$G-X$.  Let~$C_1,C_2$ be any two such
  components, and let~$T_1$ and~$T_2$ be the trees of the forest~$T$
  that are UCDs of~$C_1$ and~$C_2$, respectively.  Since~$S_x$ is
  closed under taking ancestors in~$T$, it follows that the roots
  of~$T_1$ and~$T_2$ belong to~$S_x$.  Claim~\ref{cl:types} implies
  then that the entire vertex sets of~$C_1$ and~$C_2$ are contained
  in~$U_x$.  Since~$C_1,C_2$ was an arbitrary pair of components
  containing a vertex of~$U_x$, it follows that~$U_x$ must be the
  union of vertex sets of a selection of connected components
  of~$G-X$, i.e., a neighborhood of Type~0.
  
  Since~$U_x=\emptyset$ is also a neighborhood of Type~0, we are left with
  analyzing the case when $U_x\subseteq V(C_0)$ for~$C_0$ being a
  connected component of~$G-X$; Let~$T_0$ be
  the UCD of~$C_0$.  Observe that if~$U_x$ does not contain any pair
  of vertices incomparable w.r.t.~$\preceq$, then~$S_x$ must form a
  path from some node of~$T_0$ to the root of~$T_0$, and hence~$U_x$
  is a neighborhood of Type~1.  Otherwise, there exists some node
  of~$S_x$ such that at least two subtrees rooted at its children
  contain nodes from~$S_x$.  Let~$t_x$ be such a node that is highest
  in~$T_0$, and let~$\mathcal{L}_x$ be the family of subtrees rooted
  at children of~$t_x$ that contain nodes of~$S_x$.  Again applying
  Claim~\ref{cl:types}, we infer that~$U_x$ contains all the vertices
  of all the bags of every subtree of~$\mathcal{L}_x$: for any two
  distinct subtrees~$T_1,T_2\in \mathcal{L}_x$,~$S_x$ contains the
  roots of~$T_1$ and~$T_2$, and hence by Claim~\ref{cl:types}~$U_x$
  contains all the vertices of all the bags of~$T_1$ and~$T_2$.
  Since~$t_x$ was chosen to be the highest, it follows that~$U_x$ is a
  neighborhood of Type~2 for node~$t_x$ and selection of
  subtrees~$\mathcal{L}_x$.
\end{proof}

Clearly, for every $x\in X$ we can in polynomial time analyze $U_x$ and
recognize it as a neighborhood of Type~0,~1, or~2.  Let $I_0$ be the
set of nodes $t_x$ for vertices $x\in X$ for which $U_x$ is of
Type~$1$ or~$2$.  To simplify the structure of $T-I_0$, we perform the
\lca{} operation on $I_0$.  The following variant of this operation is
taken verbatim from the work of Fomin et al.~\cite{fomin2012planar}.

%
%
\begin{definition}[\cite{fomin2012planar}]
  For a rooted tree $T$ and vertex set $M \subseteq V (T )$ the \lca{}
  (LCA-closure) is obtained by the following process.  Initially, set
  $M' = M$.  Then, as long as there are vertices $x$ and $y$ in $M'$
  whose least common ancestor $w$ is not in $M'$, add $w$ to $M'$.
  When the process terminates, output $M'$ as the LCA-closure of $M$.
  The following folklore lemma summarizes two basic properties of
  LCA-closures.
\end{definition}
\begin{lemma}[\cite{fomin2012planar}]
  \label{lem:lca}
  Let $T$ be a tree, $M \subseteq V(T)$ and $M' =
  \textnormal{LCA-closure}(M)$.  Then $|M'| \leq 2|M|$ and for every
  connected component $C$ of $T-M'$, $|N(C)| \leq 2$.
\end{lemma}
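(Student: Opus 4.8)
The plan is to prove the two assertions separately, tracking how the closure process enlarges $M$ and what structure it forces on the components of $T-M'$.

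For the size bound $|M'| \le 2|M|$, I would argue by a charging/potential argument on the iterative process that builds $M'$. First I would observe that it suffices to prove the bound when $M$ is an antichain-free set, but more cleanly, I would root the tree and assign to the current set $M'$ the potential $\Phi(M') = |M'| + (\text{number of connected components of the subforest of } T \text{ induced by the union of root-to-vertex paths over } M')$, or, more directly, I would prove the classical fact that $\mathrm{LCA\text{-}closure}(M)$ equals the set of vertices $v$ such that $v$ is the least common ancestor of two vertices of $M$ (including the degenerate case of a single vertex, so $M \subseteq M'$). Given that characterization, I would bound $|M'|$ by noting that the map sending each $v \in M' \setminus M$ to an unordered pair $\{a,b\}$ of $M$-vertices with $\mathrm{lca}(a,b)=v$ and with $a,b$ in distinct subtrees below $v$ can be made injective in a way that lets one count: process the leaves of the Steiner-tree-like structure spanned by $M$ in $T$, and use that a full binary-like branching structure on $|M|$ leaves has at most $|M|-1$ internal branching nodes. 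Concretely, the branching nodes of the minimal subtree of $T$ connecting $M$ are exactly the new elements added, and there are at most $|M|-1$ of them, so $|M'| \le |M| + (|M|-1) < 2|M|$.

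For the second assertion, let $C$ be a connected component of $T-M'$ and suppose for contradiction that $|N(C)| \ge 3$. Since $T$ is a tree and $C$ is connected, $N(C)$ consists of the (unique) parent-side neighbor of $C$ together with some vertices of $M'$ lying strictly below $C$; if $|N(C)| \ge 3$ then $C$ has at least two distinct neighbors $y_1, y_2 \in M'$ that are descendants of $C$ (lying in distinct subtrees hanging off $C$), or at least two such descendant-neighbors regardless. Pick $m_1, m_2 \in M'$ below $y_1$ and below $y_2$ respectively (each $y_i$ is in $M'$, so take $m_i = y_i$). Their least common ancestor $w = \mathrm{lca}(m_1,m_2)$ then lies in $C$ (it is the branching vertex inside $C$ where the paths to $y_1$ and $y_2$ diverge), but since $m_1, m_2 \in M'$ and $M'$ is LCA-closed, $w \in M'$ — contradicting $w \in C \subseteq V(T) \setminus M'$. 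Hence $C$ has at most one descendant-neighbor in $M'$, and together with its at most one ancestor-neighbor this gives $|N(C)| \le 2$.

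The main obstacle I anticipate is making the counting argument for $|M'| \le 2|M|$ fully rigorous: one must be careful that the "new" vertices added are precisely the branching nodes of the minimal subtree spanned by $M$ in $T$ (which requires showing the closure process is confluent and terminates at exactly this set), and that the number of branching nodes of a tree with $|M|$ marked leaves is at most $|M|-1$. The second part is essentially immediate once the LCA-closedness of $M'$ is in hand, so I expect no real difficulty there; everything reduces to the clean combinatorial description of $M'$.
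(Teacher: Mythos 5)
The paper does not prove this lemma at all: it is imported verbatim from Fomin et al.~\cite{fomin2012planar} as a folklore fact, so there is no in-paper argument to compare against. Your proposal is a correct rendition of the standard proof: the closure is confluent and equals $M\cup\{\mathrm{lca}(a,b):a,b\in M\}$, the added vertices are (rooted) branching points of the minimal subtree spanning $M$ and hence number at most $|M|-1$, and the neighborhood bound follows because two distinct ``downward'' neighbors $y_1,y_2\in M'$ of a component $C$ would force $\mathrm{lca}(y_1,y_2)$ to lie on the $T$-path between their parents, hence inside $C$, contradicting LCA-closedness. The only nitpick is terminological: the topmost added vertex $\mathrm{lca}(M)$ may have degree $2$ in the unrooted Steiner subtree, so ``branching nodes'' should be read in the rooted sense (at least two child subtrees meeting $M$), which is exactly what your count of at most $|M|-1$ internal nodes covers.
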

%
%

Construct now the set $I$ by taking $\textnormal{LCA-closure}(I_0)$
and
%
%
adding  the root of every connected component of~$T$ that contains a bag
of~$I_0$  (provided it is not already included).
The nodes from~$I$ will be called \emph{important nodes}, or
\emph{important bags}.  From Lemma~\ref{lem:lca} it follows that~$|I|
\leq 3|X| \leq 12k$, and by the construction we infer that every
connected component~$C$ of~$T-I$ is of one of the following three
forms:
\begin{itemize}
\item $C$ is not adjacent to any node of~$I$, and is thus simply a
  connected component of~$T$ that does not contain any important bag.
\item $C$ is adjacent to one node~$a$ of~$I$, and it is a subtree
  rooted at a child of~$a$.
\item $C$ is adjacent to two nodes~$a$ and~$b$ of~$I$ such that~$a$ is
  an ancestor of~$b$.  Then~$C$ is formed by the internal nodes of
  the~$a-b$ path in~$T$, plus all the subtrees rooted at the other
  children of these internal nodes.
\end{itemize}

\subsection{Module reduction}\label{sec:module-rules}
In this section we give two new reduction rules: a twin reduction and
a module reduction rule.  These rules are executed exhaustively by the
algorithm as Rules~\ref{rule:twin} and~\ref{rule:module}.  The reason
why we introduce them now is that only after understanding the
structural results of Sections~\ref{sec:modulator-nei}
and~\ref{sec:impbags}, the motivation of these rules becomes apparent.
Namely, these rules will be our main tools in reducing the sizes of
parts of $G-X$ located between the important bags.

\subsubsection{Twin reduction}
\begin{redrule}
  \label{rule:twin}
  If~$T \subseteq V(G)$ is a true twin class of size~$|T| > 2k + 5$, and~$v
  \in T$ is an arbitrarily picked vertex, then remove~$v$ from the
  graph, i.e., proceed with the instance~$(G-v,k)$.
\end{redrule}
\begin{lemma}
  \label{lem:rule:twin:sound}
  Applicability of Rule~\ref{rule:twin} can be recognized in
  polynomial time.  Moreover, Rule~\ref{rule:twin} is safe,
  i.e.,~$(G,k)$ is a yes-instance if and only if~$(G-v,k)$ is a
  yes-instance.
\end{lemma}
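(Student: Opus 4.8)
The plan is first to dispense with the polynomial-time recognition: computing all true twin classes amounts to comparing closed neighbourhoods, which is clearly doable in polynomial time, so we can detect whether some true twin class $T$ has $|T|>2k+5$. For safeness, the nontrivial direction is that if $(G-v,k)$ is a yes-instance then so is $(G,k)$ (the converse being easy, since an editing set for $G$ can be projected down by deleting from it all pairs incident to $v$ — one must check this projected set is still an editing set for $G-v$, which follows because $v$'s true twins make $v$ ``redundant'' as far as creating or destroying obstacles goes). So let $F$ be an editing set for $G-v$ of size at most $k$; I would like to extend it to an editing set $F'$ for $G$ with $|F'|\le k$ without spending any new edits on pairs incident to $v$.

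The key idea is to pick another vertex $u\in T\setminus\{v\}$ that is ``untouched'' by $F$, i.e.\ no pair of $F$ is incident to $u$. Since $|T|>2k+5$, deleting $v$ leaves $|T\setminus\{v\}|>2k+4$ vertices of $T$ in $G-v$; each of the at most $k$ pairs in $F$ is incident to at most two of them, so at least one vertex $u\in T\setminus\{v\}$ is incident to no pair of $F$. In $G-v$, $u$ is still a true twin of every other vertex of $T\setminus\{v\}$, and in $G-v\triangle F$, because $F$ does not touch $u$, the closed neighbourhood $N_{G-v\triangle F}[u]$ equals (a subset of $N_{G-v}[u]$ together with whatever $F$ did to other vertices, but since $u$ is untouched) exactly $N_{G-v}[u]$ restricted appropriately — the point I would make precise is that $u$ remains a true twin, in the edited graph $G-v\triangle F$, of nothing necessarily, but we do not need that; what we need is simply that $u$ plays in $G-v\triangle F$ the structural role that $v$ should play in $G\triangle F'$. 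Concretely, define $F'$ on $V(G)$ by copying $F$ verbatim on $\binom{V(G-v)}{2}$ and adding, for each $w$, the pair $vw$ to $F'$ if and only if $uw\in F$ — but since $u$ is untouched by $F$, this adds nothing, so in fact $F'=F$ as a subset of $\binom{V(G)}{2}$, and $|F'|=|F|\le k$. It remains to argue $G\triangle F'$ is trivially perfect. Since $u$ is untouched by $F$ and $u,v$ are true twins in $G$, one checks that in $G\triangle F'$ the vertices $u$ and $v$ are again true twins: $N_{G\triangle F'}[v]\setminus\{v\}$ and $N_{G\triangle F'}[u]\setminus\{u\}$ coincide, because the only pairs of $F'$ incident to $v$ would be those mirroring pairs of $F$ incident to $u$, of which there are none. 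Hence $G\triangle F'$ is obtained from $G-v\triangle F$ by adding $v$ as a true twin of $u$; adding a true twin preserves the property of being $\{C_4,P_4\}$-free (any induced $C_4$ or $P_4$ using $v$ could have $v$ swapped for $u$ — using that $u,v$ are true twins and that in a $C_4$ or $P_4$ no two vertices are true twins — yielding an induced $C_4$ or $P_4$ in $G-v\triangle F$, contradiction, or else $u$ also lies in the obstruction, but then $u,v$ being adjacent true twins cannot both lie on an induced $C_4$ or $P_4$). Therefore $G\triangle F'$ is trivially perfect, so $(G,k)$ is a yes-instance.

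The main obstacle I anticipate is the bookkeeping around the claim ``adding a true twin of $u$ to a trivially perfect graph keeps it trivially perfect'': one must handle separately the case where a putative obstruction contains $v$ but not $u$ (swap $v\mapsto u$), and the case where it contains both $v$ and $u$ (impossible, since $u,v$ are adjacent and an induced $C_4$ or $P_4$ contains no pair of adjacent vertices with identical neighbourhoods among the remaining vertices — here $v,u$ would be true twins within the 4-set, and neither $C_4$ nor $P_4$ has such a pair). This is where the precise definition of true twins and a short case analysis on the shape of $C_4$ and $P_4$ do the work; everything else is just verifying that the choice $|T|>2k+5$ leaves enough room — note the slack ($2k+4$ would already suffice for the counting argument above) is presumably there to accommodate the later module-reduction rule or the interaction with Rules~\ref{rule:c4} and~\ref{rule:p4}, and is harmless here.
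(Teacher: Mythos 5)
Your proof is correct and follows essentially the same route as the paper's: pick a twin of $v$ untouched by $F$ and use the obstruction-swapping argument (the paper reserves four untouched twins so that one is guaranteed to avoid the obstruction, whereas you take one twin $u$ and dispose of the case $u\in W$ by noting that adjacent true twins cannot both lie on an induced $C_4$ or $P_4$ --- a harmless variation). One small remark: the easy direction needs only heredity of trivially perfect graphs (the projected set works because $(G\triangle S)-v$ is an induced subgraph of $G\triangle S$), not any redundancy argument via twins.
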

\begin{proof}
  In order to recognize the applicability of Rule~\ref{rule:twin} we
  only need to inspect every true twin classes in the graph, which
  clearly can be done in polynomial time.  We proceed to the proof of
  the safeness of the rule.
  
  Let~$T$ be a true twin class of size at least~$2k+5$ and let~$v$ be
  the vertex the rule deleted.  Since the class of trivially perfect
  graphs is hereditary, if~$(G,k)$ is a yes-instance, it follows that
  $(G-v,k)$ is a yes-instance.  Suppose now that $(G-v,k)$ is a
  yes-instance.  Let~$F$ be a set of edges with~$|F| \leq k$ such that
  $(G-v) \triangle F$ is trivially perfect.  We now show that $G
  \triangle F$ is also trivially perfect, which means that~$F$ is also
  a solution to~$(G,k)$.  For the sake of contradiction, suppose~$W$
  is an obstruction in~$G \triangle F$.  Since $(G-v) \triangle F$ is
  trivially perfect,~$W$ must contain the deleted vertex~$v$.
  Since~$F$ has size at most~$k$, at most~$2k$ vertices of~$T$ can be
  incident to an edge of~$F$.  Let~$v_1$, $v_2$, $v_3$, and~$v_4$ be
  four vertices of~$T$ that are different from~$v$ and are not
  incident to~$F$.  Then one of them, say~$v_1$, is not contained
  in~$W$.  Since~$v$ and~$v_1$ are true twins both in $G$ and in
  $G\triangle F$, we can replace~$v$ with~$v_1$ in~$W$ yielding a new
  set~$W'$ which is an obstruction in $G \triangle F$.  However,
  since~$v$ is not a member of~$W'$, we have that~$W'$ is an
  obstruction in $(G-v) \triangle F$, contradicting the assumption
  that $(G-v) \triangle F$ was trivially perfect.
\end{proof}

\subsubsection{Module reduction}
Recall that a module is a set of vertices $M$ such that for every
vertex $v$ in $V(G) \setminus M$, either $M \subseteq N(v)$ or $M \cap
N(v) = \emptyset$; see Definition~\ref{def:module}.  The following
rule enables us to reduce large trivially perfect modules.

\begin{redrule}
  \label{rule:module}
  Suppose $M \subseteq V(G)$ is a module such that $G[M]$ is trivially perfect
  and it contains an independent set of size at least $2k + 5$.  Then
  let us take any independent set $I\subseteq M$ of size $2k+4$, and
  we delete every vertex of $M$ apart from $I$, i.e., proceed with the
  instance $(G-(M\setminus I),k)$.
\end{redrule}

Observe that Rule~\ref{rule:module} always deletes at least one
vertex, since $|M|\geq 2k+5$ and $|I|=2k+4$.  Actually, we could
define a stronger rule where we only assume that $|M|\geq 2k+5$;
however, the current statement will be helpful in recognizing the
applicability of Rule~\ref{rule:module}.

We first prove that the rule is indeed safe.

\begin{lemma}
  \label{lem:rule-mod-correct}
  Provided that $(G,k)$ is a reduced instance
  (w.r.t.~Rules~\ref{rule:c4} and~\ref{rule:p4}), then
  Rule~\ref{rule:module} is safe, i.e., $(G,k)$ is a yes-instance if
  and only if $(G-(M\setminus I),k)$ is a yes-instance.
\end{lemma}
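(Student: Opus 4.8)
\textbf{Proof plan for Lemma~\ref{lem:rule-mod-correct}.}
The plan is to argue both directions, with the forward direction being trivial (trivially perfect graphs are hereditary, so any editing set for $G$ is also an editing set for the induced subgraph $G-(M\setminus I)$). For the backward direction, let $G' = G - (M\setminus I)$ and suppose $F'$ is an editing set for $G'$ with $|F'|\le k$. The key idea is that, because $|F'|\le k$, at most $2k$ vertices of the independent set $I$ (which has size $2k+4$) are touched by $F'$; hence there remain at least four vertices $i_1,i_2,i_3,i_4 \in I$ that are isolated from $F'$, i.e., no pair containing one of them lies in $F'$. I would first establish a structural claim: in $G'\triangle F'$, all the untouched vertices of $I$ have exactly the same closed neighborhood (they are true twins), and moreover $I$ remains a module in $G'\triangle F'$ with respect to $V(G')\setminus M$ --- more precisely, since $F'$ touches each of $i_1,\dots,i_4$ in no pair, each $i_j$ keeps its original neighborhood, which outside $M$ is the common ``module neighborhood'' $N_G(M)$, and inside $I\setminus\{i_1,\dots,i_4\}$ is the full set. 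So $\{i_1,i_2,i_3,i_4\}$ are pairwise true twins in $G'\triangle F'$ forming a clique all adjacent to $N_G(M)$ and nothing else.

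Next I would reconstruct a solution for $G$. The natural candidate is to take $F'$ together with a suitable ``copy'' of the edits that $F'$ performs, extended over the reinserted vertices of $M\setminus I$. The cleaner approach, however, is to use the module structure directly: define $F = F' \cup F_M$, where $F_M$ makes every reinserted vertex $v\in M\setminus I$ into a true twin of $i_1$ in $G\triangle F$. Concretely, for each $v\in M\setminus I$ we set its neighborhood to be exactly $N_{G\triangle F}[i_1]\setminus\{v\}$, i.e. all of $M$ and all of $N_G(M)$; the edits needed are only inside $\binom{M}{2}$ together with pairs between $M\setminus I$ and the (at most $k$-sized) symmetric-difference region of $F'$ around $M$. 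Wait --- this could blow up the budget, so instead the right move is: we do \emph{not} touch $M$ at all when going back. Observe that $M$ is already trivially perfect and already a module in $G$; the only obstructions in $G\triangle F'$ that we must worry about are those using a vertex of $M\setminus I$ that was deleted. Since $M\setminus I$ vertices are true twins of the untouched $i_j$'s in $G$ and $F'$ does not touch them (they are not in $G'$ at all, so $F'\subseteq \binom{V(G')}{2}$ automatically avoids them), in $G\triangle F'$ each $v\in M\setminus I$ is a true twin of each untouched $i_j$.

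Hence the final step mirrors the twin-reduction argument of Lemma~\ref{lem:rule:twin:sound}: suppose for contradiction that $W$ is an obstruction in $G\triangle F'$. Since $G'\triangle F' = (G\triangle F') - (M\setminus I)$ is trivially perfect, $W$ must contain some vertex $v\in M\setminus I$. But $v$ is a true twin in $G\triangle F'$ of each of $i_1,i_2,i_3,i_4$, and at least one of these four, say $i_j$, is not in $W$ (as $|W|=4$ and $W$ contains $v\notin\{i_1,\dots,i_4\}$, so $W$ meets $\{i_1,\dots,i_4\}$ in at most $3$ vertices). Replacing $v$ by $i_j$ in $W$ gives an obstruction $W'$ in $G\triangle F'$ avoiding $M\setminus I$, hence an obstruction in $G'\triangle F'$ --- contradiction. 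Therefore $G\triangle F'$ is trivially perfect and $F'$ (of size $\le k$) witnesses that $(G,k)$ is a yes-instance.

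\textbf{Main obstacle.} The delicate point is verifying that the untouched $i_j$'s are genuinely true twins of every $v\in M\setminus I$ \emph{in $G\triangle F'$}, not merely in $G$: one must check that $F'$, living on $\binom{V(G')}{2}$, changes the neighborhoods of $i_j$ and $v$ ``in parallel'' outside --- but $v\notin V(G')$, so $v$'s neighborhood is frozen at its value in $G$, namely $M\cup N_G(M)$ minus $\{v\}$, while $i_j$'s neighborhood is frozen because $F'$ touches no pair incident to $i_j$. Since these two frozen neighborhoods agree in $G$ (module + true-twin property), they still agree in $G\triangle F'$. Making this bookkeeping precise is the only real work; everything else is the same replacement trick as in Rule~\ref{rule:twin}. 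A secondary subtlety is why we are allowed to pick $I$ to be an independent set of size exactly $2k+4$ (rather than needing all of $M$): the reduction only needs \emph{enough} surviving twins to run the replacement, and four suffice, so $2k+4$ is comfortably enough after $F'$ spoils at most $2k$ of them; the independence of $I$ is what guarantees $G[I]$ (and hence the surviving induced subgraph) stays within the trivially perfect world without our having to reason about internal edges of $M$.
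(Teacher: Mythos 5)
The central step of your backward direction --- that every $v\in M\setminus I$ is a true twin of the untouched vertices $i_1,\dots,i_4$ in $G\triangle F'$ --- is false, and the argument collapses there. That $M$ is a module only forces the vertices of $M$ to agree on their neighborhoods \emph{outside} $M$; it says nothing about $G[M]$, which under the hypothesis of Rule~\ref{rule:module} is an arbitrary trivially perfect graph with a large independent set (for example a star whose center lies in $M\setminus I$ and whose leaves form $I$). In particular $v$ need not be adjacent to all of $M$ (your ``frozen neighborhood $M\cup N_G(M)$ minus $\{v\}$'' presumes $v$ is universal in $G[M]$), and $v$ and $i_j$ can disagree arbitrarily on adjacencies inside $M$. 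Consequently the single-vertex replacement $v\mapsto i_j$ does not preserve the induced subgraph on $W$ whenever $W$ meets $M$ in a second vertex --- and an obstruction can meet $M$ in two vertices. A smaller additional issue: $W$ may contain several vertices of $M\setminus I$, so one replacement does not even remove $W$ from $M\setminus I$.

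What a correct argument needs --- and what the paper does --- is, first, to take $F$ of \emph{minimum} size and prove via inclusion-wise minimality (not just counting) that \emph{no} edit of $F$ is incident to any vertex of $I$, hence to any vertex of $M$, so that $M$ remains a module in $G\triangle F$; and second, a case analysis on $|W\cap M|$. For a $P_4$ one invokes the known fact that a $P_4$ either lies entirely inside a module or meets it in at most one vertex; for a $C_4$ one shows $|W\cap M|\le 2$, and that if it equals $2$ the two vertices of $W\cap M$ must be the antipodal non-adjacent pair of the cycle, which can then be replaced by two vertices of $I$ \emph{precisely because $I$ is independent}. That is the real role of the independence of $I$ (not, as you suggest, keeping $G[I]$ trivially perfect). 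Your counting argument producing four untouched vertices of $I$ is sound and does appear inside the paper's minimality claim, but by itself it neither makes $M$ a module in the edited graph nor licenses the twin-replacement you rely on.
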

\begin{proof}  
  Let $A = M \setminus I$, and $G' = G-A$.  Since~$G'$ is an induced subgraph
  of~$G$, by heredity, if $(G,k)$ is a yes-instance, then $(G', k)$ is
  a yes-instance.  We proceed to the proof of the other direction.
  Suppose then that $(G',k)$ is a yes-instance, and let~$F$, $|F|\leq k$,
  be a minimum-size editing set for~$G'$.
  \begin{claim}
    \label{claim:i-notin-f}
    No vertex of~$I$ is incident to any edit of~$F$.
  \end{claim}
  \begin{proof}
    Since~$F$ has minimum possible size, it is inclusion-wise minimal.
    We show that if $F_I \subseteq F$ is the set of edges of~$F$
    incident to a vertex of~$I$ and $F' = F \setminus F_I$, then $G'
    \triangle F$ being trivially perfect implies $G' \triangle F'$
    being trivially perfect.  Since~$|I|=2k+4$, we can find at least
    four vertices $v_1, \ldots, v_4\in I$ that are not incident to any
    edit of~$F$.  Suppose that $G' \triangle F'$ is not trivially
    perfect.  Then there is an obstruction~$W$ in $G' \triangle F'$
    containing at least one of the vertices of~$I$ incident to an edge
    of~$F$.  Create~$W'$ by replacing every vertex of $(W\cap
    I)\setminus \{v_1,\ldots,v_4\}$ by a different vertex of
    $\{v_1,\ldots,v_4\}$ that is not contained in~$W$.  Since vertices
    of~$I$ are not incident to the edits of~$F'$, they are false twins
    in $G' \triangle F'$, and hence~$W'$ created in this manner
    induces a graph isomorphic to the one induced by~$W$.  Thus,~$W'$
    is an obstacle in $G' \triangle F'$.  However, the vertices
    $v_1,\ldots,v_4$ are not incident to the edits of~$F$ and
    hence~$W'$ induces the same graph in $G' \triangle F'$ as in $G'
    \triangle F$.  Therefore~$W'$ would be an obstacle in $G'\triangle
    F$, a contradiction to $G'\triangle F$ being trivially perfect.
    
    Since we argued that $F'\subseteq F$ is also a solution, by the optimality
    of $F$ we infer that $F=F'$ and $F_I=\emptyset$.  \cqed\end{proof}
  
  We now argue that $G\triangle F$ is trivially perfect, which will
  imply that $(G,k)$ is a yes-instance.  For the sake of
  contradiction, suppose that there exists an obstacle~$W$
  in~$G\triangle F$; It follows that~$W$ shares at least one vertex
  with~$M \setminus I$.  From Claim~\ref{claim:i-notin-f} it follows
  that no edit of~$F$ is incident to any vertex of~$M$, so in~$G
  \triangle F$ we still have that~$M$ is a module.
  
  If the obstruction~$W$ induces a~$P_4$, then it is known that~$W$ is
  fully contained in the module~$M$, or has at most one vertex
  in~$M$~\cite[Observation~1]{guillemot2013onthenon}.
  Since~$G[M]=(G\triangle F)[M]$ is trivially perfect, the latter is
  the case.  But since~$M$ is a module in $G\triangle F$, then
  replacing the single vertex of $W\cap A$ with any vertex of~$I$
  would yield an obstacle in $G'\triangle F$, a contradiction.
  
  Consider then the case when~$W$ induces a~$C_4$ in $G\triangle F$.
  Since~$G[M]=(G\triangle F)[M]$ is~$C_4$-free, we have that~$W$ is
  not entirely contained in~$M$.  Also, if~$W$ had three vertices
  in~$M$, then the remaining vertex would need to be contained
  in~$N_G(M)$, and hence would be adjacent in $G\triangle F$ to all
  the other three vertices of~$W$, a contradiction to~$(G\triangle
  F)[W]$ being a~$C_4$.  Therefore, at most two vertices of~$W$ can be
  in~$M$.
  
  Suppose exactly two vertices~$w_1$ and~$w_3$ of~$W$ are in~$M$,
  and~$w_2$ and~$w_4$ are outside~$M$.  As $M$ is a module both in $G$
  and in $G\triangle F$, we must have that $w_2,w_4\in N_G(M)$ and
  hence the $4$-cycle induced by $W$ in $G\triangle F$ must be
  $w_1-w_2-w_3-w_4-w_1$.  Take any two vertices $w_1',w_3'\in I$ and
  obtain $W'$ by replacing $w_1$ and $w_3$ with them.  It follows that
  $W'$ induces a $C_4$ in $G'\triangle F$, a contradiction.
  
  Finally, consider the case when exactly one vertex of~$W$,
  say~$w_1$, is in~$M$.  Again, replacing $w_1$ with any vertex of $I$
  would yield an induced $C_4$ contained in $G'\triangle F$, a
  contradiction.  Thus, we conclude that $G\triangle F$ is trivially
  perfect.
\end{proof}

Observe that in order to apply Rule~\ref{rule:module}, one needs to be
given the module~$M$.  Given~$M$, finding any independent
set~$I\subseteq M$ of size~$2k+4$ can then be done easily as follows:
We can find an independent set of maximum cardinality in~$M$ in
polynomial time, since~$G[M]$ is trivially perfect and the
{\sc{Independent Set}} problem is polynomial-time solvable on
trivially perfect graphs (it boils down to picking one vertex from
every leaf bag of the universal clique decomposition of the considered
graph).  Then we take any of its subsets of size~$2k+4$ to be~$I$.
Hence, to apply Rule~\ref{rule:module} exhaustively, we need the
following statement.

\begin{lemma}\label{lem:recognizing-module}
  There exists a polynomial-time algorithm that, given an instance
  $(G,k)$, either finds a module $M\subseteq V(G)$ where
  Rule~\ref{rule:module} can be applied, or correctly concludes that
  Rule~\ref{rule:module} is inapplicable.
\end{lemma}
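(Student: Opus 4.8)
The plan is to use the modular decomposition to cut down the (potentially exponentially large) family of modules of $G$ to a polynomially sized list of candidates. First I would compute in polynomial time the modular decomposition $(T,(M^t)_{t\in V(T)})$ of $G$, which exists by Theorem~\ref{thm:module-decomp}. By part~3 of that theorem, every module $M$ of $G$ is either (i) the label $M^t$ of some node $t$ of $T$, or (ii) the union of the labels of at least two children of a \textbf{union} node, or (iii) the union of the labels of at least two children of a \textbf{join} node. It therefore suffices to traverse the $O(n)$ nodes of $T$ and, for each, test whether a module of the corresponding shape on which Rule~\ref{rule:module} applies exists. Throughout I use that recognizing whether an induced subgraph is trivially perfect is polynomial (e.g.\ by looking for an induced $C_4$ or $P_4$), and that the independence number $\alpha(H)$ of a trivially perfect graph $H$ is computable in polynomial time, as already noted above (pick one vertex from each leaf bag of the universal clique decomposition of $H$).

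For category (i), for every node $t$ I test whether $G[M^t]$ is trivially perfect and, if so, compute $\alpha(G[M^t])$; if this value is at least $2k+5$, then $M^t$ is a module on which Rule~\ref{rule:module} applies, and we output it. For category (ii), at a union node $t$ the graph $G[M^t]$ is the disjoint union of the graphs induced by the labels of the children, so a union $M$ of a sub-collection $S$ of at least two such labels induces the disjoint union of the graphs $G[M^s]$ for $s\in S$. By heredity and Proposition~\ref{def:tp-recursive}, this disjoint union is trivially perfect if and only if each $G[M^s]$ is, and in that case $\alpha$ is additive over $S$. Hence the maximum attainable value is obtained by letting $S$ be the set of all children $s$ of $t$ with $G[M^s]$ trivially perfect; if $|S|\ge 2$ and $\sum_{s\in S}\alpha(G[M^s])\ge 2k+5$, the union of these labels is an applicable module and we output it.

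Category (iii) turns out to add nothing new. At a join node $t$, a union $M$ of a sub-collection $S$ of the children's labels induces the join of the graphs $G[M^s]$, $s\in S$. If $G[M]$ is trivially perfect, then each $G[M^s]$ is trivially perfect as an induced subgraph. Moreover, in any join every independent set lies entirely inside one of the joined parts, so $\alpha(G[M])=\max_{s\in S}\alpha(G[M^s])$. Thus if $\alpha(G[M])\ge 2k+5$ then some single child $s$ of $t$ already satisfies that $G[M^s]$ is trivially perfect with $\alpha(G[M^s])\ge 2k+5$; since $M^s$ is a node label, this situation is already caught by the category-(i) test. Consequently no separate handling of join nodes is required.

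The algorithm therefore computes the modular decomposition, runs the category-(i) test on every node and the category-(ii) test on every union node, and returns an applicable module the moment one is found; if none is found, it declares Rule~\ref{rule:module} inapplicable. Correctness follows from Theorem~\ref{thm:module-decomp}(3): any module witnessing applicability of the rule belongs to one of the three categories, and is then detected either by the category-(i) test (for categories (i) and (iii)) or by the category-(ii) test (for category (ii), using additivity of $\alpha$ over disjoint unions); conversely every module the algorithm outputs is genuinely applicable, since it has been explicitly verified to be trivially perfect with independence number at least $2k+5$. All steps run in polynomial time. The only genuinely non-routine point is this structural reduction of the space of modules — in particular the observation that modules arising from join nodes never give anything beyond node labels — with everything else reducing to standard polynomial-time subroutines on trivially perfect graphs.
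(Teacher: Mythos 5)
Your proposal is correct and follows essentially the same route as the paper: compute the modular decomposition, test every node label directly, handle union nodes via the union of all trivially perfect children (the paper's module $N_t$ is exactly your set $S$), and dispose of join nodes by observing that $\alpha$ of a join is the maximum over the parts, so the witness reduces to a single child's label already covered by the node-label test.
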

\begin{proof}
  Using Theorem~\ref{thm:module-decomp} we compute the module
  decomposition~$(T,(M^t)_{t \in V(T)})$ of~$G$.  Then we verify
  applicability of Rule~\ref{rule:module} to each module~$M^t$
  for~$t\in V(T)$, by checking whether~$G[M]$ is trivially perfect and
  contains an independent set of size~$2k+5$ (the latter check can be
  done in polynomial time since~$G[M]$ is trivially perfect).
  Moreover, we perform the same check on all the modules~$N_t$ formed
  as follows: take a union node~$t\in V(T)$, and construct a
  module~$N_t$ by taking the union of labels of those children of~$t$
  that induce trivially perfect graphs.
  
  We now argue that if Rule~\ref{rule:module} is applicable to some
  module~$M$ in~$G$, then this algorithm will encounter some (possibly
  different) module~$M'$ to which Rule~\ref{rule:module} is applicable
  as well.  By the third point of Theorem~\ref{thm:module-decomp},
  either~$M=M^t$ for some~$t\in V(T)$, or~$M$ is the union of a
  collection of labels of children of some union or join node.  In the
  first case the algorithm verifies~$M$ explicitly.  In the following,
  let~$\alpha(H)$ denote the size of a maximum independent set in a
  graph~$H$.
  
  If now~$M$ is a union of labels of some children of a union
  node~$t$, then by heredity~$M\subseteq N^t$.  Moreover,~$N^t$
  induces a trivially perfect graph (since trivially perfect graphs
  are closed under taking disjoint union) and clearly $\alpha(N^t)\geq
  \alpha(M)$.  Hence, Rule~\ref{rule:module} is applicable to
  $M'=N^t$, and this will be discovered by the algorithm.
  
  Finally, suppose~$M$ is a union of labels of some children
  $t_1,t_2,\ldots,t_p$ of a join node~$t$.  Observe that since for
  every~$i\neq j$, every vertex of~$M^{t_i}$ is adjacent to every
  vertex of~$M^{t_j}$, it follows that
  $\alpha(G[M])=\max_{i=1,2,\ldots,p}\alpha(G[M^{t_i}])$.  Without
  loss of generality suppose that the maximum on the right hand side
  is attained for the module~$M^{t_1}$.  Then by heredity~$G[M^{t_1}]$
  is trivially perfect, and $\alpha(G[M^{t_1}])=\alpha(G[M])\geq
  2k+5$.  Therefore Rule~\ref{rule:module} is applicable to
  $M'=M^{t_1}$, and this will be discovered by the algorithm.
\end{proof}

We remark here that for the kernelization algorithm it is not
necessary to be sure that Rule~\ref{rule:module} is inapplicable at
all.  Instead, we could perform it on demand.  More precisely, during
further analysis of the structure of $G-X$ we argue that some modules
have to be small, since otherwise Rule~\ref{rule:module} would be
applicable.  This analysis can be performed by a polynomial-time
algorithm that would just apply Rule~\ref{rule:module} on any
encountered module that needs shrinking.  However, we feel that the
fact that Rule~\ref{rule:module} can be indeed applied exhaustively
provides a better insight into the algorithm, and streamlines the
presentation.

Having introduced and verified Rules~\ref{rule:twin}
and~\ref{rule:module}, we can now prove that after applying them
exhaustively, all the trivially perfect modules in the graph are
small.

\begin{lemma}
  \label{lem:small-ind-twin-small}
  A (possibly disconnected) trivially perfect graph with maximum true
  twin class size $t$ and maximum independent set size $\alpha$ has at most
  $(2\alpha - 1)t$ vertices in total.
\end{lemma}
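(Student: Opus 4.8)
The plan is to argue by induction on the structure of the universal clique decomposition (UCD) of the trivially perfect graph, proving the slightly more convenient statement that a connected trivially perfect graph $G$ with maximum true twin class size $t$ and maximum independent set size $\alpha$ has at most $(2\alpha-1)t$ vertices, and then observe that the disconnected case follows by summing over components, since the independent set sizes add up while the component count does not exceed $\alpha$. Actually, the clean way is to induct directly on the UCD, treating bags as the combinatorial skeleton.

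First I would set up the induction on a connected trivially perfect graph $G$ with UCD rooted at a bag $B$ of size $b \le t$, with children subtrees giving components $C_1,\dots,C_r$ of $G - B$. The key numerical facts are: (i) every vertex of $B$ together with a maximum independent set of any $C_i$ forms an independent set is \emph{false}, so instead the correct observation is that an independent set of $G$ is contained in the union of independent sets of the $C_i$ (since $B$ is a clique adjacent to everything below), hence $\alpha(G) = \max_i \alpha(C_i)$ if $r \ge 1$, and $\alpha(G) = 1$ if $r = 0$; (ii) $|V(G)| = b + \sum_{i=1}^r |V(C_i)|$; and (iii) the maximum true twin class size in each $C_i$ is at most $t$. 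The delicate point is that $r$ is \emph{not} bounded by $\alpha$ in general — many small components can hang off a bag — so a naive ``$r \le \alpha$'' bound fails, which is exactly why the constant is $2\alpha-1$ rather than $\alpha$.

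The main obstacle, then, is handling the proliferation of children. I would resolve it by distinguishing children $C_i$ with $\alpha(C_i) = \alpha(G)$ from those with $\alpha(C_i) < \alpha(G)$. Here is the crucial structural claim: the root bag $B$ and its children are all true twins of each other in a certain degenerate sense — more precisely, I claim there can be at most one child $C_i$ that is a single bag equal to a true twin class; wait, the cleaner route is: since $B$ is the universal clique, no child $C_i$ can be such that $B \cup (\text{some twin class of } C_i)$ is a twin class of $G$ larger than $t$, but $B$ itself is a twin class, so $b \le t$ directly with nothing further needed there. The real work is bounding $\sum |V(C_i)|$. For components with $\alpha(C_i) = \alpha$, induction gives $|V(C_i)| \le (2\alpha-1)t$, but there could be up to $\alpha$... no — there is no bound on how many such components there are either. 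This forces a different decomposition: I would instead prove the statement by induction where the inductive quantity is $\alpha$, peeling off \emph{one} vertex from a maximum independent set at a time. Concretely: pick any leaf bag $L$ of the UCD (a true twin class, size $\le t$), which contributes one vertex to every maximum independent set through its position; the graph $G - L$ is trivially perfect, still has twin classes of size $\le t$, and has $\alpha(G-L) \le \alpha(G) - 1$ provided $L$ was chosen so that every maximum independent set meets $L$. But such an $L$ need not exist. The honest fix — and what I expect the authors do — is to induct on $\alpha$ via the following: root the UCD; if $\alpha = 1$ then $G$ is a clique, $|V(G)| \le t = (2\cdot 1 - 1)t$; if $\alpha \ge 2$, then $G$ is disconnected after removing the universal clique $U$ (of size $\le t$) into components $C_1, \dots, C_r$ with $\max_i \alpha(C_i) = \alpha$. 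Reorder so $\alpha(C_1) = \alpha$. Then $G - V(C_1) - U$ is trivially perfect with independent set size $\le \alpha$ still (bad). This still does not obviously close.

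Given these difficulties, the route I would actually commit to is a direct counting argument, not an induction: consider the UCD forest, and for each leaf bag assign to it the root-to-leaf path of bags above it; each vertex lies in exactly one bag, each bag of size $\le t$; the number of \emph{leaf bags} is at most $\alpha$ (pick one vertex per leaf bag to get an independent set); and the total number of bags is at most $2(\#\text{leaf bags}) - 1 \le 2\alpha - 1$ in each tree and summing over the forest stays $\le 2\alpha - 1$ because a forest with $\ell$ leaves total has at most $2\ell - 1$ nodes \emph{only if it is one tree} — for a forest with $c$ components and $\ell$ leaves it is at most $2\ell - c \le 2\ell - 1 \le 2\alpha - 1$, using that in a UCD every internal node has $\ge 2$ children. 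Hence the number of bags is at most $2\alpha - 1$, each bag has at most $t$ vertices, giving $|V(G)| \le (2\alpha-1)t$. I expect the main obstacle to be exactly this combinatorial lemma — that a rooted forest in which every internal node has at least two children, with $\ell$ leaves in total, has at most $2\ell - 1$ nodes — together with correctly identifying that the number of leaf bags of the UCD is at most $\alpha$ because selecting one vertex from each leaf bag yields an independent set (two leaf bags are incomparable in the forest order, so by the UCD edge condition their vertices are non-adjacent). Once that is in place the arithmetic is immediate.
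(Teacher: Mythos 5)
Your final committed argument (at most $\alpha$ leaf bags since one vertex per leaf bag is independent, at most $2\alpha-1$ bags total since every internal node of the UCD forest has at least two children, each bag a true twin class of size at most $t$) is correct and is exactly the paper's proof. The earlier discarded inductive attempts are harmless detours; the counting argument you settle on is the intended one.
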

\begin{proof}
  Let $\mathcal{T}$ be the UCD of $G$, a trivially perfect graph with
  independent set number $\alpha$ and every true twin class of size at
  most $t$.  Since any collection comprising one vertex from each leaf
  bag of $\mathcal{T}$ forms an independent set, there are at most
  $\alpha$ leaf bags in $\mathcal{T}$.  Thus the number of nodes of
  $\mathcal{T}$ in total is at most $2 \alpha - 1$.  Since every bag
  of the decomposition $T \subseteq V(G)$ is a true twin class, we
  conclude that there are at most $(2\alpha - 1)t$ vertices in $G$.
\end{proof}

\begin{corollary}\label{cor:small-modules}
  Suppose an instance $(G,k)$ is reduced, and moreover
  Rules~\ref{rule:twin} and~\ref{rule:module} are not applicable to
  $(G,k)$.  Then for every module $M \subseteq V(G)$ such that $G[M]$
  is trivially perfect, we have that $|M|=O(k^2)$.
\end{corollary}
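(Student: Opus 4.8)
The plan is to combine Lemma~\ref{lem:small-ind-twin-small} with the two non-applicability assumptions. Fix an arbitrary module $M\subseteq V(G)$ with $G[M]$ trivially perfect. Since $G[M]$ is a (possibly disconnected) trivially perfect graph, Lemma~\ref{lem:small-ind-twin-small} bounds $|M|$ by $(2\alpha-1)t$, where $\alpha$ is the maximum size of an independent set in $G[M]$ and $t$ is the maximum size of a true twin class in $G[M]$. Hence it suffices to argue that both $\alpha$ and $t$ are $O(k)$.

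First I would bound $\alpha$. Since Rule~\ref{rule:module} is not applicable to $(G,k)$ and $G[M]$ is trivially perfect, the precondition of the rule must fail, i.e. $G[M]$ cannot contain an independent set of size $2k+5$; hence $\alpha\le 2k+4$.

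Next I would bound $t$. The only subtlety here is that a true twin class of $G[M]$ need not, a priori, be a true twin class of $G$. This, however, follows from the module property: if $u,v\in M$ are true twins in $G[M]$, i.e.\ $N_{G[M]}[u]=N_{G[M]}[v]$, then since $M$ is a module we have $N_G(u)\setminus M=N_G(v)\setminus M$, and adding this common set to both sides of $N_{G[M]}[u]=N_{G[M]}[v]$ yields $N_G[u]=N_G[v]$, so $u$ and $v$ are true twins in $G$ as well. Thus every true twin class of $G[M]$ is contained in a true twin class of $G$. Since Rule~\ref{rule:twin} is not applicable to $(G,k)$, every true twin class of $G$ has size at most $2k+5$, and therefore $t\le 2k+5$.

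Finally, substituting these bounds into Lemma~\ref{lem:small-ind-twin-small} gives $|M|\le (2\alpha-1)t\le (2(2k+4)-1)(2k+5)=O(k^2)$, as claimed. Everything apart from the twin-class transfer argument is a direct substitution, so I do not expect any real obstacle; the twin-class step is the one place where the module structure must genuinely be invoked.
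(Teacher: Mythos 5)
Your proof is correct and follows essentially the same route as the paper's: bound the independence number of $G[M]$ by $2k+4$ via inapplicability of Rule~\ref{rule:module}, transfer true twin classes of $G[M]$ to true twin classes of $G$ using the module property to invoke Rule~\ref{rule:twin}, and plug both bounds into Lemma~\ref{lem:small-ind-twin-small}. Your explicit verification of the twin-transfer step (which the paper only asserts in a parenthetical) is a welcome addition, and your bound $t\le 2k+5$ is in fact the precise consequence of Rule~\ref{rule:twin}'s threshold.
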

\begin{proof}
  Suppose $M$ is such a module.  Observe that members of every true
  twin class in $G[M]$ are also true twins in $G$ (since $M$ is a
  module).  Hence twin classes in $G[M]$ have size at most $2k+4$, as
  otherwise Rule~\ref{rule:twin} would be applicable.  Moreover, if
  $G[M]$ contained an independent set of size $2k+5$, then
  Rule~\ref{rule:module} would be applicable.  By
  Lemma~\ref{lem:small-ind-twin-small}, we infer that $|M|\leq
  (4k+7)(2k+4)=O(k^2)$.
\end{proof}

From now on we assume that in the considered instance $(G,k)$ we have
exhaustively applied Rules~\ref{rule:c4}--\ref{rule:module}, using the
algorithms of
Lemmas~\ref{lem:basic-rules-sound},~\ref{lem:rule:twin:sound},
and~\ref{lem:recognizing-module}.  Hence
Corollary~\ref{cor:small-modules} can be used.  Observe that to
perform this step, we do not need to construct the small modulator $X$
at all.  However, we hope that the reader already sees that
Rules~\ref{rule:c4}--\ref{rule:module} will be useful for shrinking
too large parts of $G-X$ between the important bags.

\subsection{Kernelizing non-important parts (irrelevant vertex
  deletion)}
\label{sec:kernel-final}

\newcommand{\up}{\uparrow}

\newcommand{\down}{\downarrow}

Recall that we have fixed a small \modulator{}~$X$ with~$|X|\leq 4k$ such
that~$G-X$ is a trivially perfect graph with universal clique
decomposition~$\mathcal{T}$.  Moreover,
Rules~\ref{rule:c4}--\ref{rule:module} are inapplicable to $(G,k)$.
By Lemma~\ref{lem:bounded-x-neighborhoods} we have that the number of
$X$-neighborhoods is $O(k^4)$.  By the marking procedure, we have
marked a set~$I$ of~$O(k)$ bags of~$\mathcal{T}$ as important, in such
a manner that every connected component of~$\mathcal{T}-I$ is adjacent
to at most two vertices of~$I$, and is in fact of one of the three
forms described at the end of Section~\ref{sec:impbags}.

Thus, the whole vertex set of $G-X$ can be partitioned into four sets:
\begin{description}
\item[$V_I$:] vertices contained in bags from~$I$;
\item[$V_0$:] vertices contained in bags of those components
  of~$\mathcal{T}-I$ that are not adjacent to any bag from~$I$;
\item[$V_1$:] vertices contained in bags of those components
  of~$\mathcal{T}-I$ that are adjacent to exactly one bag from~$I$;
\item[$V_2$:] vertices contained in bags of those components
  of~$\mathcal{T}-I$ that are adjacent to exactly two bags from $I$.
\end{description}
We are going to establish an upper bound on the cardinality of each of
these sets separately.  Upper bounds for~$V_I$,~$V_0$, and~$V_1$
follow already from the introduced reduction rules, but for~$V_2$ we
shall need a new reduction rule.  The upper bounds on the
cardinalities of~$V_I$ and~$V_0$ are quite straightforward.

\begin{lemma}\label{lem:vI}
  $|V_I|\leq O(k^6)$.
\end{lemma}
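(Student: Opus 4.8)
The plan is to bound $|V_I|$ by bounding the number of important bags, the number of vertices per bag, and the number of vertices in the parts of the trees that hang off an important bag without themselves being important.

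\medskip

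\textbf{Setup.} Recall that $|I| \le 12k = O(k)$. Each important bag $B_t$ is a true twin class in $G - X$, but not necessarily in $G$, since vertices of $B_t$ may have different $X$-neighborhoods. However, by Lemma~\ref{lem:X-nei-inclusion-same} the $X$-neighborhoods of vertices in a single bag form a chain under inclusion, so $B_t$ splits into at most $|X|+1 \le 4k+1$ sub-classes, each of which consists of vertices that are true twins in all of $G$. Thus every such sub-class is a true twin class in $G$, and by Rule~\ref{rule:twin} has size at most $2k+5$. Hence $|B_t| \le (4k+1)(2k+5) = O(k^2)$ for every $t \in I$, which gives $\sum_{t \in I} |B_t| = O(k) \cdot O(k^2) = O(k^3)$.

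\medskip

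\textbf{The bulk of $V_I$.} The set $V_I$, as defined, is exactly $\bigcup_{t \in I} B_t$ — vertices \emph{contained in bags from} $I$ — so the computation above already gives $|V_I| = O(k^3)$, which is well within $O(k^6)$. I suspect, though, that the intended reading (to make later accounting uniform) is that $V_I$ also absorbs the small components of $\mathcal{T} - I$ that are ``attached above'' important bags, or more precisely that the genuinely delicate estimate is elsewhere; in any case the honest bound for $\bigcup_{t\in I}B_t$ is $O(k^3)$, and I would simply record $|V_I| = O(k^3) \le O(k^6)$.

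\medskip

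\textbf{Main obstacle.} The only subtlety is justifying that each sub-class of an important bag obtained by refining according to $X$-neighborhood is genuinely a true twin class in $G$: two vertices $u,v \in B_t$ are adjacent (so $N[u]=N[v]$ requires checking nothing about the edge $uv$), their neighborhoods inside $G-X$ coincide because $B_t$ is a bag of the UCD and hence a true twin class of $G-X$, and their $X$-neighborhoods coincide by construction of the sub-class; combining, $N_G[u] = N_G[v]$. So Rule~\ref{rule:twin} applies and the bound $2k+5$ per sub-class is legitimate. With $O(k)$ important bags, $O(k)$ sub-classes each, and $O(k)$ vertices per sub-class, we get $|V_I| = O(k^3)$, and in particular $|V_I| \le O(k^6)$, completing the proof.
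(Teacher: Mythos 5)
Your proof is correct, and it follows the same overall scheme as the paper's: partition each important bag into classes of vertices with equal $X$-neighborhood, observe that each class is a set of pairwise true twins in $G$ (so has size at most $2k+5$ by inapplicability of Rule~\ref{rule:twin}), bound the number of classes per bag, and multiply by $|I|=O(k)$. The one place you diverge is in counting the classes per bag: the paper simply invokes the global bound of Lemma~\ref{lem:bounded-x-neighborhoods}, giving $O(k^4)$ classes per bag and hence $|V_I|=O(k^6)$, whereas you use the local nestedness of $X$-neighborhoods inside a single bag (Lemma~\ref{lem:X-nei-inclusion-same}), so that the distinct $X$-neighborhoods in a bag form a chain of subsets of $X$ and there are at most $|X|+1\leq 4k+1$ of them, yielding the sharper bound $|V_I|=O(k^3)$. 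This is precisely the refinement the authors mention in the remark immediately following the lemma; they deliberately settle for the weaker $O(k^6)$ bound because the bottleneck terms $|V_0|$ and $|V_1|$ are $O(k^6)$ and $O(k^7)$ anyway, so the sharper estimate does not improve the kernel size. Your side remark about whether $V_I$ should ``absorb'' further components is unnecessary --- $V_I$ is by definition exactly $\bigcup_{t\in I}B_t$ --- and your phrase ``is a true twin class'' should strictly read ``is contained in a true twin class'' (the paper commits the same harmless imprecision), but neither affects the validity of the bound.
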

\begin{proof}
  Consider for some $a \in I$ the bag $B_a$.  Note that $B_a$ is a
  module in $G-X$.  By Lemma~\ref{lem:bounded-x-neighborhoods} there
  are only $O(k^4)$ possible $X$-neighborhoods among vertices of
  $G-X$.  Hence, vertices of~$B_a$ can be partitioned into~$O(k^4)$
  classes w.r.t.~the neighborhoods in~$X$.  Each such class is a
  module in~$G$ that is also a clique, and hence it is a true twin
  class.  Since the twin reduction rule (Rule~\ref{rule:twin}) is not
  applicable, each true twin class has size at most~$2k+5$, which
  implies that $|B_a|\leq O(k^5)$.  As $|I|=O(k)$, we conclude that
  $|V_I|\leq O(k^6)$.
\end{proof}

We remark that using a more precise analysis of the situation in one
bag $B_a$ for $a\in I$, one can see that the $X$-neighborhoods of
elements of $B_a$ are nested, so there is only at most $|X|+1\leq
4k+1$ of them.  By plugging in this argument in the proof of
Lemma~\ref{lem:vI}, we obtain a sharper upper bound of $O(k^3)$
instead of $O(k^6)$.  However, the upper bounds on $|V_0|$ and $|V_1|$
are $O(k^6)$ and $O(k^7)$, respectively, so establishing a better
bound here would have no influence on the overall asymptotic kernel
size.  Hence, we resorted to a simpler proof of a weaker upper bound.

\begin{lemma}\label{lem:v0}
  $|V_0|\leq O(k^6)$.
\end{lemma}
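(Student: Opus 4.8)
The set $V_0$ consists of vertices lying in connected components of $\mathcal{T}-I$ that are not adjacent to any important bag. Since $I$ contains the root of every component of $\mathcal{T}$ that contains an important bag, a component $C$ of $\mathcal{T}-I$ contributing to $V_0$ must be an \emph{entire} connected component of $\mathcal{T}$ (hence of $G-X$) that contains no important bag. The plan is to show that each such component, together with its neighborhood in $X$, forms a trivially perfect module in $G$, and then invoke Corollary~\ref{cor:small-modules} together with a counting argument on the number of distinct such components.

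First I would observe that if $C$ is a connected component of $G-X$ containing no important bag, then no vertex $x \in X$ has a Type~1 or Type~2 neighborhood whose node $t_x$ lies in $C$ (by definition of $I_0 \subseteq I$). By Lemma~\ref{lem:nx-two-types}, every $x\in X$ therefore has a Type~0 neighborhood with respect to $C$: that is, $V(C)$ is either entirely contained in $N(x)$ or entirely disjoint from $N(x)$. This is exactly the statement that $V(C)$ is a module in $G$. Since $G[V(C)]$ is an induced subgraph of the trivially perfect graph $G-X$, it is trivially perfect, so by Corollary~\ref{cor:small-modules} we get $|V(C)| = O(k^2)$.

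It remains to bound the number of such components $C$. Each component $C$ is characterized (as far as $G$ is concerned) by the set $N_X(C) = \{x \in X : V(C) \subseteq N(x)\} \subseteq X$, together with the isomorphism type of $G[V(C)]$ and the $X$-neighborhoods of its individual vertices. I would argue that two components $C_1, C_2$ with the same set $N_X(C_1) = N_X(C_2)$ and that are isomorphic as "attached modules" can be merged: their union is again a trivially perfect module in $G$, so if too many such components coincide, Rule~\ref{rule:module} applies. More directly: for each fixed subset $Y \subseteq X$ that arises as $N_X(C)$ for some component $C$, the union of all components $C$ with $N_X(C) = Y$ is a single module $M_Y$ in $G$, and $G[M_Y]$ is trivially perfect (a disjoint union of trivially perfect graphs). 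By Corollary~\ref{cor:small-modules}, $|M_Y| = O(k^2)$, so the components with a fixed value of $N_X(C)$ contribute only $O(k^2)$ vertices in total. Finally, the number of distinct sets $N_X(C)$ is at most the number of distinct $X$-neighborhoods among vertices of $G-X$ (since every vertex $v \in V(C)$ has $N^X(v) = N_X(C)$, as $V(C)$ is a module), which is $O(k^4)$ by Lemma~\ref{lem:bounded-x-neighborhoods}. Multiplying, $|V_0| \leq O(k^4) \cdot O(k^2) = O(k^6)$.

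The main obstacle I expect is the bookkeeping in the merging step: one must be careful that the union $M_Y$ of components sharing the same $X$-neighborhood is genuinely a module in $G$ (not just in $G-X$), which follows because each component individually is a module with the same outside-neighborhood $Y$, so their union inherits this property; and that $G[M_Y]$ is trivially perfect, which follows from closure of trivially perfect graphs under disjoint union (Proposition~\ref{def:tp-recursive}). Once these two facts are in place, the application of Corollary~\ref{cor:small-modules} and the count are routine.
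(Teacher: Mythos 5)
Your proposal is correct and follows essentially the same route as the paper's proof: each component of $G-X$ without an important bag is a module in $G$ (via the Type~0/1/2 analysis), components are grouped by their $X$-neighborhood into $O(k^4)$ classes whose unions are trivially perfect modules of size $O(k^2)$ by Corollary~\ref{cor:small-modules}, giving $O(k^6)$ in total. The extra care you take about the union of a group being a module in $G$ and trivially perfect is exactly the implicit content of the paper's argument.
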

\begin{proof}
  Observe that $V_0$ is the union of bags of these connected
  components of $G-X$, whose universal clique decompositions (being
  components of $\mathcal{T}$) do not contain any important bag.  By
  the definition of important bags, each such connected component~$C$
  is a module in~$G$, and clearly its neighborhood is entirely
  contained in~$X$.  Recall that by
  Lemma~\ref{lem:bounded-x-neighborhoods} there are only~$O(k^4)$
  possible different~$X$-neighborhoods among vertices of~$G-X$.  Thus,
  we can group the connected components of~$G[V_0]$ according to their
  $X$-neighborhoods into~$O(k^4)$ groups, and the union of vertex sets
  in each such group forms a module in~$G$.  Since
  Rule~\ref{rule:module} is not applicable, by
  Corollary~\ref{cor:small-modules} we have that each of these modules
  has size~$O(k^2)$.  Thus we infer that~$|V_0|\leq O(k^6)$.
\end{proof}

To bound the size of $V_1$ we need a few more definitions.  Suppose
that~$C$ is a component of $\mathcal{T}-I$ that is adjacent to exactly
one important bag $a\in I$.  By the construction of~$I$, we have
that~$C$ is a tree rooted in a child of~$a$.  We shall say that~$C$ is
\emph{attached below~$a$}.  The union of bags of all the components of
$\mathcal{T}-I$ attached below~$a$ will be called the \emph{tassel
  rooted at~$a$}.  Thus,~$V_1$ can be partitioned into~$O(k)$ tassels.

\begin{lemma}\label{lem:one-tassel}
  For every $a\in I$, the tassel rooted at $a$ has size at most
  $O(k^6)$.
\end{lemma}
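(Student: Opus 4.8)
The idea is to cover the tassel rooted at $a$ by $O(k^4)$ modules of $G$, each of which induces a trivially perfect graph and hence has size $O(k^2)$ by Corollary~\ref{cor:small-modules}; multiplying gives the bound $O(k^6)$. The modules come from grouping the components of $\mathcal{T}-I$ attached below $a$ according to their $X$-neighborhood, so the crux is the following structural claim: \emph{every component $C$ of $\mathcal{T}-I$ attached below $a$ is a module of $G$, and all vertices of $V(C)$ have the same $X$-neighborhood}.

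To prove the claim, recall that $C$ is the node set of a full subtree of $\mathcal{T}$ rooted at a child of $a$ and contains no important bag; in particular $V(C)\cap I=\emptyset$, so $C$ avoids all the nodes $t_x$ for $x\in X$ (as these lie in $I_0\subseteq I$). First I would note that every vertex $w\in(V(G)\setminus X)\setminus V(C)$ is adjacent to all of $V(C)$ or to none of it: writing $w\in B_\ell$, the UCD axioms force $w$ to be adjacent to all of $V(C)$ exactly when $\ell$ is a strict ancestor of the root of $C$ (equivalently $\ell\preceq a$), because $\ell\notin V(C)$ rules out $\ell$ being equal to or a descendant of a node of $V(C)$. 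Then I would run through the three cases for $U_x:=N(x)\setminus X$ from Lemma~\ref{lem:nx-two-types} and verify that every $x\in X$ is also adjacent to all of $V(C)$ or to none: if $U_x$ is of Type~0 this is immediate since $G[V(C)]$ is connected and thus contained in a single connected component of $G-X$; if $U_x$ is of Type~1 then $U_x\subseteq\bigcup_{s\preceq t_x}B_s$ with $t_x\in I$, and since $C$ is a full rooted subtree avoiding $t_x$, no node of $V(C)$ is an ancestor of $t_x$, whence $V(C)\cap U_x=\emptyset$; if $U_x$ is of Type~2 the ``spine'' part $\bigcup_{s\preceq t_x}B_s$ contributes nothing for the same reason, while each subtree $S\in\mathcal{L}_x$ is again a full rooted subtree, so $V(S)$ and $V(C)$ are disjoint or nested, and the option $V(S)\subsetneq V(C)$ is excluded because it would place the parent $t_x$ of the root of $S$ inside $V(C)\subseteq V(G)\setminus I$ or force $V(S)=V(C)$. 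Thus only $V(C)\cap V(S)=\emptyset$ or $V(C)\subseteq V(S)$ survive, and $x$ sees all of $V(C)$ or none. Combining the two observations shows $V(C)$ is a module of $G$ and, since each $x\in X$ is adjacent to all or none of $V(C)$, all vertices of $V(C)$ share one $X$-neighborhood.

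Given the claim, for each $Z$ in the family $\mathcal{F}$ of $X$-neighborhoods of $G$ let $M_Z$ be the union of the sets $V(C)$ over all components $C$ attached below $a$ whose common $X$-neighborhood equals $Z$. I would check that $M_Z$ is again a module of $G$: a vertex $u\in X$ is adjacent to all of $V(C)$ precisely when $u\in Z$, independently of $C$; and a vertex $u\in(V(G)\setminus X)\setminus M_Z$, say $u\in B_\ell$, is adjacent to all of $V(C)$ precisely when $\ell\preceq a$, again independently of $C$, since the roots of all these components are children of $a$. As $G[M_Z]$ is an induced subgraph of the trivially perfect graph $G-X$, Corollary~\ref{cor:small-modules} gives $|M_Z|=O(k^2)$. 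Since the tassel rooted at $a$ equals $\bigcup_{Z\in\mathcal{F}}M_Z$ and $|\mathcal{F}|=O(k^4)$ by Lemma~\ref{lem:bounded-x-neighborhoods}, its size is at most $O(k^4)\cdot O(k^2)=O(k^6)$.

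\textbf{Main obstacle.} The delicate part is the Type~0/1/2 case analysis showing that no vertex of $X$ can ``half-see'' a component attached below $a$; this is exactly where it matters that $C$ is a full rooted subtree of $\mathcal{T}$ and that $V(C)$ avoids every node $t_x$ — which is precisely what the LCA-closure together with adding component roots to $I$ buys us. Once this structural fact is secured, everything else is bookkeeping with modules and the already-established bound on the number of $X$-neighborhoods.
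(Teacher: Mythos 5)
Your proof is correct and follows essentially the same route as the paper: each component attached below $a$ is a module with a uniform $X$-neighborhood (the paper derives the ``all or none'' property for each $x\in X$ directly from Lemma~\ref{lem:nx-two-types}, which you spell out via the Type~0/1/2 case analysis), the components are grouped into $O(k^4)$ classes by $X$-neighborhood, and each class is a trivially perfect module of size $O(k^2)$ by Corollary~\ref{cor:small-modules}.
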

\begin{proof}
  Let $C_1,C_2,\ldots,C_r$ be the components of $\mathcal{T}-I$ rooted
  at the children of $a$, whose union of bags forms the tassel rooted
  at $a$.  Recall that none of the $C_i$s contains any important bag.
  Therefore, from Lemma~\ref{lem:nx-two-types} we infer that for any
  $C_i$ and any $x\in X$, either all the vertices from the bags of
  $C_i$ are adjacent to $x$, or none of them.  Thus, the union of bags
  of each $C_i$ forms a module in $G$: The vertices in this union have
  the same $X$-neighborhood, and moreover their neighborhoods in $G-X$
  are formed by the vertices from the bags on the path from $a$ to the
  root of $a$'s connected component in $\mathcal{T}$.  Similarly as in
  the proof of Lemma~\ref{lem:v0}, by
  Lemma~\ref{lem:bounded-x-neighborhoods} there are only $O(k^4)$
  possible $X$-neighborhoods, so we can partition the components $C_i$
  into $O(k^4)$ classes with respect to their neighborhoods in $X$.
  The union of bags in each such class forms a module in $G$; since
  Rule~\ref{rule:module} is not applicable, by
  Corollary~\ref{cor:small-modules} we infer that its size is bounded
  by $O(k^2)$.  Thus, the total number of vertices in all the
  components $C_i$ is at most $O(k^6)$.
\end{proof}

As $|I|=O(k)$, Lemma~\ref{lem:one-tassel} immediately implies the
following.

\begin{lemma}\label{lem:v1}
  $|V_1|\leq O(k^7)$.
\end{lemma}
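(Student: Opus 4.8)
The plan is to bound $|V_1|$ by combining Lemma~\ref{lem:one-tassel} with the bound on the number of important bags. Recall that by the construction in Section~\ref{sec:impbags} we have $|I|=O(k)$; more precisely $|I|\leq 3|X|\leq 12k$. The set $V_1$ consists of vertices lying in bags of components of $\mathcal{T}-I$ that are adjacent to exactly one important bag, and by the classification at the end of Section~\ref{sec:impbags}, every such component is a subtree rooted at a child of some fixed $a\in I$. Therefore $V_1$ decomposes as a disjoint union, over $a\in I$, of the tassel rooted at $a$.

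First I would invoke Lemma~\ref{lem:one-tassel}, which states that for every $a\in I$ the tassel rooted at $a$ has size at most $O(k^6)$. Since the tassels rooted at distinct important bags are pairwise disjoint (each component of $\mathcal{T}-I$ is adjacent to at most one bag of $I$ in this case, so it belongs to exactly one tassel), and since there are at most $|I|=O(k)$ important bags, summing the per-tassel bound gives
\[
|V_1| \;=\; \sum_{a\in I} \bigl(\text{size of tassel rooted at }a\bigr) \;\leq\; O(k)\cdot O(k^6) \;=\; O(k^7).
\]
This is essentially the whole argument; there is no real obstacle here, as the statement is an immediate corollary of the preceding lemma. The only point requiring a word of care is that the tassels indeed partition $V_1$, which follows directly from the definition of $V_1$ and the structural description of the components of $\mathcal{T}-I$. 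I would write the proof in two or three sentences, citing Lemma~\ref{lem:one-tassel} and the bound $|I|=O(k)$ established via Lemma~\ref{lem:lca}, and conclude. The genuinely hard work was already done in Lemma~\ref{lem:one-tassel} (reducing each tassel via the module reduction rule and the bound on the number of $X$-neighborhoods); the present statement is merely the aggregation step.
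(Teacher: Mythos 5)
Your proof is correct and matches the paper exactly: the paper derives Lemma~\ref{lem:v1} as an immediate consequence of Lemma~\ref{lem:one-tassel} together with $|I|=O(k)$, summing the $O(k^6)$ per-tassel bound over the $O(k)$ important bags. Nothing further is needed.
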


We are left with bounding the cardinality of $V_2$.  Let us fix any
component~$C$ of~$\mathcal{T}-I$ which is adjacent in~$\mathcal{T}$ to
two nodes of~$I$.  From the construction of $I$, it follows that $C$
has the following form:
\begin{itemize}
\item $C$ contains a path $P=a_1-a_2-\ldots-a_d$ such that in
  $\mathcal{T}$, node $a_d$ is a child of an important node $b^\up$,
  and $a_1$ has exactly one important child $b^\down$.
\item For every $i=1,2,\ldots,d$, $C$ contains also all the subtrees
  of $\mathcal{T}$ rooted in children of $a_i$ that are different from
  $a_{i-1}$ (where $a_{0}=b^\down$).
\end{itemize}
Such a component~$C$ will be called a \emph{comb} (see
Figure~\ref{fig:comb}).  The path~$P$ is called the \emph{shaft} of a
comb; the union of the bags of the shaft will be denoted by~$Q$.  The
union of the bags of the subtrees rooted in children of~$a_i$, apart
from~$a_{i-1}$, will be called the \emph{tooth at~$i$}, and denoted
by~$R_i$.  Note that the subgraph induced by a tooth is not
necessarily connected; it is, however, always non-empty by the
definition of the universal clique decomposition.  We also denote
$R=\bigcup_{i=1}^d R_i$.  By somehow abusing the notation, we will
also denote $B_i=B_{a_i}$ for $i=1,2,\ldots,d$.  The number of
teeth~$d$ is called the \emph{length} of a comb.

%
%

\begin{figure}[t]
  \centering
  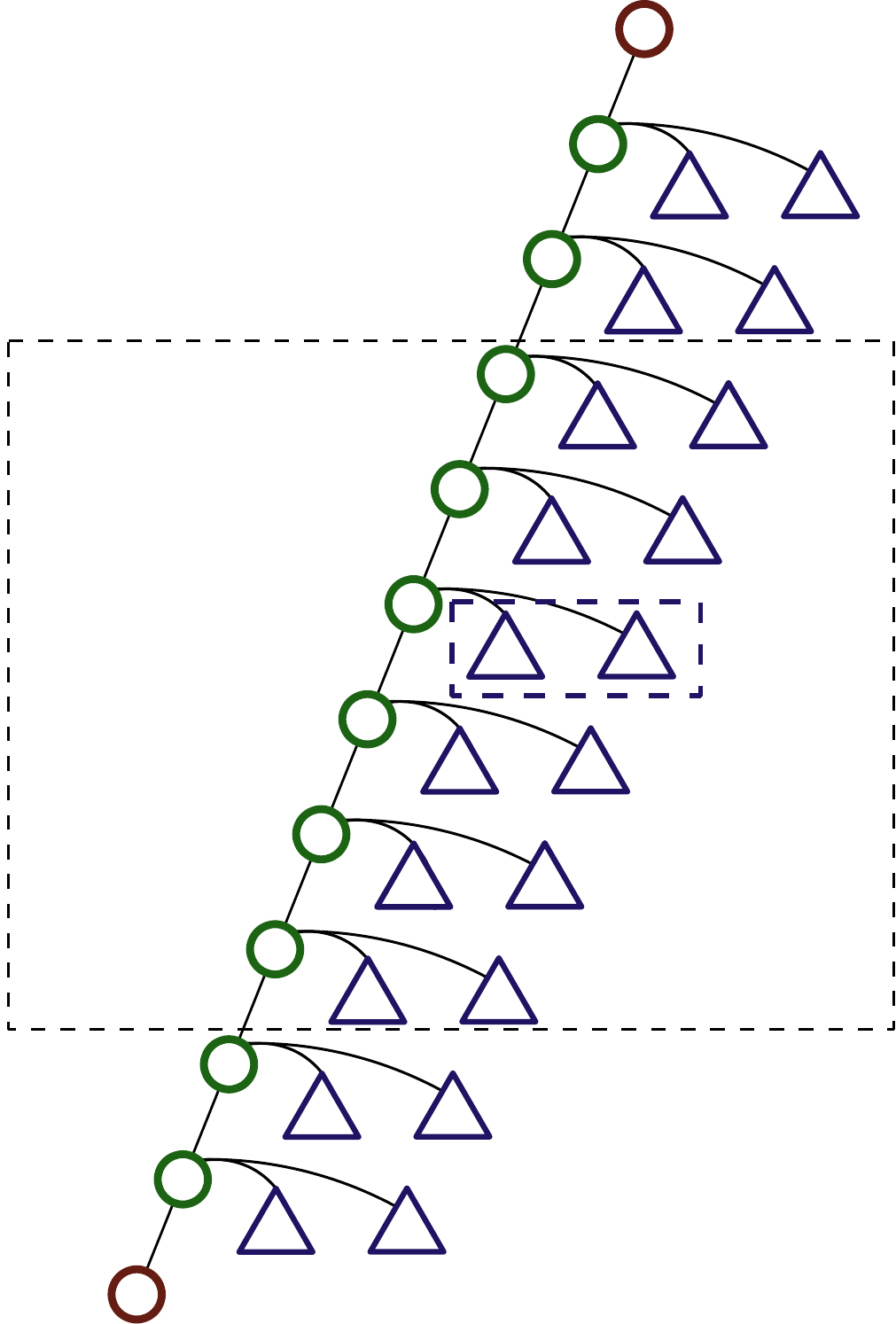
  \caption{The anatomy of a \emph{comb}.  The top and bottom bags,
    $b^\up$ and $b^\down$, are important bags.}
  \label{fig:comb}
\end{figure}

%
%

Since the comb $C$ does not contain any important vertices, from
Lemma~\ref{lem:nx-two-types} and the construction of~$I$ we
immediately infer the following observation about the
$X$-neighborhoods of vertices of the shaft and the teeth.

\begin{lemma}\label{lem:comb-neighborhoods}
  There exists two sets $Y,Z$ with
  $Z \subseteq Y \subseteq X$
  such that $N_X(u)=Y$ for every $u\in Q$ and $N_X(v)=Z$ for every $v\in
  R$.
\end{lemma}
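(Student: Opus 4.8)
The statement claims that all shaft vertices have a common $X$-neighborhood $Y$ and all tooth vertices have a common $X$-neighborhood $Z$, with $Z\subseteq Y$. The plan is to exploit Lemma~\ref{lem:nx-two-types}, which classifies the neighborhood $U_x=N(x)\setminus X$ of each $x\in X$ as Type~0,~1, or~2, together with the fact that the comb $C$ contains no important bag --- in particular, for each $x\in X$ with $U_x$ of Type~1 or~2, the associated node $t_x$ lies in $I_0\subseteq I$, hence $t_x\notin V(C)$.

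\textbf{First step: a common value on the shaft.} Fix $x\in X$ and look at how $U_x$ meets the shaft $Q=\bigcup_{i=1}^d B_{a_i}$. If $U_x$ is of Type~0, then since $U_x$ is a union of whole connected components of $G-X$ and the shaft bags $a_1,\dots,a_d$ all lie in a single component $C_0$ of $G-X$, either all of $Q$ is in $U_x$ or none of it is. If $U_x$ is of Type~1 or~2 with associated node $t_x$, then $t_x\in I$, so $t_x$ is \emph{not} among $a_1,\dots,a_d$; since $\bigcup_{s\prec t_x}B_s\subseteq U_x\subseteq\bigcup_{s\preceq t_x}B_s$ (Type~1) or the Type~2 description holds, and each $a_i$ is either a strict ancestor of $t_x$, or a strict descendant of $t_x$ lying on a subtree not selected into $U_x$, or incomparable with $t_x$, one checks case by case that either $B_{a_i}\subseteq U_x$ for all $i$ or $B_{a_i}\cap U_x=\emptyset$ for all $i$ --- the key point being that $t_x$ cannot lie strictly between $a_1$ and $a_d$, so no ``partial'' bag occurs along the shaft. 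Thus for every $x\in X$ either $Q\subseteq U_x$ or $Q\cap U_x=\emptyset$, which is exactly the assertion that all vertices of $Q$ share one $X$-neighborhood; call it $Y$. The same argument applied to each tooth $R_i$ (whose bags again all lie below $a_i$ inside $C_0$, with no important bag among them) shows every vertex of $R$ has the same $X$-neighborhood $Z$; one should double-check that the value is the same across different teeth, which follows because a Type~1/2 node $t_x$ sitting above the comb forces the same all-or-nothing behaviour on every bag of $C$ simultaneously, and a Type~0 neighborhood treats the whole component $C_0$ uniformly.

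\textbf{Second step: $Z\subseteq Y$.} Pick any $u\in Q$ and any $v\in R$. By the shape of a comb, $v$ lies in a subtree rooted at a child of some $a_i$, so in the UCD of $C_0$ we have $u'\preceq v$ for the shaft vertex $u'\in B_{a_i}$ that is the bag-ancestor of $v$; in particular, taking $u'$ (which has $X$-neighborhood $Y$), Corollary~\ref{cor:preceqN} gives $N^X(u')\supseteq N^X(v)$, i.e.\ $Y\supseteq Z$. (If $a_i=a_0=b^\downarrow$ is not part of the comb, note the path still passes through $a_1\in Q$, whose $X$-neighborhood is $Y$ and which is an ancestor of $v$.) This completes the proof.

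\textbf{Main obstacle.} The only delicate point is the exhaustive case analysis in the first step: verifying that a Type~1 or Type~2 neighborhood, whose branch node $t_x$ is forced out of the comb by the marking procedure, cannot slice a shaft bag or a tooth bag partially. This is where the construction of $I$ (LCA-closure of $I_0$ plus component roots) does the work --- it guarantees no $t_x$ sits in the interior of the shaft --- so the argument is really a bookkeeping check against Lemma~\ref{lem:nx-two-types} rather than anything deep.
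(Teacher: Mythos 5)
Your proof is correct and follows exactly the route the paper intends: the paper states this lemma as an immediate consequence of Lemma~\ref{lem:nx-two-types} and the construction of $I$ (since the comb contains no important bag, no node $t_x$ for a Type~1 or Type~2 neighborhood can lie inside it, so no bag of the comb is ``sliced''), and your case analysis together with the $\preceq$-monotonicity of $X$-neighborhoods fills in precisely those details. One minor imprecision: when $t_x$ is a descendant of $b^{\downarrow}$ the behaviour is not all-or-nothing on the whole comb (the shaft is fully included while every tooth is excluded), so your justification of uniformity across teeth should cite this case too --- but uniformity within $Q$ and within $R$ separately still holds, which is all the lemma requires.
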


In particular, Lemma~\ref{lem:comb-neighborhoods} implies that every
tooth of a comb is a module.  Hence, since Rule~\ref{rule:module} is
not applicable, we infer that $|R_i|=O(k^2)$ for $i=1,2,\ldots,d$.
Also, observe that each $B_i$ is a twin class, so by inapplicability
of Rule~\ref{rule:twin} we conclude that $|B_i|\leq 2k+5$ for each
$i=1,2,\ldots,d$.

Since $\mathcal{T}$ is a forest and $|I|=O(k)$, it follows that in
$\mathcal{T}-I$ there are $O(k)$ combs.  As we already observed, for
each comb the sizes of individual teeth and bags on the shaft are
bounded polynomially in $k$.  Hence, the only thing that remains is to
show how to reduce combs that are long.  In order to do this, we need
one more definition: a tooth $R_i$ is called \emph{simple} if $G[R_i]$
is edgeless, and it is called \emph{complicated} otherwise.  We can
now state the final reduction rule.

\newcommand{\rmv}{\beta} \newcommand{\rmvt}{R_{\rmv}}
\newcommand{\lind}{\alpha}

\begin{redrule}\label{rule:comb}
  Suppose $C$ is a comb of length at least $(4k+3)^2$, and adopt the
  introduced notation for the shaft and the teeth of $C$.  Define an
  index $\rmv$ as follows:
  \begin{enumerate}[(i)]
  \item\label{c1} If at least $4k+3$ teeth $R_i$ are complicated, then
    we let $\rmv=d$.
  \item\label{c2} Otherwise, there is a sequence of $4k+3$ consecutive
    teeth $R_i,R_{i+1},\ldots,R_{i+4k+2}$ that are simple.  Let $\rmv$
    be the index of the last tooth of this sequence, i.e.,
    $\rmv=i+4k+2$.
  \end{enumerate}
  Having defined $\rmv$, remove the tooth $\rmvt$ from the graph and
  do not modify the budget.  That is, proceed with the instance
  $(G-\rmvt,k)$.
\end{redrule}

\begin{lemma}
  \label{lem:rule-comb-correct}
  Rule~\ref{rule:comb} is safe.
\end{lemma}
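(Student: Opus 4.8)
The plan is as follows. The forward implication is immediate from heredity: trivially perfect graphs form a hereditary class, so restricting an editing set for $G$ to the pairs contained in $V(G)\setminus R_\beta$ gives an editing set of no larger size for $G-R_\beta$.

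For the reverse implication I would take a \emph{minimum}-size editing set $F$ for $G':=G-R_\beta$ with $|F|\le k$, and prove that the very same $F$, viewed inside $\binom{V(G)}{2}$, is already an editing set for $G$; since the rule does not change $k$, this suffices. The key point is that $F$ is incident to no vertex of $R_\beta$, so $R_\beta$ stays a module in $G\triangle F$ and $(G\triangle F)[R_\beta]=G[R_\beta]$ is trivially perfect (a tooth is a union of subtrees of the UCD of $G-X$). Running the obstruction case analysis exactly as in the proof of Lemma~\ref{lem:rule-mod-correct}, any obstruction of $G\triangle F$ meets $R_\beta$ in at most two vertices, and boiling the cases down one is left with exactly two forbidden configurations inside the trivially perfect graph $G'\triangle F$: \textbf{(A)} $R_\beta$ is not a clique while $N_G(R_\beta)$ contains a non-edge of $G'\triangle F$ (giving a $C_4$ through two non-adjacent vertices of $R_\beta$); and \textbf{(B)} $G'\triangle F$ has an induced $P_3$ whose middle vertex lies outside $N_G(R_\beta)$ and at least one of whose endpoints lies in $N_G(R_\beta)$ (giving a $P_4$ or $C_4$ through one vertex of $R_\beta$). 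So it suffices to exclude (A) and (B).

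Both are excluded by producing, using the length bound $d\ge(4k+3)^2$, a \emph{template} tooth $R_j$ with $j\le\beta$ that is untouched by $F$ and ``dominates'' $R_\beta$. Since each of the at most $k$ edits of $F$ is incident to at most two teeth, in Case~(i) of the rule (at least $4k+3$ complicated teeth, $\beta=d$) at least $2k+3$ complicated teeth are untouched, and in Case~(ii) at least $2k+2$ of the $4k+3$ consecutive simple teeth $R_{\beta-4k-2},\dots,R_\beta$ lying below $R_\beta$ are untouched; moreover the counting can be arranged so that, whenever $R_\beta$ is not a clique, the chosen $R_j$ is not a clique either (this is why the rule separates simple from complicated teeth and demands $4k+3$ of the dominant type, and here one may also need that $(G,k)$ is reduced with respect to Rules~\ref{rule:twin} and~\ref{rule:module}, bounding the number and size of clique teeth). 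By Lemma~\ref{lem:comb-neighborhoods} we have $N_G(R_\beta)\subseteq N_G(R_j)$: both $X$-neighborhoods equal $Z$, and in $G-X$ the former is $B_\beta\cup\cdots\cup B_d$ together with the ancestor bags of $b^\up$, the latter $B_j\cup\cdots\cup B_d$ together with the same ancestor bags. Since $R_j$ is itself a module of the trivially perfect graph $G'\triangle F$ inducing $G[R_j]$, applying the distilled conditions (A) and (B) \emph{to $R_j$} inside $G'\triangle F$ shows that $N_G(R_j)$ is a clique of $G'\triangle F$ (when $R_j$ is not a clique) and that $G'\triangle F$ has no induced $P_3$ with middle outside $N_G(R_j)$ and an endpoint in $N_G(R_j)$. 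The first fact, via $N_G(R_\beta)\subseteq N_G(R_j)$, kills (A); for (B) an offending $P_3$ with endpoint $a\in N_G(R_\beta)\subseteq N_G(R_j)$ and middle $b\notin N_G(R_\beta)$ is killed outright unless $b$ lies in $N_G(R_j)\setminus N_G(R_\beta)=B_j\cup\cdots\cup B_{\beta-1}$, i.e.\ $b$ sits on the shaft between the template and $R_\beta$.

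The main obstacle is exactly this residual case together with the fine print above: one must (a) pin down (A) and (B) precisely and verify the boiled-down analysis, (b) guarantee a template tooth of matching type — in particular non-clique when $R_\beta$ is — and (c) dispose of offending $P_3$'s whose middle vertex is a shaft bag between $R_j$ and $R_\beta$, for which I would also choose $F$ to miss the bags $B_j,\dots,B_\beta$ (again affordable on a long comb) and use Lemma~\ref{lem:comb-neighborhoods} to argue such a $P_3$ would already be forbidden relative to $R_j$. The reasoning behind Rules~\ref{rule:c4} and~\ref{rule:p4}, applied in $G-R_\beta$ (which still contains far more than $2k$ teeth, all adjacent to $Z$), shows in passing that $F$ must turn every non-edge of $G[Z]$ into an edge and cannot delete the edges within $Z$, which is convenient in several of these checks. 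Everything past the pattern characterization and the type-matching of the template is bookkeeping.
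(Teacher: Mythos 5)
Your overall architecture differs from the paper's in a crucial way, and the difference is where the gap lies. The paper does \emph{not} show that the given solution $F$ for $G-R_\beta$ works for $G$; instead it first locates two consecutive \emph{clean} teeth $R_{\alpha-1},R_\alpha$ with $\alpha<\beta$ (clean meaning $R_i\cup B_i$ is untouched by $F$), passes to the sub-solution $F'\subseteq F$ obtained by discarding \emph{every} edit incident to the whole segment $L=\bigcup_{i=\alpha-1}^{\beta}(B_i\cup R_i)$, and then maps any obstruction $A$ of $G\triangle F'$ isomorphically into $(G-R_\beta)\triangle F$ by replacing $A\cap L$ (which has one or two vertices) with a pattern-matched set inside $R_{\alpha-1}\cup B_{\alpha-1}\cup R_\alpha\cup B_\alpha$; this works because all vertices of $L\cap R$ (resp.\ $L\cap Q$) have identical neighborhoods outside $L$ and $F'$ never touches $L$. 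You instead keep $F$ unchanged and try to transfer the absence of obstructions at a template tooth $R_j$ to $R_\beta$. That transfer genuinely breaks in exactly the residual case you identify: an offending external vertex sitting in $B_j\cup\cdots\cup B_{\beta-1}$. Your fix --- ``choose $F$ to miss the bags $B_j,\dots,B_\beta$'' --- is not available: $F$ is given, and even by re-selecting $j$ you cannot force \emph{all} of up to $4k+3$ consecutive bags (or, in Case~(i), up to $d$ of them) to be untouched, since a counting argument only yields that \emph{most} of them are untouched, never a prescribed contiguous block of that length. The $F\to F'$ surgery is precisely the device the paper uses to neutralize these in-between bags without having to protect them from $F$, and nothing in your proposal replaces it. (It is true a posteriori that a minimum $F$ equals the paper's $F'$ and hence works for $G$, but that conclusion is only available \emph{after} a proof along the paper's lines, so it cannot be your starting point.)

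Two smaller but real problems. First, your ``exactly two forbidden configurations'' (A) and (B) is an incomplete distillation of obstructions meeting the module $R_\beta$ in one vertex: you miss the $P_4$ in which the module vertex is an \emph{internal} vertex, i.e.\ three external vertices $u,v,t$ with $u,v\in N(R_\beta)$, $uv\notin E$, $t$ adjacent to $v$ but to neither $u$ nor $R_\beta$, giving $u-w-v-t$; these three vertices do not induce a $P_3$, so (B) does not see them, and (A) does not apply when $R_\beta$ is a clique. Second, the type-matching of the template is asserted rather than proved, and your stated rationale for the simple/complicated split in Rule~\ref{rule:comb} is not the actual one: the paper needs that split so that, when the obstruction's trace on $L$ is an \emph{edge inside a tooth}, a clean complicated tooth is available to host a substitute edge (Case~2e of its case analysis) --- a need that arises only in the substitution-based proof. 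In your scheme the analogous worry is that in Case~(ii) all nearby untouched simple teeth could be singletons while $R_\beta$ is not a clique; this particular point is repairable (take one vertex from each of two untouched teeth $R_j,R_{j'}$ with $j,j'\le\beta$), but it is one more place where the proposal waves at the bookkeeping that actually carries the proof.
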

\begin{proof}
  Since $G-\rmvt$ is an induced subgraph of $G$, then we trivially
  have that the existence of a solution for $(G,k)$ implies the
  existence of a solution for $(G-\rmvt,k)$.  Hence, we now prove the
  converse.  Suppose that $F$ is a solution to $(G-\rmvt,k)$, that is,
  a set of edits in $G-\rmvt$ such that $(G-\rmvt)\triangle F$ is
  trivially perfect and $|F|\leq k$.
  
  We will say that a tooth $R_i$ is \emph{spoiled} if any vertex of
  $R_i\cup B_i$ is incident to an edit from $F$, and \emph{clean}
  otherwise.  The first goal is to find an index $\lind$ such that
  \begin{enumerate}[(a)]
  \item\label{q1} $1<\lind<\rmv$,
  \item\label{q2} the teeth $R_{\lind-1}$ and $R_{\lind}$ are clean,
    and
  \item\label{q3} if any of the teeth $R_{\lind+1},
    R_{\lind+2},\ldots,\rmvt$ is complicated, then $R_{\lind}$ is also
    complicated.
  \end{enumerate}
  
  Suppose first that $\rmv$ was constructed according to case
  (\ref{c1}), i.e., there are at least $4k+3$ complicated teeth in the
  comb, and hence $\rmv=d$.  Out of these teeth $R_i$, at most one can
  have index $1$, at most one can have index $d$, at most $2k$ can be
  spoiled (since $|F|\leq k$) and at most $2k$ can have the preceding
  tooth $R_{i-1}$ spoiled.  This leaves at least one complicated tooth
  $R_i$ such that $1<i<d$ and both $R_i$ and $R_{i-1}$ are clean.
  Then we can take $\lind=i$; thus, property (\ref{q3}) of $\lind$ is
  satisfied since $R_{\lind}$ is complicated.
  
  Suppose then that $\rmv$ was constructed according to case
  (\ref{c2}), i.e.,  the following teeth are all simple:
  $R_{\rmv-(4k+2)}, R_{\rmv-(4k+1)}, \ldots, R_{\rmv-1},\rmvt$.
  Similarly as before, out of these $4k+3$ teeth, one has index
  $\rmv$, one has index $\rmv-(4k+2)$, at most $2k$ can be spoiled,
  and at most $2k$ can have the preceding tooth spoiled.  Hence, among
  them there is a tooth $R_i$ such that $\rmv-(4k+2)<i<\rmv$ and both
  $R_i$ and $R_{i-1}$ are clean.  Again, we take $\lind=i$; thus,
  property (\ref{q3}) is satisfied since all the teeth
  $R_{\rmv-(4k+2)}, R_{\rmv-(4k+1)},\ldots R_{\rmv-1},\rmvt$ are
  simple.
  
  \bigskip

  %
  %
  With $\lind$ defined, we are ready to complete the proof of
  Lemma~\ref{lem:rule-comb-correct}.  To that aim, define
  $L=\bigcup_{i=\lind-1}^{\rmv} B_i\cup R_i$.
  Construct $F'$ from $F$ by removing all the edits that are incident
  to any vertex of~$L$; clearly~$|F'|\leq |F|\leq k$.  We claim
  that~$F'$ is a solution to the instance~$(G,k)$, that is, that
  $G\triangle F'$ is trivially perfect.  For the sake of a
  contradiction, suppose that~$A\subseteq V(G)$ is a vertex set of
  size~$4$ such that~$G\triangle F'[A]$ is a~$P_4$ or a~$C_4$.  Let
  $A_0=A\cap L$ and $A_1=A\setminus A_0$.
  
  \begin{claim}\label{cl:smallA0}
    $|A_0|=1$ or $|A_0|=2$.
  \end{claim}
  \begin{proof}
    Suppose first that $A_0=\emptyset$, so $A\subseteq V(G)\setminus L\subseteq V(G-\rmvt)$.  Since
    $F\cap \binom{V(G)\setminus L}{2}=F'\cap \binom{V(G)\setminus
      L}{2}$ and $\rmvt\subseteq L$, we have that the induced subgraph
    $G\triangle F'[A]$ is equal to the induced subgraph
    $(G-\rmvt)\triangle F[A]$.  However, the graph $(G-\rmvt)\triangle
    F$ is trivially perfect, so it cannot have an induced $P_4$ or
    $C_4$; a contradiction.
    
    Suppose now that $|A_0|\geq 3$.  Since $A_0\subseteq L$ and no edit of $F'$
    is incident to any vertex of $L$, we infer that there is no edit
    of $F'$ between vertices of $A$: only at most one vertex of $A$
    does not belong to $A_0$.  Therefore $G[A]=G\triangle F'[A]$ and
    $G[A]$ is an induced $C_4$ or $P_4$ in the graph $G$.  However,
    $A_0\subseteq L\subseteq V(G)\setminus X$, so $|A\cap X|\leq 1$.
    Thus, $G[A]$ would be an obstacle in $G$ that has at most one
    common vertex with \modulator{} $X$, a contradiction with the
    definition of a \modulator{} (Definition~\ref{def:modulator}).
    \cqed\end{proof}
  
  To obtain a contradiction, we shall construct a set $A_0'$
  satisfying the following properties:
  \begin{enumerate}[(i)]
  \item\label{p1} $A_0'\subseteq R_{\lind-1}\cup B_{\lind-1}\cup R_{\lind}\cup
    B_{\lind}$;
  \item\label{p2} $|A_0'|=|A_0|$ and $G[A_0']$ is edgeless if and only
    if $G[A_0]$ is edgeless;
  \item\label{p3} $|A_0\cap Q|=|A_0'\cap Q|$ and hence $|A_0\cap R|=|A_0'\cap
    R|$.
  \end{enumerate}
  Let us define $A' = A_1\cup A_0'$.  For now we postpone the exact
  construction
  
  \begin{claim}\label{cl:iso}
    If $A_0'$ satisfies properties (\ref{p1}), (\ref{p2}), and
    (\ref{p3}), then $G\triangle F'[A]$ is isomorphic to $G\triangle F'[A']$.
  \end{claim}
  \begin{proof}
    By property (\ref{p3}) there exists a bijection~$\eta$ between~$A_0$
    and~$A_0'$ that preserves belonging to~$Q$ or~$R$ between the
    argument and the image.  Extend~$\eta$ to~$A$ by defining
    $\eta(u)=u$ for $u\in A_1$; we claim that~$\eta$ is an isomorphism
    between $G\triangle F'[A]$ and $G\triangle F'[A']$.  To see this,
    observe that since $A_0,A_0'\subseteq L$, then we have that no
    vertex of~$A_0$ or~$A_0'$ is incident to any edit of~$F'$.
    Moreover, in~$G$, all the vertices of $L\cap R$ have the same
    neighborhood in $V(G)\setminus L$, and the same holds also for the
    vertices of~$L\cap Q$.  As the neighborhoods of these vertices
    in~$G$ and in~$G\triangle F'$ are exactly the same, we infer that
    each vertex~$u\in A_0$ is adjacent in $G\triangle F'$ to the same
    vertices of~$A_1$ as the vertex~$\eta(u)$ is.
    
    To conclude the proof, we need to prove that $\eta$ restricted to
    $A_0'$ is also an isomorphism between $G\triangle F'[A_0]$ and
    $G\triangle F'[A_0']$.  Again, $A_0$ and $A_0'$ are not incident
    to any edit of $F'$, so $G\triangle F'[A_0]=G[A_0]$ and
    $G\triangle F'[A_0']=G[A_0']$.  By Claim~\ref{cl:smallA0} we have
    that $|A_0|=1$ or $|A_0|=2$, and we conclude by observing that a
    pair of simple graphs with at most two vertices are isomorphic if
    and only if both of them are edgeless or both of them contain an
    edge, and in both cases any bijection between the vertex sets is
    an isomorphism.  \cqed\end{proof}
  
  We now argue that the existence of a set $A_0'$ satisfying
  properties (\ref{p1}), (\ref{p2}), and (\ref{p3}) leads to a
  contradiction.  Recall that the teeth~$R_{\lind-1}$ and~$R_{\lind}$
  are clean, which means that no vertex of $R_{\lind-1}\cup
  B_{\lind-1}\cup R_{\lind}\cup B_{\lind}$ is incident to any edit
  from~$F$.  Moreover, as~$\rmv>\lind$, we have that $A'\subseteq
  V(G-\rmvt)$.  By the construction of~$F'$ and~$A'$ we infer that
  $G\triangle F'[A']=(G-\rmvt)\triangle F[A']$.  By Claim~\ref{cl:iso}
  we have that~$G\triangle F'[A']$ is a~$P_4$ or a~$C_4$, since
  $G\triangle F'[A]$ was.  This would, however, mean that
  $(G-\rmvt)\triangle F$ would contain an induced~$P_4$ or an
  induced~$C_4$, a contradiction to the assumption that
  $(G-\rmvt)\triangle F$ is trivially perfect.
  
  Therefore, we are left with constructing a set~$A_0'$ satisfying
  properties (\ref{p1}), (\ref{p2}), and (\ref{p3}).  We give
  different constructions depending on the alignment of the vertices
  of~$A_0$.  In each case we just define $A_0'$; verifying properties
  (\ref{p1}), (\ref{p2}), and (\ref{p3}) in each case is trivial.
  
  \begin{description}
  \item[Case 1.] $|A_0|=1$.
    \begin{description}
    \item[Case 1a.] $A_0=\{u\}$ and $u\in Q$.  Then $A_0'=\{u'\}$ for
      any $u'\in B_{\lind-1}$.
    \item[Case 1b.] $A_0=\{u\}$ and $u\in R$.  Then $A_0'=\{u'\}$ for
      any $u'\in R_{\lind-1}$.
    \end{description}
  \item[Case 2.] $|A_0|=2$.
    \begin{description}
    \item[Case 2a.] $A_0=\{u,v\}$, $u,v\in Q$.  As $G[Q]$ is a clique,
      it follows that $uv\in E(G)$.  Then $A_0'=\{u',v'\}$ for any
      $u'\in B_{\lind-1}$ and $v'\in B_{\lind}$.
    \item[Case 2b.] $A_0=\{u,v\}$, $u\in Q$, $v\in R$, and $uv\notin
      E(G)$.  Then $A_0'=\{u',v'\}$ for any $u'\in B_{\lind-1}$ and
      $v'\in R_{\lind}$.
    \item[Case 2c.] $A_0=\{u,v\}$, $u\in Q$, $v\in R$, and $uv\in E(G)$.
      Then $A_0'=\{u',v'\}$ for any $u'\in B_{\lind}$ and $v'\in
      R_{\lind-1}$.
    \item[Case 2d.] $A_0=\{u,v\}$, $u,v\in R$, and $uv\notin E(G)$.  Then
      $A_0'=\{u',v'\}$ for any $u'\in R_{\lind}$ and $v'\in
      R_{\lind-1}$.
    \item[Case 2e.] $A_0=\{u,v\}$, $u,v\in R$, and $uv\in E(G)$.  As there
      are no edges in~$G$ between different teeth, we observe that
      $u,v\in R_i$ for some~$i$ such that~$R_i\subseteq L$, i.e.,
      $\lind-1\leq i\leq \rmv$.  In particular, the tooth~$R_i$ must
      be complicated.  If~$i=\lind-1$ or~$i=\lind$, then we can take
      $A_0'=A_0$.  Otherwise we have that $\lind<i\leq \rmv$ and~$R_i$
      is complicated, so by property (\ref{q3}) of~$\rmv$ we infer
      that~$R_{\lind}$ is also complicated.  Then we take
      $A_0'=\{u',v'\}$ for any $u',v'\in R_{\lind}$ such that $u'v'\in
      E(G)$.
    \end{description}
  \end{description}

  This case study is exhaustive due to Claim~\ref{cl:smallA0}.
\end{proof}

We can finally gather all the pieces and prove our main theorem.

\begin{theorem}
  \label{thm:tpe-polykernel}
  The problem \TPE{} admits a proper kernel with $O(k^7)$ vertices.
\end{theorem}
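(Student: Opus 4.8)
The plan is to assemble the machinery developed in Sections~\ref{sec:basic-rules}--\ref{sec:kernel-final}; the real work has already been done in the preceding lemmas, so what remains is an orchestration and a summation. I would run Rules~\ref{rule:c4}--\ref{rule:comb} exhaustively, with the following priority: whenever one of Rules~\ref{rule:c4},~\ref{rule:p4},~\ref{rule:twin},~\ref{rule:module} is applicable (testable in polynomial time by Lemmas~\ref{lem:basic-rules-sound},~\ref{lem:rule:twin:sound} and~\ref{lem:recognizing-module}), apply it; otherwise the instance is reduced and Rules~\ref{rule:twin},~\ref{rule:module} are inapplicable, so I would invoke Lemma~\ref{lem:polytime-modulator} to obtain a small \modulator{}~$X$ (or terminate, declaring a \noinstance), compute the universal clique decomposition of~$G-X$, locate the set~$I$ of important bags as in Section~\ref{sec:impbags}, and test whether Rule~\ref{rule:comb} applies to some comb of~$\mathcal{T}-I$; if so, apply it and restart the loop. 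Each single rule application either deletes a vertex of~$G$ or decreases~$k$ by one, so the procedure halts after at most $|V(G)|+k$ applications; if~$k$ ever becomes negative we output the trivial \noinstance{} $(C_4,0)$. The whole process runs in polynomial time.

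When the procedure halts, the instance $(G,k)$ is reduced and none of Rules~\ref{rule:twin}--\ref{rule:comb} applies. By Lemma~\ref{lem:polytime-modulator} we hold a \modulator{}~$X$ with~$|X|\leq 4k$, and Corollary~\ref{cor:small-modules} says that every module of~$G$ inducing a trivially perfect graph has size~$O(k^2)$. Fix the UCD~$\mathcal{T}$ of~$G-X$ and the important bags~$I$ with~$|I|\leq 12k$, and partition $V(G)\setminus X$ into $V_I,V_0,V_1,V_2$ as at the start of Section~\ref{sec:kernel-final}. Then Lemmas~\ref{lem:vI},~\ref{lem:v0} and~\ref{lem:v1} give $|V_I|=O(k^6)$, $|V_0|=O(k^6)$ and $|V_1|=O(k^7)$; their proofs use only the inapplicability of Rules~\ref{rule:c4}--\ref{rule:module} together with Lemma~\ref{lem:bounded-x-neighborhoods}, so they apply unchanged.

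It remains to bound $|V_2|$. Since~$\mathcal{T}$ is a forest and $|I|=O(k)$, only~$O(k)$ components of~$\mathcal{T}-I$ are combs. Because Rule~\ref{rule:comb} is inapplicable, every comb has length $d<(4k+3)^2=O(k^2)$. Within a fixed comb, each tooth~$R_i$ is a module inducing a trivially perfect graph by Lemma~\ref{lem:comb-neighborhoods}, hence $|R_i|=O(k^2)$ by Corollary~\ref{cor:small-modules}, while each shaft bag~$B_i$ is a true twin class, so $|B_i|\leq 2k+5$ by inapplicability of Rule~\ref{rule:twin}. Thus one comb contributes $d\cdot\bigl(O(k^2)+O(k)\bigr)=O(k^4)$ vertices, and over all~$O(k)$ combs we get $|V_2|=O(k^5)$. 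Altogether
\[
|V(G)| = |X| + |V_I| + |V_0| + |V_1| + |V_2| = O(k)+O(k^6)+O(k^6)+O(k^7)+O(k^5) = O(k^7),
\]
and consequently $|E(G)|\leq\binom{|V(G)|}{2}$, so $(G,k)$ is a polynomial kernel with~$O(k^7)$ vertices. Properness is immediate: Rules~\ref{rule:c4} and~\ref{rule:p4} only decrement~$k$, and Rules~\ref{rule:twin}--\ref{rule:comb} leave~$k$ untouched, so the output~$k'$ satisfies~$k'\leq k$; equivalence of the output and input instances follows from the safeness proven in Lemmas~\ref{lem:basic-rules-sound},~\ref{lem:rule:twin:sound},~\ref{lem:rule-mod-correct} and~\ref{lem:rule-comb-correct}.

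I expect the genuinely hard steps of this theorem to lie not in the above assembly but behind it: the polynomial bound on the number of $X$-neighborhoods (Lemma~\ref{lem:bounded-x-neighborhoods}) and, above all, the correctness of the comb reduction (Lemma~\ref{lem:rule-comb-correct}), whose case analysis on the intersection of a hypothetical obstacle with the replaced block~$L$ is the technical heart of the kernelization. The only subtlety in the wrap-up itself is that applying Rule~\ref{rule:comb} changes~$G$ and therefore potentially the modulator, the UCD and the important bags, so the exhaustive-application loop must recompute these objects after each comb reduction; this is harmless since each application still deletes a vertex, bounding the number of iterations by~$|V(G)|+k$.
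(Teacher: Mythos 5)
Your proposal is correct and follows essentially the same route as the paper's proof: exhaustive application of Rules~\ref{rule:c4}--\ref{rule:module}, construction of the small \modulator{} via Lemma~\ref{lem:polytime-modulator}, marking of important bags, repeated application of Rule~\ref{rule:comb} with a restart after each deletion, and then the identical summation $|X|+|V_I|+|V_0|+|V_1|+|V_2|=O(k^7)$ with the same $O(k^5)$ bound on $|V_2|$ from $O(k)$ combs of length $O(k^2)$ with teeth of size $O(k^2)$ and shaft bags of size $O(k)$. Your closing remark about recomputing the modulator, the UCD and the important bags after each comb reduction is exactly how the paper handles termination as well.
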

\begin{proof}
  The algorithm first applies Reduction Rules
  \ref{rule:c4}---\ref{rule:module} exhaustively.  As each application
  of a reduction rule either decreases~$n$ and does not change~$k$, or
  decreases~$k$ while not changing~$n$, the number of applications of
  these rules will be bounded by $O(n+k)$ until~$k$ becomes negative
  and we can conclude that we are working with a \noinstance.  By
  Lemmas~\ref{lem:basic-rules-sound},
  \ref{lem:rule:twin:sound},~\ref{lem:rule-mod-correct},
  and~\ref{lem:recognizing-module}, these rules are safe,
  applicability of each rule can be recognized in polynomial time, and
  applying the rules also takes polynomial time.
  
  After Rules~\ref{rule:c4}--\ref{rule:module} have been applied
  exhaustively, we construct a small \modulator{}~$X$ using the
  algorithm of Lemma~\ref{lem:polytime-modulator}.  In case the
  construction fails, we conclude that we are working with a
  \noinstance.  Otherwise, in polynomial time we construct the
  universal clique decomposition $\mathcal{T}$ of $G-X$, and then we
  mark the set~$I$ of important bags.  Both locating the important
  bags and performing the lowest common ancestor closure can be done
  in polynomial time.  After this, we examine all the combs of
  $\mathcal{T}-I$.  In case there is a comb of length greater than
  $(4k+3)^2$, we apply Rule~\ref{rule:comb} on it and restart the
  whole algorithm.  Observe that each application of this rule reduces
  the vertex count by one while keeping~$k$, so the total number of
  times the algorithm is restarted is bounded by the vertex count of
  the original instance.
  
  We are left with analyzing the situation when Reduction
  Rule~\ref{rule:comb} is not applicable, i.e., all the combs have
  length less than $(4k+3)^2$.  As we have argued, the inapplicability
  of Rules~\ref{rule:twin} and~\ref{rule:module} ensures that bags of
  shafts of combs have sizes~$O(k)$ and teeth of combs have
  sizes~$O(k^2)$.  Hence, every comb has~$O(k^4)$ vertices.  Since the
  number of combs is~$O(k)$, we infer that $|V_2|\leq O(k^5)$.
  Together with the upper bounds on the sizes of~$V_I$, $V_0$,
  and~$V_1$ given by Lemmas~\ref{lem:vI},~\ref{lem:v0},
  and~\ref{lem:v1}, we conclude that
  \[
  |V(G)| = |X|+|V_I|+|V_0|+|V_1|+|V_2| \leq
  4k+O(k^6)+O(k^6)+O(k^7)+O(k^5) = O(k^7) .
  \]
  Hence, we can output the current instance as the obtained kernel.
\end{proof}

\section{Kernels for \pname{Trivially Perfect Completion/Deletion}}
\label{sec:kernel-comp-del}

We now present how the technique applied to \TPE{} also yields
polynomial kernels for \TPC{} and \TPD{} after minor modifications.
That is, we prove Theorems~\ref{thm:tpd-polykernel-intro}
and~\ref{thm:tpc-polykernel-intro}.

We show that all the rules given above, with only two minor
modifications are correct for both problems.  Clearly, the running
times of the algorithms recognizing applicability of the rule do not
depend on the problem we are solving, so we only need to argue for
their safeness.

In the first two rules, Rules~\ref{rule:c4} and~\ref{rule:p4}, we add
and delete an edge, respectively, and the argument is that any editing
set of size at most~$k$ must necessarily include this edit.  However,
in the completion and deletion version, we are not allowed both
operations.  Hence, for the first rule, in the deletion variant we can
immediately infer that we are working with a no-instance, and
respectively for the second rule in the completion variant.

Thus, the two following rules replace Rule~\ref{rule:c4} for deletion
and Rule~\ref{rule:p4} for completion, and their safeness is
guaranteed by a trivial modification of the proof of
Lemma~\ref{lem:basic-rules-sound}:
\begin{customrdrl}{\ref{rule:c4}D}
  \label{rule:c4d}
  For an instance $(G,k)$ with $uv \notin E(G)$, if there is a matching of
  size at least $k+1$ in $\overline{G[N(u) \cap N(v)]}$, then return a
  trivial no-instance as the computed kernel.
\end{customrdrl}
\begin{customrdrl}{\ref{rule:p4}C}
  \label{rule:p4c}
  For an instance $(G,k)$ with $uv \in E(G)$ and $N_1 = N(u) \setminus N[v]$ and
  $N_2 = N(v) \setminus N[u]$, if there is a matching in
  $\overline{G}$ between $N_1$ and $N_2$ of size at least $k+1$, then
  return a trivial no-instance as the computed kernel.
\end{customrdrl}

Observe that Rules~\ref{rule:c4d} and~\ref{rule:p4c} are applicable in
exactly the same instances as their unmodified variants.  Hence,
exhaustive application of the basic rules with any of these
modifications results in exactly the same notion of a reduced instance
as the one introduced in Section~\ref{sec:basic-rules}.  We now argue
that Rules~\ref{rule:twin} and~\ref{rule:module} are safe for both the
deletion and the completion variant, without any modifications.

\begin{lemma}
  \label{lem:module-rules-safe}
  Rules~\ref{rule:twin} and \ref{rule:module} are safe both for \TPD{}
  and for \TPC.
\end{lemma}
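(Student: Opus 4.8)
The plan is to revisit the correctness proofs of Rule~\ref{rule:twin} (Lemma~\ref{lem:rule:twin:sound}) and Rule~\ref{rule:module} (Lemma~\ref{lem:rule-mod-correct}) and observe that neither of them ever exchanges an edge deletion for a completion or vice versa; every argument there is of the form ``if $W$ is an obstruction in the modified graph, then it uses a deleted/shrunk vertex, and by a twin-swap argument we obtain an obstruction already in the reduced graph.'' Since a twin-swap does not alter which pairs lie inside $E(G)$ and which lie outside, an editing set that is a pure completion (resp.\ pure deletion) stays a pure completion (resp.\ pure deletion) under these manipulations. Hence the same proofs go through verbatim, with ``trivially perfect'' everywhere understood as ``trivially perfect, obtained by additions only'' (resp.\ ``by deletions only'').

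Concretely, for Rule~\ref{rule:twin}: one direction is heredity, which holds for \TPD{} and \TPC{} because if $F\subseteq E(G)$ (resp.\ $F\cap E(G)=\emptyset$) is a solution for $(G,k)$, then $F$ restricted to $\binom{V(G-v)}{2}$ is still a deletion set (resp.\ completion set) for $G-v$. For the other direction, given a deletion (resp.\ completion) set $F$ for $G-v$ with $|F|\le k$, at most $2k$ vertices of the true twin class $T$ touch $F$, so we can pick $v_1\in T\setminus(\{v\}\cup V(F))$ not lying in a hypothetical obstruction $W$ of $G\triangle F$, and swap $v$ for $v_1$; this produces an obstruction in $(G-v)\triangle F$, contradiction. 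Crucially $F$ is unchanged by the swap, so it remains the same type of editing set. For Rule~\ref{rule:module}: the proof of Lemma~\ref{lem:rule-mod-correct} first uses minimality of $F$ to show (Claim~\ref{claim:i-notin-f}) that $I$ is untouched; minimality is available for both variants since an inclusion-minimal deletion (resp.\ completion) set exists. The removal of $F_I$ from $F$ only discards edits, so it preserves the ``deletion-only'' or ``completion-only'' property. The remaining case analysis (obstruction $W$ with one, two, or three vertices in $M$) again only performs false-twin swaps inside $I$, which do not change edge membership of the swapped pairs, hence preserve the type of $F$.

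The main thing to check — and the only place that needs a word of care — is that the two invoked external facts remain usable: heredity of the target class under the allowed operation, and the fact (Observation~1 of~\cite{guillemot2013onthenon}) that an induced $P_4$ meets a module in $0$, $1$, or all of its vertices. The latter is a purely structural statement about modules in arbitrary graphs and is insensitive to whether we added or deleted edges, so it applies to $G\triangle F$ directly. Heredity is immediate as noted. Therefore I expect no genuine obstacle; the proof of Lemma~\ref{lem:module-rules-safe} is essentially ``inspect the proofs of Lemmas~\ref{lem:rule:twin:sound} and~\ref{lem:rule-mod-correct} and note that no deletion/completion exchange is ever used, and that all twin-swaps leave $F$ itself untouched,'' which is exactly the meta-principle the authors announced earlier when they said the whole approach transfers ``after very minor modifications.''
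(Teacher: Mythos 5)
Your proposal is correct and follows essentially the same route as the paper: the paper likewise observes that the proof of Lemma~\ref{lem:rule:twin:sound} shows any editing set for $(G-v,k)$ is an editing set for $(G,k)$ regardless of whether it contains only additions or only deletions, and that Claim~\ref{claim:i-notin-f} only uses inclusion-wise minimality (available for minimum-size completion/deletion sets, and preserved since $F'\subseteq F$), after which the case analysis is oblivious to the type of edits. Your extra check that the twin-swap arguments never alter $F$ itself and that Observation~1 of~\cite{guillemot2013onthenon} is a purely structural fact is exactly the right point of care.
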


\begin{proof}
  The proof of the safeness of Rule~\ref{rule:twin}
  (Lemma~\ref{lem:rule:twin:sound}) in fact argues that every editing
  set $F$ for $(G-v,k)$ with $|F|\leq k$ is also an editing set for
  $(G,k)$.  This holds also for editing sets that consist only of edge
  additions/deletions, so the reasoning remains the same for \TPD{}
  and \TPC.
  
  The proof of the safeness of Rule~\ref{rule:module}
  (Lemma~\ref{lem:rule-mod-correct}) first argues that any
  minimum-size editing set $F$ for the reduced instance $(G',k)$ is
  not incident to any vertex of $I$.  This is done by showing that
  otherwise $F$ would not be an inclusion-wise minimal editing set
  (proof of Claim~\ref{claim:i-notin-f}), and the argumentation can be
  in the same manner applied to minimum-size completion/deletion sets.
  Then it is argued that $F$ is in fact an editing set for the
  original instance $(G,k)$, and the argumentation is oblivious to
  whether $F$ is allowed to contain edge additions or deletions.
\end{proof}

We now proceed to the analysis of Rule~\ref{rule:comb} in the
completion and deletion variants.  First, let us consider the
construction of the modulator.  In the completion/deletion variants we
can construct the modulator in exactly the same manner as for editing.
Indeed, the main argument for the bound $|X|\leq 4k$ states that if
the construction was performed for more than $k$ rounds, then we are
dealing with a no-instance, since then any editing set for $G$ has
size at least $k+1$.  Completion and deletion sets are editing sets in
particular, so the same argument holds also for \tpd{} and \tpc.

Results of Sections~\ref{sec:modulator-nei} and~\ref{sec:impbags},
i.e., the analysis of the $X$-neighborhoods and marking of the
important bags, work in exactly the same manner, since they are based
on the same notions of a reduced instance and of a \modulator.  Thus,
Lemma~\ref{lem:bounded-x-neighborhoods} holds as well, and we have
marked the same set $I$ of $O(k)$ important bags, with the same
properties.  Rules~\ref{rule:twin} and~\ref{rule:module} are not
modified, so the bounds on $|V_I|$, $|V_0|$ and $|V_1|$ from
Lemmas~\ref{lem:vI},~\ref{lem:v0}, and~\ref{lem:v1} also hold.

We are left with analyzing Rule~\ref{rule:comb}, and we claim that
this rule is also safe for \tpd{} and \tpc{} without any
modifications.  Indeed, in the proof of the safeness of the rule
(Lemma~\ref{lem:rule-comb-correct}), we have argued that for every
editing set~$F$ ($|F|\leq k$) for the new instance~$(G',k)$, there
exists some~$F'\subseteq F$ which is a solution to the original
instance~$(G,k)$.  In case~$F$ consists of edge deletions or edge
additions only, so does~$F'$.  Hence,~$(G',k)$ being a yes-instance of
\tpd{}, resp.\ \tpc{}, implies that~$(G,k)$ is also a yes-instance of
the same problem.  Thus Rule~\ref{rule:comb} is safe without any
modifications, and the kernel size analysis contained in the proof of
Theorem~\ref{thm:tpe-polykernel} (end of
Section~\ref{sec:kernel-final}) can be performed in exactly the same
manner.  This concludes the proof of
Theorems~\ref{thm:tpd-polykernel-intro}
and~\ref{thm:tpc-polykernel-intro}.

\section{Hardness results}\label{sec:hardness}

In this section we show that \tpe{} is \cclass{NP}-hard, and
furthermore not solvable in subexponential parameterized time unless
the Exponential Time Hypothesis fails.  Recall that the
\cclass{NP}-hardness of the problem was already established by Nastos
and Gao~\cite{nastos2013familial}.  Their reduction (see the proof of
Theorem 3.3 in~\cite{nastos2013familial}) starts with an instance of
\pname{Exact 3-Cover} with universe of size~$n$ and set family of
size~$m$, and constructs an instance~$(G,k)$ of \tpe{} with
$k=\Theta(mn^2)$.  Thus, the parameter blow-up is at least cubic, and
the reduction cannot be used to establish the non-existence of a
subexponential parameterized algorithm under ETH.

Here, we give a direct, linear reduction from \pname{3Sat} to \tpe{}.
Furthermore, the resulting graph in our reduction has maximum degree
equal to~$4$.  Thus, we in fact prove that even on input graphs of
maximum degree~$4$, \tpe{} remains \cclass{NP}-hard and does not admit
a subexponential parameterized algorithm, unless ETH fails.  Formally,
the following theorem will be proved, where for an input
formula~$\phi$ of \pname{3Sat}, by $\mathcal{V}(\phi)$ and
$\mathcal{C}(\phi)$ we denote the variable and clause sets of~$\phi$,
respectively:

\begin{theorem}
  \label{thm:np-hardness}
  There exists a polynomial-time reduction that, given an instance $\phi$
  of \pname{3Sat}, returns an equivalent instance $(G_\phi,k_\phi)$ of
  \tpe, where $|V(G_\phi)|=13|\mathcal{C}(\phi)|$, $|E(G_\phi)| =
  18|\mathcal{C}(\phi)|$, $k_\phi = 5|\mathcal{C}(\phi)|$, and
  $\Delta(G_\phi)=4$.  Consequently, even on instances with maximum
  degree $4$, \tpe{} remains \cclass{NP}-hard and cannot be solved in
  time $2^{o(k)}n^{O(1)}$ or $2^{o(n+m)}$, unless ETH fails.
\end{theorem}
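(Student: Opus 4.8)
The plan is to follow the gadget-based approach used for analogous hardness results by Komusiewicz and Uhlmann~\cite{komusiewicz2012clusterediting} and by Drange et al.~\cite{drange2014exploring}: build, for each clause and each occurrence of a variable, a constant-size piece of graph whose only budget-respecting repairs encode a consistent truth assignment. I would first normalise the input formula $\phi$: by (iterated) pure-literal elimination we may assume every variable occurs both as a positive and as a negative literal, hence at least twice, and by padding short clauses (repeating a literal) we may assume every clause has exactly three literals; both steps are satisfiability-preserving and do not increase the number of clauses. Write $m=|\mathcal{C}(\phi)|$ and note that, since every variable now occurs, $|\mathcal{V}(\phi)|\le 3m$ --- a fact used only for the final ETH estimate.

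For each variable $x$, list its occurrences $\ell^x_1,\dots,\ell^x_{t_x}$ (with $t_x\ge 2$) in a fixed cyclic order and build a \emph{variable cycle} $\mathcal{X}_x$: $t_x$ constant-size \emph{occurrence gadgets} $O^x_1,\dots,O^x_{t_x}$ strung into a cycle, consecutive gadgets sharing an edge, designed so that $\mathcal{X}_x$ has exactly two editings meeting a fixed local budget of one edit per occurrence gadget, and these two editings correspond to $x$ being true and to $x$ being false; in each of them a designated vertex $p^x_i\in O^x_i$ is left untouched exactly when the literal $\ell^x_i$ is satisfied. For each clause $c=(\ell_1\vee\ell_2\vee\ell_3)$ I would add one \emph{clause vertex} $q_c$ and join it by a bounded number of edges to the three literal vertices $p$ of $\ell_1,\ell_2,\ell_3$, creating an obstruction that can be destroyed inside a local budget of two edits if and only if at least one of those three vertices is left untouched by the variable-cycle repair, i.e.\ iff $c$ is satisfied. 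Taking each occurrence gadget on $4$ vertices and bookkeeping edges so that each of the $3m$ occurrence slots owns $6$ edges, the totals come out to $|V(G_\phi)|=12m+m=13m$, $|E(G_\phi)|=18m$, and $k_\phi=3m+2m=5m$; every vertex has degree at most $4$, since an occurrence-gadget vertex sees only $O(1)$ internal neighbours, the two links to the neighbouring gadgets, and at most one clause vertex, while $q_c$ sees exactly its three literal vertices.

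For correctness, the forward direction takes, in every variable cycle, the canonical editing corresponding to the assignment's value, and in every clause gadget the cheap editing keyed to one satisfied literal; a direct check --- most conveniently by exhibiting a universal clique decomposition of the resulting graph and invoking Lemma~\ref{lem:tp-iff-ucd} --- shows the result is trivially perfect with exactly $5m=k_\phi$ edits. For the backward direction I would first establish a tight lower bound by exhibiting $5m$ obstructions pairwise sharing at most one vertex (two per clause gadget and one per occurrence slot), in the spirit of the argument in Lemma~\ref{lem:polytime-modulator}; this forces $|S|=5m$ for any editing set $S$ and, moreover, that $S$ spends exactly the allotted number of edits inside each gadget. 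A finite case analysis of the budget-respecting editings of each constant-size gadget then shows that within a variable cycle the local budget can be met only by one of the two canonical (hence globally consistent) repairs, and that a clause gadget's local budget can be met only if one of its literal vertices is untouched; reading off each variable's value from the repair used by its cycle yields a satisfying assignment. This proves the equivalence and hence that \tpe{} is \cclass{NP}-hard on graphs of maximum degree $4$. Finally $|V(G_\phi)|+|E(G_\phi)|=31m$ and $k_\phi=5m$ while $|\mathcal{V}(\phi)|+|\mathcal{C}(\phi)|\le 4m$, so a $2^{o(k)}n^{O(1)}$ or $2^{o(n+m)}$ algorithm for \tpe{} would decide \pname{3Sat} in time $2^{o(m)}m^{O(1)}=2^{o(|\mathcal{V}(\phi)|+|\mathcal{C}(\phi)|)}(|\mathcal{V}(\phi)|+|\mathcal{C}(\phi)|)^{O(1)}$, contradicting Proposition~\ref{prop:eth}.

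The main obstacle is the gadget design, and it has three intertwined requirements: the local budgets must be exactly tight, so that no editing set can beat the canonical cost or ``reroute'' edits across gadget boundaries to fake a saving; the two canonical repairs must be the \emph{only} budget-respecting editings of a variable cycle, so that the chosen truth value propagates unchanged around the whole cycle; and the vertex count, edge count, and budget must land precisely on $13m$, $18m$, $5m$ with all degrees bounded by $4$, which forces the gadgets to be very small and leaves essentially no slack. Carrying out --- and verifying --- the finite case analysis of the minimal editings of these small gadgets is where the real work lies; once it is done, the \cclass{NP}-hardness and the ETH lower bound follow immediately from the linear blow-up of the parameter.
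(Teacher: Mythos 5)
Your plan reproduces the paper's architecture almost exactly: variable cycles assembled from $4$-vertex occurrence gadgets, one degree-$3$ clause vertex per clause, a local budget of one edit per occurrence and two per clause, and the identical totals $|V|=13m$, $|E|=18m$, $k=5m$ with $\Delta=4$; the concluding ETH accounting is also correct. The difficulty is that what you have written is a \emph{specification} of how the gadgets must behave, not a construction of gadgets together with a verification that they behave that way -- and, as you concede, that is where the entire substance of the proof lies. Concretely, two things are missing. First, the gadgets themselves: the paper takes, for a variable $x$ with $p_x$ occurrences, the cycle $C_{3p_x}$ on vertices $\bot^x_i,\top^x_i,\third^x_i$ together with an extra vertex $\paw^x_i$ adjacent to both $\top^x_i$ and $\bot^x_i$, and attaches the clause vertex to $\top^x_i$ or $\bot^x_i$ according to the sign of the literal. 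Second, and more importantly, the classification of budget-tight editings. One must prove that any editing set of the cycle $C_{3p}$ has size at least $p$ and that equality forces the deletion of every third edge; note that this yields \emph{three} canonical repairs (three phases), not two, and it is precisely the pendant vertices $\paw^x_i$ that kill the third phase, since deleting the edges $\bot^x_i\top^x_i$ leaves the induced $P_4$ on $\bot^x_i,\paw^x_i,\top^x_i,\third^x_i$. This uniqueness claim is \emph{not} a finite case analysis of a constant-size gadget: the variable gadget is a cycle of unbounded length $3p_x$, and the paper's argument is a propagation argument (classifying the single edit used in each window of four consecutive cycle vertices into five types and showing the type is forced to be constant around the cycle). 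Likewise one must show that the clause gadget, a subdivided claw, needs two edits and that any two-edit repair deletes two of the three edges at its center, and finally that an unsatisfied clause leaves an induced $P_4$ through $v_c$. Without these arguments the claimed equivalence is unsupported.

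A smaller issue: your normalisation pads short clauses by repeating a literal, whereas the construction (and the disjointness argument that the subgraphs $G_x$ and $G_c$ pairwise share at most one vertex, which is what lets local lower bounds add up to $5m$) is cleanest when the three literals of a clause involve three distinct variables; the standard normalisation achieving this should be used instead. This does not affect the overall strategy, which is sound, but the proof cannot be considered complete until the gadget-level claims above are actually established.
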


Theorem~\ref{thm:np-hardness} clearly refines Theorem~\ref{thm:eth-hardness-intro}, and its conclusion follows from the
reduction by an application of Proposition~\ref{prop:eth}.  Hence, we
are left with constructing the reduction, to which the rest of this
section is devoted.  Our approach is similar to the technique used by
Komusiewicz and Uhlmann to show the hardness of a similar problem,
\pname{Cluster Editing}~\cite{komusiewicz2012clusterediting}; However,
the gadgets are heavily modified to work for the \tpe{} problem.

Let $\phi$ be the input instance of \pname{3Sat}.  By standard
modifications of the formula we may assume that every clause contains
exactly three literals, all containing different variables, and that
every variable appears in at least two clauses.  For a variable $x\in
\mathcal{V}(\phi)$, let~$p_x>1$ be the number of occurrences of~$x$ in
the clauses of~$\phi$; Moreover, we order these occurrences
arbitrarily.  Observe that $\sum_{x\in \mathcal{V}(\phi)} p_x =
3|\mathcal{C}(\phi)|$.  Now, for every $x\in \mathcal{V}(\phi)$ we
create a \emph{variable gadget}, and for every $c\in \mathcal{C}(\phi)$ we
create a \emph{clause gadget}.


%
%
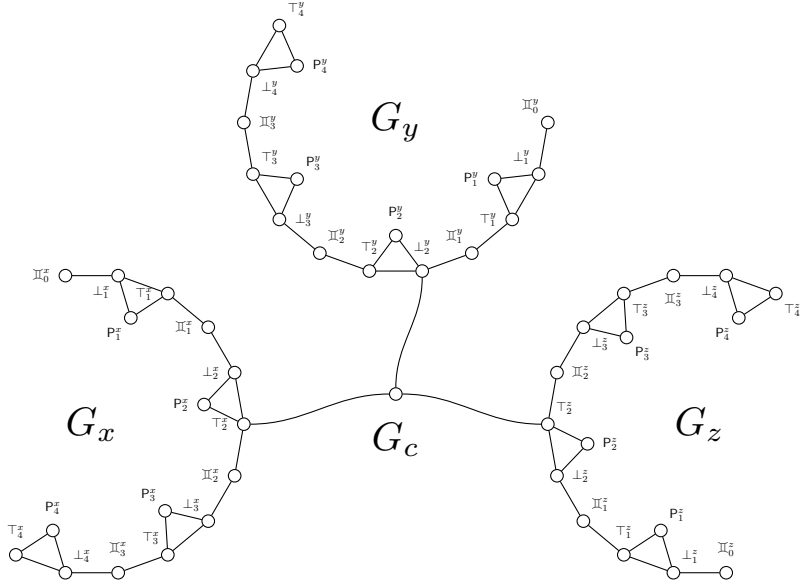
\begin{figure}[t]
  \centering
  \begin{tikzpicture}[every node/.style={circle, draw, fill=white, scale=.5},
    scale=2]
    
    \node (tx1) at ({cos(60)},{sin(60)}) [label=left:$\top_1^x$] {};
    \node (bx1) at ({cos(80)},{sin(80)}) [label=below left:$\bot_1^x$] {};
    \node (sx1) at ({cos(100)},{sin(100)}) [label=left:$\third_0^x$] {};
    \node (px1) at ({0.75*cos(70)},{0.75*sin(70)}) [label=below left:$\paw_1^x$] {};
    
    \node (tx2) at ({cos(0)},{sin(0)}) [label=left:$\top_2^x$] {};
    \node (bx2) at ({cos(20)},{sin(20)}) [label=left:$\bot_2^x$] {};
    \node (sx2) at ({cos(40)},{sin(40)}) [label=left:$\third_1^x$] {};
    \node (px2) at ({0.75*cos(10)},{0.75*sin(10)}) [label=left:$\paw_2^x$] {};
    
    \node (tx3) at ({cos(-60)},{sin(-60}) [label=above left:$\top_3^x$] {};
    \node (bx3) at ({cos(-40)},{sin(-40}) [label=above left:$\bot_3^x$] {};
    \node (sx3) at ({cos(-20)},{sin(-20}) [label=left:$\third_2^x$] {};
    \node (px3) at ({0.75*cos(-50)},{0.75*sin(-50)}) [label=above left:$\paw_3^x$] {};
    
    \node (tx4) at ({cos(-120)},{sin(-120)}) [label=above:$\top_4^x$] {};
    \node (bx4) at ({cos(-100)},{sin(-100)}) [label=above right:$\bot_4^x$] {};
    \node (sx4) at ({cos(-80)},{sin(-80)})   [label=above:$\third_3^x$] {};
    \node (px4) at ({0.75*cos(-110)},{0.75*sin(-110)}) [label=above:$\paw_4^x$] {};

    \node[draw=none,scale=3] (gx) at (0,0) {$G_x$};
    
    \draw (sx1) -- (bx1) -- (tx1) -- (sx2) -- (bx2) -- (tx2) -- (sx3)
    -- (bx3) -- (tx3) -- (sx4) -- (bx4) -- (tx4) ;
    
    \draw (tx1) -- (px1) -- (bx1);
    \draw (tx2) -- (px2) -- (bx2);
    \draw (tx3) -- (px3) -- (bx3);
    \draw (tx4) -- (px4) -- (bx4);

    \node (ty1) at ({2+cos(-40)},{2+sin(-40)})   [label=left:$\top_1^y$] {};
    \node (by1) at ({2+cos(-20)},{2+sin(-20)})   [label=above left:$\bot_1^y$] {};
    \node (sy1) at ({2+cos(0)},{2+sin(0)})       [label=above left:$\third_0^y$] {};
    \node (py1) at ({2+0.75*cos(-30)},{2+0.75*sin(-30)}) [label=left:$\paw_1^y$] {};
    
    \node (ty2) at ({2+cos(-100)},{2+sin(-100)}) [label=above:$\top_2^y$] {};
    \node (by2) at ({2+cos(-80)},{2+sin(-80)})   [label=above:$\bot_2^y$] {};
    \node (sy2) at ({2+cos(-60)},{2+sin(-60)})   [label=above left:$\third_1^y$] {};
    \node (py2) at ({2+0.75*cos(-90)},{2+0.75*sin(-90)}) [label=above:$\paw_2^y$] {};
    
    \node (ty3) at ({2+cos(-160)},{2+sin(-160})  [label=above right:$\top_3^y$] {};
    \node (by3) at ({2+cos(-140)},{2+sin(-140})  [label=right:$\bot_3^y$] {};
    \node (sy3) at ({2+cos(-120)},{2+sin(-120})  [label=above right:$\third_2^y$] {};
    \node (py3) at ({2+0.75*cos(-150)},{2+0.75*sin(-150)}) [label=above right:$\paw_3^y$] {};
    
    \node (ty4) at ({2+cos(-220)},{2+sin(-220)}) [label=above right:$\top_4^y$] {};
    \node (by4) at ({2+cos(-200)},{2+sin(-200)}) [label=below right:$\bot_4^y$] {};
    \node (sy4) at ({2+cos(-180)},{2+sin(-180)}) [label=right:$\third_3^y$] {};
    \node (py4) at ({2+0.75*cos(-210)},{2+0.75*sin(-210)}) [label=right:$\paw_4^y$] {};

    \node[draw=none,scale=3] (gy) at (2,2) {$G_y$};    

    \draw (sy1) -- (by1) -- (ty1) -- (sy2) -- (by2) -- (ty2) -- (sy3)
    -- (by3) -- (ty3) -- (sy4) -- (by4) -- (ty4) ;

    \draw (ty1) -- (py1) -- (by1);
    \draw (ty2) -- (py2) -- (by2);
    \draw (ty3) -- (py3) -- (by3);
    \draw (ty4) -- (py4) -- (by4);

    \node (tz1) at ({4+cos(240)},{sin(240)}) [label=above:$\top_1^z$] {};
    \node (bz1) at ({4+cos(260)},{sin(260)}) [label=above right:$\bot_1^z$] {};
    \node (sz1) at ({4+cos(280)},{sin(280)}) [label=above:$\third_0^z$] {};
    \node (pz1) at ({4+0.75*cos(250)},{0.75*sin(250)}) [label=above right:$\paw_1^z$] {};
    
    \node (tz2) at ({4+cos(180)},{sin(180)}) [label=above right:$\top_2^z$] {};
    \node (bz2) at ({4+cos(200)},{sin(200)}) [label=right:$\bot_2^z$] {};
    \node (sz2) at ({4+cos(220)},{sin(220)}) [label=above right:$\third_1^z$] {};
    \node (pz2) at ({4+0.75*cos(190)},{0.75*sin(190)})    [label=right:$\paw_2^z$] {};
    
    \node (tz3) at ({4+cos(120)},{sin(120}) [label=below right:$\top_3^z$] {};
    \node (bz3) at ({4+cos(140)},{sin(140}) [label=below right:$\bot_3^z$] {};
    \node (sz3) at ({4+cos(160)},{sin(160}) [label=right:$\third_2^z$] {};
    \node (pz3) at ({4+0.75*cos(130)},{0.75*sin(130)}) [label=below right:$\paw_3^z$] {};
    
    \node (tz4) at ({4+cos(60)},{sin(60)}) [label=below right:$\top_4^z$] {};
    \node (bz4) at ({4+cos(80)},{sin(80)}) [label=below left:$\bot_4^z$] {};
    \node (sz4) at ({4+cos(100)},{sin(100)})   [label=below:$\third_3^z$] {};
    \node (pz4) at ({4+0.75*cos(70)},{0.75*sin(70)}) [label=below left:$\paw_4^z$] {};

    \node[draw=none,scale=3] (gz) at (4,0) {$G_z$};    

    \draw (sz1) -- (bz1) -- (tz1) -- (sz2) -- (bz2) -- (tz2) -- (sz3)
    -- (bz3) -- (tz3) -- (sz4) -- (bz4) -- (tz4) ;

    \draw (tz1) -- (pz1) -- (bz1);
    \draw (tz2) -- (pz2) -- (bz2);
    \draw (tz3) -- (pz3) -- (bz3);
    \draw (tz4) -- (pz4) -- (bz4);
    
    \node[draw=none,scale=3] (gc) at (2,-.1) {$G_c$};
    \node (c) at (2,.2) {};

    \draw (c) to[looseness=1, out=180, in=0] (tx2);
    \draw (c) to[looseness=1, out=90, in=-90] (by2);
    \draw (c) to[looseness=1, out=0, in=180] (tz2);
    
  \end{tikzpicture}
  \caption{Gadget $c = x \lor \neg y \lor z$.  The clause $c$ is now the second
    clause all variables $x$, $y$, and $z$ appear in, and $x$ and $z$
    appears positively whereas $y$ appears negatively.}
  \label{fig:clause-gadget}
\end{figure}

\begin{figure}[t]
  \centering
  \begin{tikzpicture}[every node/.style={circle, draw, fill=white, scale=.5},
    scale=2]

    \node (tx1) at ({cos(60)},{sin(60)}) [label=left:$\top_1^x$] {};
    \node (bx1) at ({cos(80)},{sin(80)}) [label=below left:$\bot_1^x$] {};
    \node (sx1) at ({cos(100)},{sin(100)}) [label=left:$\third_0^x$] {};
    \node (px1) at ({0.75*cos(70)},{0.75*sin(70)}) [label=below left:$\paw_1^x$] {};
    
    \node (tx2) at ({cos(0)},{sin(0)}) [label=left:$\top_2^x$] {};
    \node (bx2) at ({cos(20)},{sin(20)}) [label=left:$\bot_2^x$] {};
    \node (sx2) at ({cos(40)},{sin(40)}) [label=left:$\third_1^x$] {};
    \node (px2) at ({0.75*cos(10)},{0.75*sin(10)}) [label=left:$\paw_2^x$] {};
    
    \node (tx3) at ({cos(-60)},{sin(-60}) [label=above left:$\top_3^x$] {};
    \node (bx3) at ({cos(-40)},{sin(-40}) [label=above left:$\bot_3^x$] {};
    \node (sx3) at ({cos(-20)},{sin(-20}) [label=left:$\third_2^x$] {};
    \node (px3) at ({0.75*cos(-50)},{0.75*sin(-50)}) [label=above left:$\paw_3^x$] {};
    
    \node (tx4) at ({cos(-120)},{sin(-120)}) [label=above:$\top_4^x$] {};
    \node (bx4) at ({cos(-100)},{sin(-100)}) [label=above right:$\bot_4^x$] {};
    \node (sx4) at ({cos(-80)},{sin(-80)})   [label=above:$\third_3^x$] {};
    \node (px4) at ({0.75*cos(-110)},{0.75*sin(-110)}) [label=above:$\paw_4^x$] {};

    \node[draw=none,scale=3] (gx) at (0,0) {$G_x$};
        
    \draw (bx1) -- (tx1) -- (sx2);
    \draw (bx2) -- (tx2) -- (sx3);
    \draw (bx3) -- (tx3) -- (sx4);
    \draw (bx4) -- (tx4) ;
    
    \draw (tx1) -- (px1) -- (bx1);
    \draw (tx2) -- (px2) -- (bx2);
    \draw (tx3) -- (px3) -- (bx3);
    \draw (tx4) -- (px4) -- (bx4);

    \node (ty1) at ({2+cos(-40)},{2+sin(-40)})   [label=left:$\top_1^y$] {};
    \node (by1) at ({2+cos(-20)},{2+sin(-20)})   [label=above left:$\bot_1^y$] {};
    \node (sy1) at ({2+cos(0)},{2+sin(0)})       [label=above left:$\third_0^y$] {};
    \node (py1) at ({2+0.75*cos(-30)},{2+0.75*sin(-30)}) [label=left:$\paw_1^y$] {};
    
    \node (ty2) at ({2+cos(-100)},{2+sin(-100)}) [label=above:$\top_2^y$] {};
    \node (by2) at ({2+cos(-80)},{2+sin(-80)})   [label=above:$\bot_2^y$] {};
    \node (sy2) at ({2+cos(-60)},{2+sin(-60)})   [label=above left:$\third_1^y$] {};
    \node (py2) at ({2+0.75*cos(-90)},{2+0.75*sin(-90)}) [label=above:$\paw_2^y$] {};
    
    \node (ty3) at ({2+cos(-160)},{2+sin(-160})  [label=above right:$\top_3^y$] {};
    \node (by3) at ({2+cos(-140)},{2+sin(-140})  [label=right:$\bot_3^y$] {};
    \node (sy3) at ({2+cos(-120)},{2+sin(-120})  [label=above right:$\third_2^y$] {};
    \node (py3) at ({2+0.75*cos(-150)},{2+0.75*sin(-150)}) [label=above right:$\paw_3^y$] {};
    
    \node (ty4) at ({2+cos(-220)},{2+sin(-220)}) [label=above right:$\top_4^y$] {};
    \node (by4) at ({2+cos(-200)},{2+sin(-200)}) [label=below right:$\bot_4^y$] {};
    \node (sy4) at ({2+cos(-180)},{2+sin(-180)}) [label=right:$\third_3^y$] {};
    \node (py4) at ({2+0.75*cos(-210)},{2+0.75*sin(-210)}) [label=right:$\paw_4^y$] {};

    \node[draw=none,scale=3] (gy) at (2,2) {$G_y$};

    \draw (by1) -- (ty1) -- (sy2);
    \draw (by2) -- (ty2) -- (sy3);
    \draw (by3) -- (ty3) -- (sy4);
    \draw (by4) -- (ty4) ;
    
    \draw (ty1) -- (py1) -- (by1);
    \draw (ty2) -- (py2) -- (by2);
    \draw (ty3) -- (py3) -- (by3);
    \draw (ty4) -- (py4) -- (by4);

    \node (tz1) at ({4+cos(240)},{sin(240)}) [label=above:$\top_1^z$] {};
    \node (bz1) at ({4+cos(260)},{sin(260)}) [label=above right:$\bot_1^z$] {};
    \node (sz1) at ({4+cos(280)},{sin(280)}) [label=above:$\third_0^z$] {};
    \node (pz1) at ({4+0.75*cos(250)},{0.75*sin(250)}) [label=above right:$\paw_1^z$] {};
    
    \node (tz2) at ({4+cos(180)},{sin(180)}) [label=above right:$\top_2^z$] {};
    \node (bz2) at ({4+cos(200)},{sin(200)}) [label=right:$\bot_2^z$] {};
    \node (sz2) at ({4+cos(220)},{sin(220)}) [label=above right:$\third_1^z$] {};
    \node (pz2) at ({4+0.75*cos(190)},{0.75*sin(190)})    [label=right:$\paw_2^z$] {};
    
    \node (tz3) at ({4+cos(120)},{sin(120}) [label=below right:$\top_3^z$] {};
    \node (bz3) at ({4+cos(140)},{sin(140}) [label=below right:$\bot_3^z$] {};
    \node (sz3) at ({4+cos(160)},{sin(160}) [label=right:$\third_2^z$] {};
    \node (pz3) at ({4+0.75*cos(130)},{0.75*sin(130)}) [label=below right:$\paw_3^z$] {};
    
    \node (tz4) at ({4+cos(60)},{sin(60)}) [label=below right:$\top_4^z$] {};
    \node (bz4) at ({4+cos(80)},{sin(80)}) [label=below left:$\bot_4^z$] {};
    \node (sz4) at ({4+cos(100)},{sin(100)})   [label=below:$\third_3^z$] {};
    \node (pz4) at ({4+0.75*cos(70)},{0.75*sin(70)}) [label=below left:$\paw_4^z$] {};

    \node[draw=none,scale=3] (gz) at (4,0) {$G_z$};    
    
    \draw (sz1) -- (bz1) -- (tz1);
    \draw (sz2) -- (bz2) -- (tz2);
    \draw (sz3) -- (bz3) -- (tz3);
    \draw (sz4) -- (bz4) -- (tz4);

    \draw (tz1) -- (pz1) -- (bz1);
    \draw (tz2) -- (pz2) -- (bz2);
    \draw (tz3) -- (pz3) -- (bz3);
    \draw (tz4) -- (pz4) -- (bz4);

    \node[draw=none,scale=3] (gc) at (2,-.1) {$G_c$};
    \node (c) at (2,.2) {};
    
    \draw (c) to[looseness=1, out=180, in=0] (tx2);
    
  \end{tikzpicture}
  \caption{Edited gadget of $c = x \lor \neg y \lor z$ where $\alpha(x) = \top$, $\alpha(y) =
    \top$ and $\alpha(z) = \bot$ and $x$ has been chosen (no choice)
    to satisfy $c$.  Notice the formation of \emph{paws}, except the
    one incident to $c$ which induces a \emph{cricket}.}
  \label{fig:clause-gadget-edited}
\end{figure}
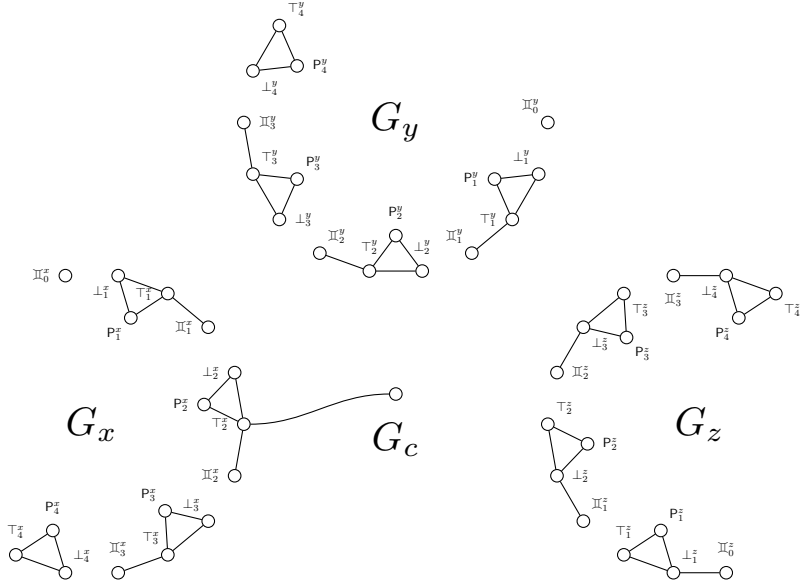

\paragraph{Variable gadgets.}  For $x \in \mathcal{V}(\phi)$, construct a
graph $G_x$ isomorphic to $C_{3p_x}$, a cycle on $3p_x$ vertices.  The
vertices of $G_x$ are labeled $\bot^x_i,\top^x_i,\third^x_i$ for $i
\in [0, p_x - 1]$, in the order of their appearance on the cycle.  We
then add a vertex $\paw^x_i$ adjacent to $\top^x_i$ and $\bot^x_i$,
for each $i\in [0,p_x-1]$, see Figure~\ref{fig:clause-gadget}.
Formally, the vertices $\paw^x_i$ do not belong to $G_x$, but they
will be used to wire variable gadgets with clause gadgets.  This
concludes the construction of the variable gadget, and it should be
clear that the number of created vertices and edges is bounded
linearly in $p_x$; More precisely, we created $4p_x$ vertices and
$5p_x$ edges.

For the sake of later argumentation, we now define the deletion set
$F^\alpha_x$ for $G_x$.  If, in an assignment of variables $\alpha:
\mathcal{V}(\phi) \to \{\top,\bot\}$, we have $\alpha(x) = \top$, then
we let $F^\alpha_x$ be the set consisting of every edge of the form
$\third^x_i\bot^x_{i+1 \bmod p_x}$ for $i\in [0, p_x - 1]$.  If, on
the other hand, $\alpha(x) = \bot$, we define the deletion set
$F^\alpha_x$ to be the set comprising the edges $\top^x_i\third^x_i$
for $i\in [0, p_x - 1]$, see Figure~\ref{fig:clause-gadget-edited}.
We will later show that these are the only relevant editing sets of
size at most $p_x$ for $G_x$.

\paragraph{Clause gadget.} The clause gadgets are very simple.  A
clause gadget consists simply of one vertex, i.e., for a clause $c \in
\mathcal{C}(\phi)$ construct the vertex~$v_c$.  This vertex will be
connected to~$G_x$,~$G_y$ and~$G_z$, for~$x$,~$y$, and~$z$ being the
variables appearing in~$c$, in appropriate places, depending on
whether the variable occurs positively or negatively in~$c$.  More
precisely, if~$c$ is the~$i$th clause~$x$ appears in, then we
make~$v_c$ adjacent to~$\top^x_i$ provided that~$x$ appears positively
in~$c$, and to~$\bot^x_i$ provided that~$x$ appears negatively in~$c$.
This concludes the construction of a clause gadget.  As every clause
gadget contains one vertex and three edges, the construction of all
the clause gadgets creates $|\mathcal{C}(\phi)|$ vertices and
$3|\mathcal{C}(\phi)|$ edges.

The deletion set for a clause gadget will be as follows.  Let $\alpha:
\mathcal{V}(\phi) \to \{\top,\bot\}$, be an assignment of the
variables that satisfies all the clauses.  Suppose $c = \ell_x \lor
\ell_y \lor \ell_z$, where the literals~$\ell_x$,~$\ell_y$,
and~$\ell_z$ contain variables~$x$,~$y$, and~$z$, respectively.  Pick
any literal satisfying~$c$, say~$\ell_x$, and delete the two other
edges in the connection, i.e., the two edges connecting~$v_c$ with
vertices of~$G_y$ and~$G_z$.  Thus~$v_c$ remains a vertex of
degree~$1$, adjacent to a vertex of~$G_x$.

\bigskip

Let $G_\phi$ be the constructed graph.  We set the budget for edits to
\begin{eqnarray*}
k_\phi =& \sum_{x\in \mathcal{V}(\phi)} p_x + 2|\mathcal{C}(\phi)| =5 |\mathcal{C}(\phi)|.
\end{eqnarray*}
Observe also that 
\begin{eqnarray*}
|V(G_\phi)|=& \sum_{x\in \mathcal{V}(\phi)} 4p_x + |\mathcal{C}(\phi)|=13|\mathcal{C}(\phi)|,\\
|E(G_\phi)|=& \sum_{x\in \mathcal{V}(\phi)} 5p_x + 3|\mathcal{C}(\phi)|=18|\mathcal{C}(\phi)|,
\end{eqnarray*}
and that $\Delta(G_\phi)=4$.  Thus, all the technical properties stated in
Theorem~\ref{thm:np-hardness} are satisfied, and we are left with
proving that $(G_\phi,k_\phi)$ is a yes-instance of \tpe{} if and only if
$\phi$ is satisfiable.

Before we state the main lemma, we give two auxiliary observations
that settle the tightness of the budget:
\begin{claim}
  \label{claim:var-gadget-tight}
  Suppose that a graph $H$ is a cycle on $3p$ vertices for some $p>1$,
  and suppose~$F$ is an editing set for~$H$.  Then $|F|\geq p$.
  Moreover, if~$|F|=p$ then~$F$ consists of deletions of every third
  edge of the cycle.
\end{claim}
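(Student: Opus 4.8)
The plan is to prove both assertions with a single double-counting argument over the induced $P_4$'s of the cycle. Write $V(H)=\mathbb{Z}_{3p}$ with edges $e_i=\{i,i+1\}$ (indices mod $3p$); since $p>1$ we have $3p\geq 6$, so $H$ is not trivially perfect to begin with. First I would record the elementary fact that the induced subgraph of a cycle $C_n$ on any proper subset of its vertices is a disjoint union of paths; consequently, for $n\geq 5$ the induced $P_4$'s of $C_n$ are exactly the sets of four cyclically consecutive vertices, and $C_n$ has no induced $C_4$. In our situation this produces the $3p$ induced $P_4$'s $P_i:=\{i,i+1,i+2,i+3\}$ for $i\in\mathbb{Z}_{3p}$. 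Since $H\triangle F$ is trivially perfect, hence $P_4$-free, every such window must be \emph{broken} by $F$: if $F\cap\binom{P_i}{2}=\emptyset$ then $(H\triangle F)[P_i]=H[P_i]$ is still an induced $P_4$, a contradiction. So $F\cap\binom{P_i}{2}\neq\emptyset$ for all $i$.

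The key quantitative observation is that a single pair of vertices is contained in at most three of the windows, with equality precisely for the edges of $H$. For a pair $f$ put $B(f)=\{i\in\mathbb{Z}_{3p}\colon f\in\binom{P_i}{2}\}$. A window $P_i$ contains both endpoints of $f$ only if those endpoints lie among four cyclically consecutive vertices, and a routine count shows the number of such windows equals $\max(0,4-d)$, where $d\geq 1$ is the shorter cyclic distance between the endpoints of $f$ — the only exception being the degenerate case $3p=6$, in which a ``diameter'' pair at distance $3$ lies in two windows instead of one. In every case $|B(f)|\leq 3$, and $|B(f)|=3$ forces $d=1$, i.e.\ $f\in E(H)$. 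Double counting the window--edit incidences gives
\[
3p\;\leq\;\sum_{i\in\mathbb{Z}_{3p}}\bigl|\{f\in F\colon f\in\binom{P_i}{2}\}\bigr|\;=\;\sum_{f\in F}|B(f)|\;\leq\;3|F| ,
\]
whence $|F|\geq p$, which is the first assertion.

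For the second assertion, suppose $|F|=p$, so the displayed chain is tight throughout. Tightness of $\sum_{f}|B(f)|\leq 3|F|$ gives $|B(f)|=3$ for every $f\in F$, hence every edit is the deletion of an edge of $H$; write $F=\{e_{m_1},\dots,e_{m_p}\}$. Tightness of the first inequality means each window is broken exactly once, so the sets $B(e_{m_1}),\dots,B(e_{m_p})$ are pairwise disjoint; as each has size $3$ and there are $p$ of them, they partition $\mathbb{Z}_{3p}$. Since $B(e_m)=\{m-2,m-1,m\}$ is an arc of length $3$, a partition of the cycle $\mathbb{Z}_{3p}$ into $p$ such arcs is necessarily regular: reading it around the circle, consecutive arcs are adjacent, so after reindexing $m_t=m_1+3(t-1)$. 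Hence $F=\{e_{m_1},e_{m_1+3},\dots,e_{m_1+3(p-1)}\}$, i.e.\ $F$ consists exactly of the deletions of every third edge of the cycle.

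The only delicate point is the incidence bound $|B(f)|\leq 3$ together with its equality case, where the wrap-around in the tiny case $3p=6$ has to be checked by hand; everything else is bookkeeping, so I expect the formal proof to be short, with this count and the ``arcs tile the cycle regularly'' step being the only parts needing explicit verification.
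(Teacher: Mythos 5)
Your proof is correct, but it follows a genuinely different route from the paper's. For the lower bound, the paper packs only the $p$ \emph{pairwise disjoint} windows $A_i=\{v_{3i},v_{3i+1},v_{3i+2},v_{3i+3}\}$, $i=0,\dots,p-1$, and gets $|F|\geq p$ immediately from disjointness of the sets $\binom{A_i}{2}$; you instead count all $3p$ windows and divide by the maximum incidence $3$, a fractional-covering argument that requires the (correct, and correctly flagged) check that a pair lies in at most three windows, with equality only for edges of $H$, including the wrap-around anomaly of the diameter pair in $C_6$. The real divergence is in the equality case: the paper classifies the unique edit inside each $A_i$ into five types ($D^-,D^0,D^+,C^-,C^+$) and runs a propagation/case analysis showing a surviving $P_4$ appears unless all windows carry the same deletion type; you get uniqueness for free from tightness of the double count --- $|B(f)|=3$ forces every edit to be an edge deletion, and the disjoint length-$3$ arcs $B(e_m)=\{m-2,m-1,m\}$ must tile $\mathbb{Z}_{3p}$ regularly. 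Your approach buys a shorter, case-free uniqueness argument (in particular it rules out edge additions without examining them individually), at the price of the more delicate incidence count up front; the paper's approach keeps the counting trivial and pushes all the work into the local case analysis. Both are complete; no gap.
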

\begin{claim}
  \label{claim:clause-gadget-tight}
  Suppose a graph $H$ is a \emph{subdivided claw}, i.e., the star
  $K_{1,3}$ with every leg subdivided once (see
  Figure~\ref{fig:subdivided-claw}).  Furthermore, suppose that~$F$ is
  an editing set for~$H$.  Then~$|F|\geq 2$.  Moreover, if~$|F|=2$
  then~$F$ consists of deletions of two edges incident to the center
  of the subdivided claw (see
  Figure~\ref{fig:subdivided-claw-edited}).
\end{claim}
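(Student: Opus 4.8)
The plan is to exploit that $H$ is a tree on seven vertices whose only induced obstructions form a short explicit list of $P_4$s. Label the center of the subdivided claw $c$ and the three subdivided legs $c - a_i - b_i$ for $i \in \{1,2,3\}$. Since $H$ is acyclic it has no induced $C_4$, and its induced $P_4$s are exactly the six paths $P_{i,j} := b_i - a_i - c - a_j$ over ordered pairs $i \neq j$. The one combinatorial lemma I would isolate first is: for any editing set $F$ of $H$ and any of these $P_4$s with vertex set $S$, the set $F$ must contain a pair $e$ with $e \subseteq S$ — otherwise toggling the pairs of $F$ changes no pair inside $S$, so $(H \triangle F)[S] = H[S]$ is still an induced $P_4$, contradicting that $H \triangle F$ is trivially perfect. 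Everything else is bookkeeping built on this lemma.

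For the bound $|F| \geq 2$: $H$ itself is not trivially perfect (e.g. $b_1 - a_1 - c - a_2$ is an induced $P_4$), so $|F| \geq 1$; if $F = \{e\}$ then by the lemma $e$ is a pair contained in $V(P_{i,j})$ for all six $P_{i,j}$, hence $e \subseteq \bigcap_{i \neq j} V(P_{i,j})$. But this intersection is $\{c\}$ — every $V(P_{i,j})$ contains $c$, and for each index $i$ one can pick the path $P_{i',j'}$ with $\{i',j'\} = \{1,2,3\} \setminus \{i\}$, whose vertex set avoids both $a_i$ and $b_i$ — and $\{c\}$ contains no pair. This is a contradiction, so $|F| \geq 2$.

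Now suppose $|F| = 2$, say $F = \{e_1, e_2\}$. By the lemma, every $V(P_{i,j})$ contains $e_1$ or $e_2$, so the collections of $P_{i,j}$'s whose vertex set contains $e_1$, respectively $e_2$, together cover all six. I would next run through all $\binom{7}{2}$ pairs of vertices and count, for each, in how many of the six sets $V(P_{i,j})$ it is contained: the maximum is $4$, attained only by the pairs $\{c, a_i\}$, and every other pair is contained in at most two of them. Since the two covered collections have union of size $6$, one of $e_1, e_2$ — say $e_1$ — must be contained in at least three of the $V(P_{i,j})$, forcing $e_1 = \{c, a_i\}$ for some $i$. This $e_1$ covers precisely the four $P_4$s in which $a_i$ is a neighbour of $c$, leaving uncovered exactly $P_{j,l}$ and $P_{l,j}$ with $\{j,l\} = \{1,2,3\} \setminus \{i\}$, and $V(P_{j,l}) \cap V(P_{l,j}) = \{a_j, c, a_l\}$. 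Hence $e_2$ is a pair inside $\{a_j, c, a_l\}$, i.e. $e_2 \in \{\{c, a_j\}, \{c, a_l\}, \{a_j, a_l\}\}$.

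It remains to decide which of these three candidates actually yields a trivially perfect graph, and this is the one place where I expect genuine (though mild) subtlety: the covering lemma only guarantees that no \emph{original} $P_4$ survives, so for each candidate I must still check for \emph{newly created} obstructions (a $C_4$, or a $P_4$ running through two of the original leaves). If $e_2 \in \{\{c,a_j\}, \{c,a_l\}\}$ then $F$ deletes two of the three edges incident to $c$ and $H \triangle F$ is a disjoint union of a $P_3$ with two $P_2$s, which is trivially perfect; these are exactly the editing sets in the statement, and this also shows that $|F| = 2$ is achievable. If instead $e_2 = \{a_j, a_l\}$ then $F$ deletes $c a_i$ and adds $a_j a_l$, and now $\{b_j, a_j, a_l, b_l\}$ induces a $P_4$ in $H \triangle F$, so this candidate is not an editing set. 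Having exhausted all cases, we conclude that every editing set has size at least $2$ and that a size-$2$ editing set must consist of deletions of two edges incident to the center, proving the claim.
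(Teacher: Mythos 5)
Your proof is correct. It rests on the same basic principle as the paper's — every induced $P_4$ of $H$ must contain both endpoints of some edit, since $H$ is an induced subgraph and untouched obstructions survive — but the bookkeeping is genuinely different. The paper selects only three $P_4$s, arranged cyclically ($a_2{-}a_1{-}v{-}c_1$, $b_2{-}b_1{-}v{-}a_1$, $c_2{-}c_1{-}v{-}b_1$), so that any single edit destroys at most two of them, and a non-center-edge edit destroys at most one; this yields $|F|\geq 2$ and forces a size-$2$ solution to delete a center edge in one stroke, after which the paper finishes by observing that the residue is a $P_5$ whose unique one-edit repairs are deletions of its middle edges. You instead enumerate all six induced $P_4$s, get the lower bound from the fact that their vertex sets intersect only in the center, identify the first edit by a coverage-maximum count (only $\{c,a_i\}$ lies in more than two of the six vertex sets), and then explicitly test the three candidate second edits, correctly rejecting the addition $\{a_j,a_l\}$ because it creates the new $P_4$ $b_j{-}a_j{-}a_l{-}b_l$. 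Both arguments handle the necessary-but-not-sufficient subtlety of "hitting all original obstructions" — the paper via the residual-$P_5$ analysis, you via direct verification of the candidates — and both are complete; the paper's choice of three paths is more economical, while your exhaustive version is more mechanical and leaves less to the reader.
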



We will prove the two claims in order now.  The astute reader should
already see that this implies the tightness of the budget: every
editing set needs to include exactly~$p_x$ edges of every variable
gadget~$G_x$ (by Claim~\ref{claim:var-gadget-tight}), and exactly two
edges incident to every vertex~$v_c$ (by
Claim~\ref{claim:clause-gadget-tight}).  The additional
vertices~$\paw^x_i$ will form the degree-1 vertices of subdivided
claws created by clause gadgets, and all the subgraphs in question
pairwise share at most single vertices, which means that any edit can
influence at most one of them.  This statement is made formal in the
proof of Lemma~\ref{lem:3sat-tpe}.

%
%
%
%
\begin{proof}[Proof of Claim~\ref{claim:var-gadget-tight}]
  Let $v_0,v_1,\ldots,v_{3p-1}$ be the vertices of~$H$, in their order
  of appearance on the cycle.  For $i=0,1,\ldots,p-1$, let
  $A_i=\{v_{3i},v_{3i+1},v_{3i+2},v_{3i+3}\}$; Here and in the sequel,
  the indices behave cyclically in a natural manner.  Observe that
  each $A_i$ induces a $P_4$ in $H$, hence $F\cap \binom{A_i}{2}\neq
  \emptyset$.  However, the sets $\binom{A_i}{2}$ are pairwise
  disjoint for $i=0,1,\ldots,p-1$, from which it follows that $|F|\geq
  p$.

  Suppose now that $|F|=p$.  Hence $|F\cap \binom{A_i}{2}|=1$ for each
  $i\in [0,p-1]$, and there are no edits outside the sets $\binom{A_i}{2}$.
  There are five possible ways for an $A_i$ of how $F\cap
  \binom{A_i}{2}$ can look like: It is either a deletion of the edge
  $v_{3i}v_{3i+1}$, $v_{3i+1}v_{3i+2}$, or $v_{3i+2}v_{3i+3}$
  (henceforth referred to as types~$D^-$,~$D^0$, and~$D^+$,
  respectively), or an addition of the edge $v_{3i}v_{3i+2}$ or
  $v_{3i+1}v_{3i+3}$ (henceforth called types~$C^-$ and~$C^+$,
  respectively)---the sixth possibility, which has been left out,
  creates an induced~$C_4$.  Observe now that if some~$A_i$ has
  type~$D^-$, then~$A_{i+1}$ also has type~$D^-$, or otherwise a~$P_4$
  $v_{3i+1}-v_{3i+2}-v_{3i+3}-v_{3i+4}$ would remain in the graph.
  Similarly, if~$A_i$ has type~$D^+$ then~$A_{i-1}$ also has
  type~$D^+$.  Hence, if type~$D^+$ or~$D^-$ appears for any~$A_i$,
  then all the~$A_i$s have the same type.  Observe now that if
  some~$A_i$ had type~$C^-$ and~$C^+$, then~$A_{i-1}$ would have to
  have type~$D^+$ and~$A_{i+1}$ would have to have type~$D^-$ or
  otherwise an unresolved~$P_4$ would appear; This is a contradiction
  with the previous observations, since types~$D^-$ and~$D^+$ cannot
  appear simultaneously.  Hence, we are left with only three
  possibilities: all the~$A_i$s have type~$D^-$, or all have
  type~$D^0$, or all have type~$D^+$.  \cqed\end{proof}

\begin{proof}[Proof of Claim~\ref{claim:clause-gadget-tight}]
  Denote the vertices of $H$ as in Figure~\ref{fig:subdivided-claw}.
  Consider the following three $P_4$s in $H$:
  \begin{itemize}
  \item $a_2-a_1-v-c_1$,
  \item $b_2-b_1-v-a_1$, and
  \item $c_2-c_1-v-b_1$.
  \end{itemize}

  \begin{figure}[htp]
    \centering
    \begin{subfigure}[t]{0.45\textwidth}
      \centering
      \begin{tikzpicture}[every node/.style={circle, draw, scale=.7},
        minimum size=2em, scale=1, inner sep=0]
        
        \def\so{0.707106} 
        \def\st{1.4142135} 
        
        \node (v)  at (0,0) {$v$};
        \node (a1) at (0,1) {$a_1$};
        \node (a2) at (0,2) {$a_2$};
        \node (b1) at (-\so,-\so) {$b_1$};
        \node (b2) at (-\st,-\st) {$b_2$};
        \node (c1) at (\so,-\so) {$c_1$};
        \node (c2) at (\st,-\st) {$c_2$};
        
        \draw (a2) -- (a1) -- (v) -- (b1) -- (b2);
        \draw (v) -- (c1) -- (c2);
      \end{tikzpicture}
      \caption{A subdivided claw.}
      \label{fig:subdivided-claw}
    \end{subfigure}
    \hspace{.03\textwidth}
    \begin{subfigure}[t]{0.45\textwidth}
      \centering
      \begin{tikzpicture}[every node/.style={circle, draw, scale=.7},
        minimum size=2em, scale=1, inner sep=0]
        
        \def\so{0.707106} 
        \def\st{1.4142135} 
        
        \node (v)  at (0,0) {$v$};
        \node (a1) at (0,1) {$a_1$};
        \node (a2) at (0,2) {$a_2$};
        \node (b1) at (-\so,-\so) {$b_1$};
        \node (b2) at (-\st,-\st) {$b_2$};
        \node (c1) at (\so,-\so) {$c_1$};
        \node (c2) at (\st,-\st) {$c_2$};
        
        \draw (a2) -- (a1) -- (v);
        \draw (b1) -- (b2);
        \draw (c1) -- (c2);
      \end{tikzpicture}
      \caption{An optimally edited subdivided claw.}
      \label{fig:subdivided-claw-edited}
    \end{subfigure}
  \end{figure}

  Observe that any edge addition in $H$ can destroy at most one of
  these~$P_4$s, and a deletion of any of edges $a_1a_2$, $b_1b_2$, or
  $c_1c_2$ also can destroy at most one of these~$P_4$s.  Moreover, a
  deletion of any of the edges incident to the center~$v$ destroys
  only two of them.  We infer that~$|F|\geq 2$ since no single edit
  can destroy all three considered~$P_4$s, and moreover if~$|F|=2$,
  then~$F$ contains at least one deletion of an edge incident to~$v$,
  say~$va_1$.  After deleting this edge we are left with a~$P_5$
  $b_2-b_1-v-c_1-c_2$, and it can be readily checked that the only way
  to edit it to a trivially perfect graph using only one edit is to
  delete~$vb_1$ or~$vc_1$.  Thus, any editing set~$F$ with~$|F|=2$ in
  fact consists of deletions of two edges incident
  to~$v$.\cqed\end{proof}

\begin{lemma}
  \label{lem:3sat-tpe}
  The input \pname{3Sat} instance $\phi$ is satisfiable if and only if $(G_\phi,k_\phi)$
  is a yes-instance of \tpe.
\end{lemma}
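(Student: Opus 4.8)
The plan is to prove the two implications separately, in both cases using Claims~\ref{claim:var-gadget-tight} and~\ref{claim:clause-gadget-tight} to pin down the structure of a small editing set: one edit per clause on top of the $p_x$ forced edits inside each variable gadget $G_x$. For the direction ``$\phi$ satisfiable $\Rightarrow$ \yesinstance'', given a satisfying assignment $\alpha$ I would take $F=\bigcup_{x\in\mathcal V(\phi)}F^\alpha_x\cup\bigcup_{c\in\mathcal C(\phi)}F^\alpha_c$, where $F^\alpha_x$ is the prescribed deletion set of $G_x$ (deleting every third cycle edge, in the phase determined by $\alpha(x)$) and $F^\alpha_c$ deletes the two edges of the connection of $v_c$ that do not lead to a fixed literal of $c$ satisfied by $\alpha$ (one exists as $\alpha$ satisfies $c$). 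Then $|F|=\sum_x p_x+2|\mathcal C(\phi)|=k_\phi$ and $G_\phi\triangle F=G_\phi-F$, so the real work is the structural check: after removing $F^\alpha_x$ the cycle of $G_x$ falls into $p_x$ three-vertex paths, each carrying exactly one pendant $\paw^x_i$ adjacent to two of those vertices, hence each component is a \emph{paw}; where the clause edge at occurrence $i$ survives, the component gains one more pendant and becomes a \emph{cricket} (see Figure~\ref{fig:clause-gadget-edited}). Since $F^\alpha_c$ leaves every $v_c$ of degree $1$, components do not merge, so $G_\phi-F$ is a disjoint union of paws and crickets; each has a universal vertex whose removal leaves an edge together with isolated vertices, so by Proposition~\ref{def:tp-recursive} these are trivially perfect and hence so is $G_\phi-F$.

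For the direction ``\yesinstance{} $\Rightarrow$ $\phi$ satisfiable'', let $F$ be an editing set with $|F|\le k_\phi$. I would set up two families of certificate subgraphs. First, $G_x\cong C_{3p_x}$ is an induced subgraph of $G_\phi$, so $F_x:=F\cap\binom{V(G_x)}{2}$ is an editing set of $C_{3p_x}$ and $|F_x|\ge p_x$ by Claim~\ref{claim:var-gadget-tight}. Second, for a clause $c=\ell_x\vee\ell_y\vee\ell_z$ let $u_x,u_y,u_z$ be the literal vertices adjacent to $v_c$ and $\paw^x_i,\paw^y_j,\paw^z_l$ the corresponding paw vertices; then $\mathrm{Cl}_c:=\{v_c,u_x,u_y,u_z,\paw^x_i,\paw^y_j,\paw^z_l\}$ induces a subdivided claw centred at $v_c$ (because $v_c$ is adjacent to each $u$, each $u$ is adjacent to its paw vertex regardless of the sign of the occurrence, and all remaining pairs join vertices of different gadgets or $v_c$ to a paw vertex, hence are non-edges), so $|F\cap\binom{\mathrm{Cl}_c}{2}|\ge 2$ by Claim~\ref{claim:clause-gadget-tight}. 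As every occurrence of a variable lies in a unique clause, the sets $V(G_x)$ and $\mathrm{Cl}_c$ pairwise meet in at most one vertex, so the pair-sets $\binom{V(G_x)}{2}$ and $\binom{\mathrm{Cl}_c}{2}$ are pairwise disjoint; summing the bounds gives $|F|\ge\sum_x p_x+2|\mathcal C(\phi)|=k_\phi$. Therefore all these inequalities are equalities: $F$ lies entirely in these pair-sets, $|F_x|=p_x$ for every $x$, and $|F\cap\binom{\mathrm{Cl}_c}{2}|=2$ for every $c$.

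Next I would read off the assignment. By the equality cases of the claims, each $F_x$ deletes every third edge of $G_x$ (one of three phases), and $F\cap\binom{\mathrm{Cl}_c}{2}$ consists of two deletions incident to $v_c$, so in $G_\phi-F$ the vertex $v_c$ keeps exactly one clause edge, to a vertex I call the \emph{chosen} literal vertex of $c$; moreover $F$ touches no paw vertex, so each $\paw^x_i$ retains precisely its two original neighbours. First I would rule out the ``neutral'' phase, deleting all edges $\bot^x_i\top^x_i$: then $\third^x_i-\top^x_i-\paw^x_i-\bot^x_i$ is an induced $P_4$ of $G_\phi-F$, a contradiction. So each $F_x$ is the $\top$-phase (deleting the edges $\third^x_i\bot^x_{i+1}$) or the $\bot$-phase (deleting the edges $\top^x_i\third^x_i$), which defines an assignment $\alpha$. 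Finally, for any clause $c$ whose chosen literal vertex lies in $G_x$: if $x$ occurs positively the chosen edge is $v_c\top^x_i$, and if $F_x$ were the $\bot$-phase then $v_c-\top^x_i-\bot^x_i-\third^x_{i-1}$ would be an induced $P_4$ of $G_\phi-F$, so $F_x$ must be the $\top$-phase, i.e.\ $\alpha(x)=\top$ and $c$ is satisfied; the negative case is symmetric. Hence $\alpha$ satisfies $\phi$.

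The main obstacle is in the ``\yesinstance{} $\Rightarrow$ $\phi$ satisfiable'' direction: one must choose the clause certificates $\mathrm{Cl}_c$ so that they genuinely induce subdivided claws \emph{and} are essentially disjoint from each other and from the gadget cycles, since only then does the equality case force $F$ to leave the paw vertices and the kept clause edges untouched — precisely the fact that makes the short induced-$P_4$ arguments of the last step valid. Keeping the cyclic labelling of $G_x$, the three deletion phases, and the correspondence between a phase and a truth value consistent with the orientation of each clause is where the bookkeeping must be done carefully.
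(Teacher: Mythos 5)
Your proposal is correct and follows essentially the same route as the paper's proof: the forward direction applies $F^\alpha$ and verifies the components are paws and crickets, and the backward direction uses the same disjoint certificates (the induced cycles $G_x$ and the subdivided claws around each $v_c$) to force equality in Claims~\ref{claim:var-gadget-tight} and~\ref{claim:clause-gadget-tight}, rules out the neutral deletion phase via the induced $P_4$ through $\paw^x_i$, and derives the same induced $P_4$ $v_c-\top^x_i-\bot^x_i-\third^x_{i-1}$ to show the kept clause edge points to a satisfied literal. No gaps.
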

\begin{proof}
  Suppose $\phi$ is satisfiable and let $\alpha: \mathcal{V}(\phi) \to \{\top,\bot\}$
  be a satisfying assignment.  Define editing set
  $F^\alpha=\bigcup_{x\in \mathcal{V}(\phi)} F^\alpha_x\cup
  \bigcup_{c\in \mathcal{C}(\phi)} F^\alpha_c$; Note that~$F$ consists
  of deletions only.  Then we have that~$|F^\alpha|=k_\phi$ and it can
  be easily seen that~$G\triangle F$ is a disjoint union of components
  of constant size, each being a paw or a cricket (see
  Figure~\ref{fig:three-graphs}).  Both these graphs are trivially
  perfect, so a disjoint union of any number of their copies is also a
  trivially perfect graph.  Thus~$F^\alpha$ is a solution to the
  instance~$(G_\phi,k_\phi)$.
  
  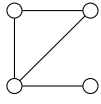
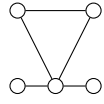
\begin{figure}[htp]
    \centering
    \begin{subfigure}[t]{0.3\textwidth}
      \centering
      \begin{tikzpicture}[every node/.style={circle, draw, scale=.6},
        scale=1]
        \node (1) at (0,0) {};
        \node (2) at (1,0) {};
        \node (3) at (0,1) {};
        \node (4) at (1,1) {};
        
        \draw (1) -- (2);
        \draw (1) -- (3);
        \draw (4) -- (3);
        \draw (1) -- (4);
      \end{tikzpicture}
      \caption{Paw}
    \end{subfigure}
    \hspace{.03\textwidth}
    \begin{subfigure}[t]{0.3\textwidth}
      \centering
      \begin{tikzpicture}[every node/.style={circle, draw, scale=.6},
        scale=1]
        \node (1) at (.5,0) {};
        \node (2) at (0,1) {};
        \node (3) at (1,1) {};
        \node (4) at (1,0) {};
        \node (5) at (0,0) {};
        
        \draw (1) -- (2) -- (3) -- (1);
        \draw (1) -- (4);
        \draw (1) -- (5);
      \end{tikzpicture}
      \caption{Cricket}
    \end{subfigure}
    \caption{Shapes of components of $G$ after editing deletion sets
      $F^\alpha_x$ and $F^\alpha_c$ for $\alpha$ being a satisfying
      assignment.  Both of them are trivially perfect, so a disjoint
      union of any number of their copies is also trivially
      perfect.} \label{fig:three-graphs}
  \end{figure}
  
  For the other direction, let $F\subseteq \binom{V(G_\phi)}{2}$ be an editing
  set such that $G_\phi \triangle F$ is trivially perfect, and $|F|
  \leq k_\phi$.  For every $x\in \mathcal{V}(\phi)$ consider the
  subgraph $G_x$.  For every $c\in \mathcal{C}(\phi)$ consider the
  subgraph $G_c$ induced in $G$ by
  \begin{itemize}
  \item vertex $v_c$;
  \item the three neighbors of $v_c$, say $\Box^x_{i_x}$, $\Box^y_{i_y}$, and
    $\Box^z_{i_z}$, where $x,y,z$ are variables appearing in $c$ and
    each symbol $\Box$ is replaced by $\bot$ or $\top$ depending whether
    the variable's occurrence is positive or negative; and
  \item vertices $\paw^x_{i_x}$, $\paw^y_{i_y}$, and $\paw^z_{i_z}$.
  \end{itemize}
  Observe that each $G_x$ is isomorphic to a cycle on $3p_x$ vertices
  and each $G_c$ is isomorphic to a subdivided claw.  Moreover, all
  these subgraphs pairwise share at most one vertex, which means that
  sets $\binom{V(G_x)}{2}$ for $x\in \mathcal{V}(\phi)$ and
  $\binom{V(G_c)}{2}$ for $c\in \mathcal{C}(\phi)$ are pairwise
  disjoint.  By Claim~\ref{claim:var-gadget-tight} we infer that
  $|F\cap \binom{V(G_x)}{2}|\geq p_x$ for each $x\in
  \mathcal{V}(\phi)$, and by Claim~\ref{claim:clause-gadget-tight} we
  infer that $|F\cap \binom{V(G_c)}{2}|\geq 2$ for each $c\in
  \mathcal{C}(\phi)$.  Thus
  $$|F|\geq \sum_{x\in \mathcal{V}(\phi)} p_x+2|\mathcal{C}(\phi)|=k_\phi.$$
  Hence, in fact $|F|=k_\phi$ and all the used inequalities are in fact
  equalities: $|F\cap \binom{V(G_x)}{2}|=p_x$ for each $x\in
  \mathcal{V}(\phi)$ and $|F\cap \binom{V(G_c)}{2}|=2$ for each $c\in
  \mathcal{C}(\phi)$.  Using Claims~\ref{claim:var-gadget-tight}
  and~\ref{claim:clause-gadget-tight} again, we infer that $F$ has the
  following form: it consists of deletions only, from every cycle
  $G_x$ it deletes every third edge, and for every vertex $v_c$ it
  deletes two out of three edges incident to it.  In particular, no
  edit is incident to any of the vertices $\paw^x_i$ for $x\in
  \mathcal{V}(\phi)$ and $i\in [0,p_x-1]$.
  
  Consider now the cycle $G_x$; We already know that the solution
  deletes either all the edges $\bot^x_i\top^x_{i}$ for $i\in
  [0,p_x-1]$, or all the edges $\top^x_i\third^x_{i}$ for $i\in
  [0,p_x-1]$, or all the edges $\third^x_i\bot^x_{i+1\bmod p_x}$ for
  $i\in [0,p_x-1]$.  Observe that the first case cannot happen, since
  then we would have an induced $P_4$
  $\bot^x_i-\paw^x_i-\top^x_i-\third^x_i$ remaining in the graph ---
  no other edit can destroy it.  Hence, one of the latter two cases
  happen.  Construct an assignment $\alpha: \mathcal{V}(\phi) \to
  \{\top,\bot\}$ by, for each $x\in \mathcal{V}(\phi)$, putting
  $\alpha(x)=\bot$ if all the edges $\top^x_i\third^x_{i}$ are
  included in $F$, and $\alpha(x)=\top$ if all the edges
  $\third^x_i\bot^x_{i+1\bmod p_x}$ are included in $F$.  We now claim
  that $\alpha$ satisfies $\phi$.
  
  
  For the sake of contradiction, suppose that a clause $c=\ell_x\vee
  \ell_y\vee \ell_z$ is not satisfied by $\alpha$.  Let $e$ be the
  edge incident to $v_c$ which has not been removed and suppose
  without loss of generality that this edge connects $v_c$ with $G_x$.
  Suppose further that $\ell_x=x$, i.e., $x$ appears positively in
  $c$, so $e=v_c\top^x_i$ for some $i\in [0,p_x-1]$.  Since $x$ does
  not satisfy $c$, $\alpha(x) = \bot$ and both edges $\third^x_{i-1
    \bmod p_x}\bot^x_i$ and $\bot^x_i\top^x_i$ are not deleted in $F$
  --- the deleted edge is $\top^x_i\third^x_i$.  But then we have the
  following induced $P_4$: $v_c-\top^x_i-\bot^x_i-\third^x_{i-1 \mod
    p_x}$, which contradicts the assumption that $G_\phi \triangle F$
  is trivially perfect.  The case when $\ell_x=\neg x$, i.e., $x$
  appears negatively in $c$, is symmetric.
  
  Hence $\alpha$ is indeed a satisfying assignment for $\phi$ and we are done.
\end{proof}

Lemma~\ref{lem:3sat-tpe} guarantees that the reduction is correct, and
hence Theorem~\ref{thm:np-hardness} follows by a straightforward
application of Proposition~\ref{prop:eth}.  We can also observe that
this reduction works immediately for \tpd{} as well since every
optimal edit set consisted purely of deletions (see
Claims~\ref{claim:var-gadget-tight}
and~\ref{claim:clause-gadget-tight}), however this result is
known~\cite{drange2014exploring}.

\section{Conclusion}
\label{sec:conclusion}

In this paper we gave the first polynomial kernels for \tpe{} and
\tpd, which answers an open problem by Nastos and
Gao~\cite{nastos2013familial}, and Liu, Wang, and
Guo~\cite{liu2014overview}.  We also proved that assuming ETH, \tpe{}
does not have a subexponential parameterized algorithm.  Together with
the earlier results~\cite{drange2014exploring,guo2007problem}, we thus
obtain a complete picture of the existence of polynomial kernels and
subexponential parameterized algorithms for edge modification problems
related to trivially perfect graphs; see
Figure~\ref{fig:tab:tp-complexity} for an overview.  In particular,
the fact that all three problems \tpe, \tpc, and \tpd{} admit
polynomial kernels, stands in an interesting contrast with the results
of Cai and Cai~\cite{cai2013incompressibility}, who showed that this
is not the case for any of \pname{$C_4$-Free Editing},
\pname{$C_4$-Free Completion} and \pname{$C_4$-Free Deletion}.

The main contribution of the paper is the proof that \TPE{} admits a
polynomial kernel with $O(k^7)$ vertices.  We apply the existing
technique of constructing a \emph{vertex modulator}, but with a new
twist: The fact that we are solving an edge modification problem
enables us also to argue about the adjacency structure between the
modulator and the rest of the graph, which is helpful in understanding
the structure of the instance.  We believe that this new insight can
be applied to other edge modification problems as well.

Finally, we showed that \TPE{}, in addition to being \NP-complete, is
not solvable in subexponential parameterized time unless the
Exponential Time Hypothesis fails.  The same result was known for
\tpd, but contrasts the previous result that the completion variant
\emph{does admit} a subexponential parameterized
algorithm~\cite{drange2014exploring}.

\begin{figure}[ht]
  \centering
  \begin{tabular}{l l l}
    Problem & Polynomial kernel & Subexp.\ par.\ algorithm\\
    \hline\\
    \tpc    & Yes~\cite{guo2007problem} & Yes~\cite{drange2014exploring}\\
    \tpd    & Yes & No~\cite{drange2014exploring}\\
    \tpe    & Yes & No\\
  \end{tabular}
  \caption{Graph modification problems related to trivially perfect
    graphs}
  \label{fig:tab:tp-complexity}
\end{figure}

\bigskip

Let us conclude by stating some open questions.  In this paper, we
focused purely on constructing a polynomial kernel for \tpe{} and
related problems, and in multiple places we traded possible savings in
the overall kernel size for simpler arguments in the analysis.  We
expect that a tighter analysis of our approach might yield kernels
with $O(k^6)$ or even $O(k^5)$ vertices, but we think that the really
challenging question is to match the size of the cubic kernel for
\tpc{} of Guo~\cite{guo2007problem}.

Generally, we find the vertex modulator technique very well-suited for
tackling kernelization of edge modification problems, since it is at
the same time versatile, and exposes well the structure of a large
graph that is close in the edit distance to some graph class.  We have
high hopes that this generic approach will find applications in other
edge modification problems as well, both in improving the sizes of
existing kernels and in finding new positive results about the
existence of polynomial kernels.  For concrete questions where the
technique might be applicable, we propose the following:
\begin{itemize}
\item Is it possible to improve the $O(k^3)$ vertex kernels for
  \pname{Cograph Editing} and \pname{Cograph Completion} of Guillemot
  et al.~\cite{guillemot2013onthenon}?
\item Is it possible to improve the $O(k^4)$ vertex kernel for the
  \pname{Split Deletion} problem of Guo~\cite{guo2007problem}?
\item Do the \pname{Claw-Free Edge Deletion} or \pname{Line Graph Edge
    Deletion} problems admit polynomial kernels? Here, the task is to
  remove at most~$k$ edges to obtain a graph that is \emph{claw-free},
  i.e., does not contain $K_{1,3}$ as an induced subgraph,
  respectively is a line graph.
\end{itemize}


\begin{thebibliography}{10}

\bibitem{alon2009fast}
N.~Alon, D.~Lokshtanov, and S.~Saurabh.
\newblock Fast {FAST}.
\newblock In {\em Proceedings of the 36th Colloquium of Automata, Languages and
  Programming (ICALP 2009)}, volume 5555 of {\em Lecture Notes in Computer
  Science}, pages 49--58. Springer, 2009.

\bibitem{bliznets2014interval}
I.~Bliznets, F.~V. Fomin, M.~Pilipczuk, and M.~Pilipczuk.
\newblock A subexponential parameterized algorithm for {I}nterval {C}ompletion.
\newblock {\em CoRR}, abs/1402.3473, 2014.

\bibitem{bliznets2014proper}
I.~Bliznets, F.~V. Fomin, M.~Pilipczuk, and M.~Pilipczuk.
\newblock A subexponential parameterized algorithm for {P}roper {I}nterval
  {C}ompletion.
\newblock In {\em Proceedings of the 22nd Annual European Symposium on
  Algorithms (ESA 2014)}, volume 8737 of {\em Lecture Notes in Computer
  Science}, pages 173--184. Springer, 2014.

\bibitem{burzyn06NPcompleteness}
P.~Burzyn, F.~Bonomo, and G.~Dur{\'a}n.
\newblock {NP}-completeness results for edge modification problems.
\newblock {\em Discrete Applied Mathematics}, 154(13):1824--1844, 2006.

\bibitem{cai1996fixed}
L.~Cai.
\newblock Fixed-parameter tractability of graph modification problems for
  hereditary properties.
\newblock {\em Information Processing Letters}, 58(4):171--176, 1996.

\bibitem{cai2013incompressibility}
L.~Cai and Y.~Cai.
\newblock Incompressibility of {$H$}-free edge modification.
\newblock In {\em Proceedings of the 8th International Symposium on
  Parameterized and Exact Computation (IPEC 2013)}, volume 8246 of {\em Lecture
  Notes in Computer Science}, pages 84--96. Springer, 2013.

\bibitem{demaine2005subexponential}
E.~D. Demaine, F.~V. Fomin, M.~Hajiaghayi, and D.~M. Thilikos.
\newblock Subexponential parameterized algorithms on graphs of bounded genus
  and {$H$}-minor-free graphs.
\newblock {\em J.~ACM}, 52(6):866--893, 2005.

\bibitem{DowneyF99}
R.~G. Downey and M.~R. Fellows.
\newblock {\em Parameterized complexity}.
\newblock Springer-Verlag, New York, 1999.

\bibitem{drange2014exploring}
P.~G. Drange, F.~V. Fomin, M.~Pilipczuk, and Y.~Villanger.
\newblock Exploring subexponential parameterized complexity of completion
  problems.
\newblock In {\em Proceedings of the 31st International Symposium on
  Theoretical Aspects of Computer Science (STACS 2014)}, volume~25 of {\em
  LIPIcs}, pages 288--299. Schloss Dagstuhl - Leibniz-Zentrum f\"ur Informatik,
  2014.

\bibitem{edmonds1965paths}
J.~Edmonds.
\newblock Paths, trees, and flowers.
\newblock {\em Canadian Journal of {M}athematics}, 17(3):449--467, 1965.

\bibitem{FlumGrohebook}
J.~Flum and M.~Grohe.
\newblock {\em Parameterized Complexity Theory}.
\newblock Texts in Theoretical Computer Science. An EATCS Series.
  Springer-Verlag, Berlin, 2006.

\bibitem{fomin2011subexponential}
F.~V. Fomin, S.~Kratsch, M.~Pilipczuk, M.~Pilipczuk, and Y.~Villanger.
\newblock Tight bounds for parameterized complexity of {C}luster {E}diting with
  a small number of clusters.
\newblock {\em Journal of Computer and System Sciences}, 80(7):1430--1447,
  2014.

\bibitem{fomin2012planar}
F.~V. Fomin, D.~Lokshtanov, N.~Misra, and S.~Saurabh.
\newblock Planar {$F$}-deletion: Approximation, kernelization and optimal
  {FPT}~algorithms.
\newblock In {\em Proceedings of the 53rd IEEE Annual Symposium on Foundations
  of Computer Science (FOCS 2012)}, pages 470--479. IEEE, 2012.

\bibitem{FominSV13}
F.~V. Fomin, S.~Saurabh, and Y.~Villanger.
\newblock A polynomial kernel for {P}roper {I}nterval {V}ertex {D}eletion.
\newblock {\em SIAM Journal on Discrete Mathematics}, 27(4):1964--1976, 2013.

\bibitem{fomin2012subexponential}
F.~V. Fomin and Y.~Villanger.
\newblock Subexponential parameterized algorithm for minimum fill-in.
\newblock {\em {SIAM} Journal on Computing}, 42(6):2197--2216, 2013.

\bibitem{modular-decomp}
T.~Gallai.
\newblock Transitiv orientierbare graphen.
\newblock {\em Acta Mathematica Academiae Scientiarum Hungarica},
  18(1-2):25--66, 1967.

\bibitem{ghosh2013faster}
E.~Ghosh, S.~Kolay, M.~Kumar, P.~Misra, F.~Panolan, A.~Rai, and M.~Ramanujan.
\newblock Faster parameterized algorithms for deletion to split graphs.
\newblock {\em Algorithmica}, 2013.
\newblock Online first.

\bibitem{guillemot2013onthenon}
S.~Guillemot, F.~Havet, C.~Paul, and A.~Perez.
\newblock On the \mbox{(non-)}existence of polynomial kernels for {$P_l$}-free
  edge modification problems.
\newblock {\em Algorithmica}, 65(4):900--926, 2013.

\bibitem{guo2007problem}
J.~Guo.
\newblock Problem kernels for {NP}-complete edge deletion problems: Split and
  related graphs.
\newblock In {\em Proceedings of the 18th International Symposium on Algorithms
  and Computation (ISAAC 2007)}, volume 4835 of {\em Lecture Notes in Computer
  Science}, pages 915--926. Springer, 2007.

\bibitem{impagliazzo2001which}
R.~Impagliazzo, R.~Paturi, and F.~Zane.
\newblock Which problems have strongly exponential complexity?
\newblock {\em Journal of Computer and System Sciences}, 63(4):512--530, 2001.

\bibitem{jing1996quasi}
Y.~Jing-Ho, C.~Jer-Jeong, and G.~Chang.
\newblock Quasi-threshold graphs.
\newblock {\em Discrete Applied Mathematics}, 69(3):247--255, 1996.

\bibitem{komusiewicz2012clusterediting}
C.~Komusiewicz and J.~Uhlmann.
\newblock Cluster editing with locally bounded modifications.
\newblock {\em Discrete Applied Mathematics}, 160(15):2259--2270, 2012.

\bibitem{kratsch2009two}
S.~Kratsch and M.~Wahlstr{\"o}m.
\newblock Two edge modification problems without polynomial kernels.
\newblock In {\em Proceedings of the 4th International Workshop on
  Parameterized and Exact Computation (IWPEC 2009)}, volume 5917 of {\em
  Lecture Notes in Computer Science}, pages 264--275. Springer, 2009.

\bibitem{liu2014overview}
Y.~Liu, J.~Wang, and J.~Guo.
\newblock An overview of kernelization algorithms for graph modification
  problems.
\newblock {\em Tsinghua Science and Technology}, 19(4):346--357, 2014.

\bibitem{mancini2008graph}
F.~Mancini.
\newblock {\em Graph modification problems related to graph classes}.
\newblock PhD thesis, University of Bergen, Norway, 2008.

\bibitem{compute-modular-decomp}
R.~M. McConnell and J.~Spinrad.
\newblock Modular decomposition and transitive orientation.
\newblock {\em Discrete Mathematics}, 201(1-3):189--241, 1999.

\bibitem{nastos2013familial}
J.~Nastos and Y.~Gao.
\newblock Familial groups in social networks.
\newblock {\em Social Networks}, 35(3):439--450, 2013.

\bibitem{Yannakakis81Edge}
M.~Yannakakis.
\newblock Edge-deletion problems.
\newblock {\em SIAM Journal on Computing}, 10(2):297--309, 1981.

\end{thebibliography}
\end{document}